\newif\iflong\longtrue
\newcommand\appref[1]{Appendix~\ref{#1}}
\newcommand\published{%
  \thanks{Extended version of the paper with the same title, to
    appear in Proc. \textsl{\@conferencename}, \@conferenceinfo, and to be 
    available at\\ \url{http://dx.doi.org/10.1145/\@doi}}}
\definecolor{citecolor}{rgb}{0.0,0.4,0.0}
\definecolor{urlcolor}{rgb}{0.0,0.0,0.4}
\definecolor{linkcolor}{rgb}{0.0,0.0,0.4}
\tikzstyle{vecArrow} = [thick, decoration={markings,mark=at position
\tikzstyle{innerWhite} = [semithick, white,line width=1.4pt, shorten >= 4.5pt]
\newtheorem*{rep@theorem}{\rep@title}
\newcommand{\newreptheorem}[2]{%
\newenvironment{rep#1}[1]{%
 \def\rep@title{#2 \ref{##1}}%
 \begin{rep@theorem}}%
 {\end{rep@theorem}}}
\newtheorem{theorem}{Theorem}
\newtheorem{lemma}{Lemma}
\newtheorem{proposition}{Proposition}
\newtheorem{corollary}{Corollary}
\newtheorem{definition}{Definition}
\newtheorem{example}{Example}
\newcommand{\powe}{\mathcal{P}}
\newcommand{\pow}{\mathcal{P}_{\omega}}
\newcommand{\lift}[1]{\overline{#1}}
\newcommand{\laxlift}[1]{\Rel(#1)^{\sqsubseteq}}
\newcommand{\wlift}[1]{\Rel(#1)^{[\supseteq \subseteq]}}
\newcommand{\A}{\mathbb A}
\newcommand{\B}{\mathcal B}
\newcommand{\C}{\mathcal C}
\renewcommand{\P}{\ensuremath{\mathcal E}}
\newcommand{\R}{\mathbb R}
\newcommand{\Cat}{\mathsf{Cat}} 
\newcommand{\Set}{\mathsf{Set}} 
 \newcommand{\Rel}{\mathsf{Rel}}
\newcommand{\Pred}{\mathsf{Pred}} 
\newcommand{\Alg}{\mathsf{Alg}} 
\newcommand{\Pre}{\mathsf{Pre}}
\newcommand{\Fib}{\mathsf{Fib}}
\newcommand{\T}{\mathbb T}
\newcommand{\Id}{\mathrm{Id}}
\newcommand{\pred}[2]{#1_{#2}} %coinductive predica for \xi^* \after \lift{F}
\newcommand{\cpred}[2]{\pred{\Rel(#1)}{#2}}
\newcommand{\Nom}{\mathsf{Nom}}
\newcommand{\Pf}{\powe_{\omega}}
\newcommand{\At}{\mathbb{A}}
\newcommand{\Sub}{\mathsf{Sub}}
\newcommand{\Cgr}{\mathit{Cgr}}
\newcommand{\Tra}{\mathit{Trn}}
\newcommand{\Sym}{\mathit{Sym}}
\newcommand{\Ref}{\mathit{Rfl}}
\newcommand{\Beh}{\mathit{Bhv}}
\newcommand{\Con}{\mathit{Ctx}}
\newcommand{\Slf}{\mathit{Slf}}
\newcommand{\Eqv}{\mathit{Eqv}}
\newcommand{\Ra}{\Rightarrow}
 \renewcommand{\bar}{\overline}
 \newcommand{\op}{\mathit{op}}
 \newcommand\restr[2]{{% we make the whole thing an ordinary symbol
  \left.\kern-\nulldelimiterspace % automatically resize the bar with \right
  #1 % the function
  \vphantom{\big|} % pretend it's a little taller at normal size
  \right|_{#2} % this is the delimiter
  }}
\newcommand{\pullbackcorner}[1][dr]{\save*!/#1-1.5pc/#1:(-1.2,1.2)@^{|-}\restore}
\newcommand\dfa[1]{{%
  \newcommand\state[1]{##1}%
  \newcommand\fstate\overline%
  #1}}
\newcommand\nfa\dfa
\def\tr#1{\stackrel{#1}{\to}}
\newcommand{\superimpose}[2]{%
  {\ooalign{$#1\@firstoftwo#2$\cr\hfil$#1\@secondoftwo#2$\hfil\cr}}}
\newcommand{\diamondtau}{\langle\tau\rangle}%{\mathpalette\superimpose{{\Diamond}{\tau}}}
\newcommand{\ldag}{\lambda^\dagger}
\newcommand{\ldagbar}{\bar{\lambda}^\dagger}
\newcommand{\id}{\mathrm{id}}
\begin{document}

\setlength{\pdfpageheight}{\paperheight}
\setlength{\pdfpagewidth}{\paperwidth}

\conferenceinfo{CSL-LICS 2014}{July 14--18, 2014, Vienna, Austria}
\copyrightyear{2014}
\copyrightdata{978-1-4503-2886-9}
\doi{2603088.2603149}

\title{Coinduction Up-To in a Fibrational Setting\published}

\authorinfo{Filippo Bonchi \and Daniela Petri{\c s}an
\and Damien Pous\thanks{The first three authors acknowledge support from the ANR projects 2010-BLAN-0305
PiCoq and 12IS02001 PACE.}}
{LIP, CNRS, INRIA, ENS Lyon, \\Universit\'e de Lyon, UMR 5668}
{\{filippo.bonchi,daniela.petrisan,damien.pous\}\\@ens-lyon.fr}
\authorinfo{Jurriaan Rot\thanks{The research of this author has been funded by the Netherlands
Organisation for Scientific Research (NWO), CoRE project,
dossier number: 612.063.920.}}
{LIACS - Leiden University, CWI}
{j.c.rot@liacs.leidenuniv.nl}

\maketitle

\begin{abstract}
  Bisimulation up-to enhances the coinductive proof method for
  bisimilarity, providing efficient proof techniques for checking
  properties of different kinds of systems.
  We prove the soundness of such techniques in a fibrational setting,
  building on the seminal work of Hermida and Jacobs.  This allows us
  to systematically obtain up-to techniques not only for bisimilarity
  but for a large class of coinductive predicates modelled as
  coalgebras.
  By tuning the parameters of our framework, we obtain novel
  techniques for unary predicates and nominal automata, a variant of
  the GSOS rule format for similarity, and a new categorical treatment
  of weak bisimilarity.
\end{abstract}

\category{F.3}{Logics and meanings of programs}{}
\category{F.4}{Mathematical logic and formal languages}{}

\terms Theory.

\keywords fibrations, coinductive predicates, bisimulation up-to,
GSOS, up-to techniques, similarity, bialgebras, nominal automata.

\section{Introduction}

\subsection{Coinduction up-to}
\label{ssec:intro:coinduction-upto}

The rationale behind coinductive up-to techniques is the following.
Suppose you have a characterisation of an object of interest as a
greatest fixed-point. For instance, behavioural equivalence in CCS is
the greatest fixed-point of a monotone function $B$ on relations,
describing the standard bisimulation game. This means that to prove
two processes equivalent, it suffices to exhibit a relation $R$ that
relates them, and which is a \emph{$B$-invariant}, i.e., $R\subseteq
B(R)$. Such a task can however be painful or inefficient, and one
could prefer to exhibit a relation which is only a $B$-invariant
\emph{up to some function} $A$, i.e., $R\subseteq B(A(R))$.

Not every function $A$ can safely be used: $A$ should be \emph{sound}
for $B$, meaning that any $B$-invariant up to $A$ should be contained
in a $B$-invariant. Instances of sound functions for behavioural
equivalence in process calculi usually include transitive closure,
context closure and congruence closure.
The use of such techniques dates back to Milner's work on
CCS~\cite{Milner89}; a famous example of an unsound technique is
that of weak bisimulation up to weak bisimilarity. Since then,
coinduction up-to proved useful, if not essential, in numerous proofs
about concurrent systems (see~\cite{PS12} for a list of
references); it has been used to obtain decidability
results~\cite{Caucal90}, and more recently to improve standard
automata algorithms~\cite{bp:popl13:hkc}.

The theory underlying these techniques was first developed by
Sangiorgi~\cite{San98MFCS}. It was then reworked and generalised by
one of the authors to the abstract setting of complete
lattices~\cite{pous:aplas07:clut,PS12}. The key observation there
is that the notion of soundness is not compositional: the
composition of two sound functions is not necessarily sound itself. 
The main solution to this problem consists in restricting to
\emph{compatible} functions, a subset of the sound functions which
enjoys nice compositionality properties and contains most
of the useful techniques.

An illustrative example of the benefits of a modular theory is the
following: given a signature $\Sigma$, consider the \emph{congruence
  closure} function, that is, the function $\Cgr$ mapping a relation
$R$ to the smallest congruence containing $R$.  This function has
proved to be useful as an up-to technique for language equivalence of
non-deterministic automata~\cite{bp:popl13:hkc}.  It can be decomposed
into small pieces as follows: $\Cgr=\Tra\circ\Sym\circ\Con\circ\Ref$,
where $\Tra$ is the transitive closure, $\Sym$ is the symmetric
closure, $\Ref$ is the reflexive closure, and $\Con$ is the context
closure associated to $\Sigma$.  Since compatibility is preserved by
composition (among other operations), the compatibility of $\Cgr$
follows from that of its smaller components.  In turn, transitive
closure can be decomposed in terms of relational composition, and
context closure can be decomposed in terms of the smaller functions
that close a relation with respect to $\Sigma$ one symbol at a
time. Compatibility of such functions can thus be obtained in a
modular way.

A key observation in the present work is that when we move to a
coalgebraic presentation of the theory, compatible functions
generalise to functors equipped with a distributive law
(Section~\ref{sec:compatible-functors}). 

\subsection{Fibrations and coinductive predicates}
\label{ssec:intro:coinductive-predicates}

Coalgebras are a tool of choice for describing state based systems:
given a functor $F$ determining its type (e.g., labelled transition
systems, automata, streams), a system is just an $F$-coalgebra
$(X,\xi)$.  When $F$ has a final coalgebra $(\Omega,\omega)$, this
gives a canonical notion of behavioural
equivalence~\cite{Jacobs:coalg}:
\begin{align*}
  \xymatrix{
    X\ar[d]_\xi\ar[r]^{\llbracket\cdot\rrbracket}&\Omega\ar[d]^\omega\\
    FX\ar[r]^{F\llbracket\cdot\rrbracket}&F\Omega}
\end{align*}
two states $x,y\in X$ are equivalent % ($x\sim y$)
if they are mapped to the same element in the final coalgebra.

When the functor $F$ preserves weak pullbacks---which we shall assume
throughout this introductory section for the sake of
simplicity---behavioural equivalence can be characterised
coinductively using Hermida-Jacobs
bisimulations~\cite{Hermida97structuralinduction,Staton11}: given an $F$-coalgebra $(X,\xi)$, behavioural
equivalence is the largest $B$-invariant for a monotone function $B$ on $\Rel_X$, the poset of binary relations over $X$.
This function $B$ can be decomposed as
\begin{equation*}
%  \tag{$\dagger$}
  B~\triangleq~\xi^*\circ \Rel(F)_X\colon \Rel_X\to \Rel_X
\end{equation*}
Let us explain the notations used here. We consider the category $\Rel$ whose objects are relations $R{\subseteq}X^2$ 
and morphisms from  $R{\subseteq}X^2$ to $S{\subseteq}Y^2$ are maps from $X$ to $Y$
sending pairs in $R$ to pairs in $S$. For each set $X$ the poset $\Rel_X$ of binary relations over $X$ is a subcategory of $\Rel$, also called the fibre over $X$.  The functor $F$ has a canonical lifting to $\Rel$, denoted by $\Rel(F)$. This lifting restricts to a functor  $\Rel(F)_X \colon \Rel_X \to \Rel_{FX}$, which in this case is just a  monotone function between posets. The monotone function $\xi^* \colon \Rel_{FX} \to \Rel_X$ is the \emph{inverse image} of the coalgebra $\xi$ mapping a relation $R \subseteq (FX)^2$
to $(\xi \times \xi)^{-1}(R)$.

To express other predicates than behavioural equivalence,  one can take arbitrary \emph{liftings} of $F$ to $\Rel$, different from the canonical one. Any lifting $\overline{F}$ yields a functor $B$ defined as
\begin{equation*}
  \tag{$\dagger$}
  B~\triangleq~\xi^*\circ \overline F_X\colon \Rel_X\to \Rel_X
\end{equation*}
The final coalgebra, or greatest fixed-point for such a $B$ is called a \emph{coinductive
predicate}~\cite{Hermida97structuralinduction, HasuoCKJ:coindFib}.
%Indeed, $B$-coalgebras in $\Rel_X$ are invariants, and the final $B$-coalgebra is the greatest invariant.  
By taking appropriate $\overline F$, one can obtain, for instance, various behavioural preorders: similarity on
labelled transition systems (LTSs), language inclusion on automata, or
lexicographic ordering of streams.

This situation can be further generalised using \emph{fibrations}. We refer the reader to the first chapter of~\cite{Jacobs:fib} for a gentle introduction, or to Section~\ref{sec:prelim} for succinct definitions. The functor $p \colon \Rel \to \Set$ mapping a relation $R\subseteq X^2$ to its support set $X$ is a fibration, where the inverse image $\xi^*$ is just the \emph{reindexing functor} of $\xi$.
By choosing a different fibration than $\Rel$, one can obtain coinductive characterisations of
objects that are not necessarily binary relations, e.g., unary
predicates like divergence, ternary relations, or metrics.

Our categorical generalisation of compatible functions provides a
natural extension of this fibrational framework with a systematic
treatment of up-to techniques: we provide functors (i.e., monotone
functions in the special case of the $\Rel$ fibration) that are
compatible with those functors $B$ corresponding to coinductive
predicates.

For instance, when the chosen lifting $\overline F$ is a
\emph{fibration map}, the functor corresponding to a technique called
``up to behavioural equivalence'' is compatible
(Theorem~\ref{theo:beh}). The canonical lifting of a functor is always
such a fibration map, so that when $F$ is the functor for LTSs, we
recover the soundness of the very first up-to technique from the literature, namely
``bisimulation up to bisimilarity''~\cite{Milner89}.
One can also check that another lifting of this same functor but in
another fibration yields the divergence predicate, and is a
fibration map.  We thus obtain the validity of the ``divergence up
to bisimilarity'' technique.

\subsection{Bialgebras and up to context}
\label{ssec:intro:bialgebras}

Another important class of techniques comes into play when considering
systems with an algebraic structure on the state space (e.g., the syntax of a process
calculus). A minimal requirement for such systems usually is that
behavioural equivalence should be a congruence. In the special case of
bisimilarity on LTSs, several rule formats have been proposed to
ensure such a congruence property~\cite{AFV}. At the categorical
level, the main concept to study such systems is that of
\emph{bialgebras}. Assume two endofunctors $T,F$ related by a
distributive law $\lambda\colon TF\Rightarrow FT$. A
$\lambda$-bialgebra consists in a triple $(X,\alpha,\xi)$ where
$(X,\alpha)$ is a $T$-algebra, $(X,\xi)$ is an $F$-coalgebra, and a
diagram involving $\lambda$ commutes. It is well known that in such a
bialgebra, behavioural equivalence is a congruence with respect to
$T$~\cite{DBLP:conf/lics/TuriP97}. This is actually a generalisation
of the fact that bisimilarity is a congruence for all GSOS
specifications~\cite{BloomCT88:GSOS}: GSOS specifications are in
one-to-one correspondence with distributive laws between the
appropriate functors~\cite{DBLP:conf/lics/TuriP97, Bartels03}.

This congruence result can be strengthened into a compatibility
result~\cite{RotBBPRS}: in any $\lambda$-bialgebra, the
contextual closure function that corresponds to $T$ is compatible for
behavioural equivalence. By moving to fibrations, we generalise this
result so that we can obtain up to context techniques for arbitrary
coinductive predicates: unary predicates like divergence, by using
another fibration than $\Rel$; but also other relations than
behavioural equivalence, like the behavioural preorders mentioned
above, or weak bisimilarity.

The technical device we need to establish this result is that of
\emph{bifibrations}, fibrations $p$ whose opposite functor $p^\op$ is
also a fibration. We keep the running example of the $\Rel$ fibration
for the sake of clarity; the results are presented in full generality
in the remaining parts of the paper.
% $p:\P\to \B$ such that $p^\op:\P^\op\to\B^\op$ is also a fibration.
In such a setting, any morphism $f\colon X\to Y$ in $\Set$ has a
\emph{direct image} $\coprod_f \colon \Rel_X\to\Rel_Y$. 
% which actually is a left adjoint to the cartesian lifting
% $f^\star:\Rel_Y\to\Rel_X$. 
Now given an algebra $\alpha \colon TX\to X$ for a functor $T$ on $\Set$,
any lifting $\lift{T}$ of $T$ gives rise to a functor on the fibre
above $X$, defined dually to~$(\dagger)$:
\begin{equation*}
  \tag{$\ddagger$}
  C~\triangleq~\textstyle{\coprod}_\alpha\circ\overline T_X\colon \Rel_X\to \Rel_X
\end{equation*}
When we take for $\overline T$ the canonical lifting of $T$ in $\Rel$,
then $C$ is the contextual closure function corresponding to the
functor $T$. We shall see that we sometimes need to consider
variations of the canonical lifting to obtain a compatible up-to
technique (e.g., up to ``monotone'' contexts for checking language
inclusion of weighted automata---Section~\ref{ssec:weighted}).

Now, starting from a $\lambda$-bialgebra $(X,\alpha,\xi)$,
and given two liftings $\overline T$ and $\overline F$ of $T$ and $F$,
respectively, the question is whether the above functor $C$ is
compatible with the functor $B$ defined earlier in~$(\dagger)$. The
simple condition we give in this paper is the following: the
distributive law $\lambda\colon TF\Rightarrow FT$ should lift to a
distributive law $\overline\lambda\colon \overline T\,\overline F\Rightarrow
\overline F\,\overline T$ (Theorem~\ref{thm:big-2-cell}).

This condition is always satisfied in the bifibration $\Rel$, when
$\overline T$ and $\overline F$ are the canonical liftings of $T$ and
$F$. Thus we obtain as a corollary the compatibility of bisimulation
of up to context in $\lambda$-bialgebras, which is the main result
from~\cite{RotBBPRS}---soundness was previously observed by Lenisa et
al.~\cite{Lenisa99,LenisaPW00} and then Bartels~\cite{Bartels03}.

The present work allows us to go further in several directions, as
illustrated below.

\subsection{Contributions and Applications}

The main contribution of this paper is the abstract framework
developed in Section~\ref{sec:upto:fibrations}; it allows us to
derive the soundness of a wide range of both novel and well-established
up-to techniques for arbitrary coinductive
predicates. Sections~\ref{sec:examples} and~\ref{sec:compositional}
are devoted to several such applications, which we describe now.

%To illustrate its expressive power we provide various applications. 

When working in the predicate fibration on $\Set$, one can
characterise some formulas from modal logic as coinductive predicates
(see~\cite{DBLP:journals/cj/CirsteaKPSV11} for an account of
coalgebraic modal logic).  Our framework allows us to introduce up-to
techniques in this setting: we consider the formula $\nu x.\diamondtau
x$ in Section~\ref{ssec:divergence}, and we provide a technique called
``divergence up to left contexts and behavioural equivalence''. We use
it to prove divergence of a simple process using a finite invariant,
while the standard method requires an infinite one.

One can also change the base category: by considering the fibration of
equivariant relations over nominal sets, we show how to obtain up-to
techniques for language equivalence of non-deterministic nominal
automata~\cite{BojanczykKL11}. In Section~\ref{ssec:nominal-automata},
these techniques allow us to prove the equivalence of two nominal
automata using an orbit-finite relation, where the standard method
would require an infinite one (recall that the determinisation of a
nominal automaton is not necessarily orbit-finite).

Another benefit of the presented theory is modularity w.r.t.\ the
liftings chosen to define coinductive predicates: two liftings can be
composed, and we give sufficient conditions for deriving compatible
functors for the composite lifting out of compatible functors for its
sub-components (Section~\ref{sec:compositional}).  We give two
examples of such a situation: similarity, and weak bisimilarity on
LTSs.

By using Hughes and Jacobs' definition of similarity~\cite{HughesJ04},
we obtain that for ``up to context'' to be compatible it suffices to
start from a \emph{monotone} distributive law
(Section~\ref{ssec:compositional:simulation}). In the special case of
LTSs, this monotonicity condition amounts to the \emph{positive GSOS}
rule format~\cite{MS10}: GSOS~\cite{BloomCT88:GSOS} without negative
premises.

In Section~\ref{ssec:weak} we propose a novel characterisation of weak
bisimilarity on LTSs, that fits into our framework. This allows us to
give a generic condition for ``up to context'' to be compatible (and
hence weak bisimilarity to be a congruence). In particular, this
condition rules out the sum operation from CCS, which is well known
not to preserve weak bisimilarity.

\section{Preliminaries}
\label{sec:prelim}
We refer the reader to~\cite{Jacobs:fib} for background on fibrations
and recall here basic definitions. 

\begin{definition}
A functor $p\colon\P\to\B$ is called a \emph{fibration} when for every
morphism $f\colon X\to Y$ in $\B$ and every $R$ in $\P$ with $p(R)=Y$
there exists a map $\widetilde{f_R}\colon f^*(R)\to R$ such that
$p(\widetilde{f_R})=f$ satisfying the universal property: For all maps $g:Z\to X$ in $\B$ and $u\colon Q\to R$ in $\P$ sitting above $fg$ (i.e., $p(u)=fg$) there is a unique map $v\colon Q\to f^*(R)$ such that $u=\widetilde{f_R}v$ and $p(v)=g$.
\[
\xymatrix @R=1.6em
{
          Q\ar@{..>}[dr]_-{\exists ! v}\ar[rrd]^-{\forall u} & &  \\
          & f^*(R)\ar[r]_-{\widetilde{f}_{R}} & R \\
          Z\ar[rd]_{g}\ar[rrd]^{fg} & &\\
          &X\ar[r]_-{f} & Y\\
}
\]
\end{definition}

For $X$ in $\B$ we denote by $\P_X$ the \emph{fibre} above $X$, i.e., the
subcategory of $\P$ with objects mapped by $p$ to $X$ and arrows sitting above the identity on $X$.

A map $\widetilde{f}$ as above is called a \emph{Cartesian lifting} of $f$ and is unique up to isomorphism.  If we make a choice of Cartesian liftings, the association $R\mapsto f^*(R)$ gives rise to the 
so-called \emph{reindexing functor} $f^*\colon \P_Y\to\P_X$.

The fibrations considered in this paper are \emph{bicartesian} (both $\P$ and $\B$ have a bicartesian structure strictly preserved by $p$) and \emph{split}, i.e., the reindexing functors behave well with respect to composition and identities: $(1_X)^*=1_{\P_X}$ and $(f\circ g)^*=g^*\circ f^*$.

%\begin{definition}
A functor $p\colon \P\to\B$ is called a \emph{bifibration} if both
$p\colon \P\to\B$ and $p^\op\colon \P^\op\to\B^\op$ are fibrations.
%\end{definition}
A fibration $p\colon \P\to\B$ is a bifibration if and only if each
reindexing functor $f^*\colon \P_Y\to\P_X$ has a left adjoint
$\coprod_f\dashv f^*$,
see~\cite[Lemma~9.1.2]{Jacobs:fib}.

\begin{example}
\label{ex:pred-fib}
Let $\Pred$ be the category of predicates: objects are pairs of sets $(P,X)$ with $P\subseteq X$ and morphisms $f\colon(P,X)\to(Q,Y)$ are arrows $f\colon X\to Y$ that can be restricted to $\restr{f}{P}\colon P\to Q$. 

Similarly, we can consider the category $\Rel$ whose objects are pairs of sets $(R,X)$ with $R\subseteq X^2$ and morphisms $f\colon (R,X)\to(S,Y)$ are arrows $f\colon X\to Y$ such that $f\times f$ can be restricted to $\restr{f\times f}{R}\colon R\to S$. 

The functors mapping predicates, respectively, relations to their
underlying sets are bifibrations. The fibres $\Pred_X$ and $\Rel_X$
sitting above $X$ are the posets of subsets of $X$, respectively
relations on $X$, ordered by inclusion. The reindexing functors are
given by inverse image and their left adjoints by direct image.
\end{example}

%\begin{definition}
Given fibrations $p\colon \P\to\B$ and $p'\colon \P'\to\B$ and
$F\colon \B\to\B$, we call $\lift{F}\colon \P\to\P'$ a \emph{lifting}
of $F$ when $p'\lift{F}=Fp$. Notice that a lifting $\lift{F}$
restricts to a functor between the fibres $\lift{F}_X\colon
\P_X\to\P_{FX}$. When the subscript $X$ is clear from the context we
will omit it.

A \emph{fibration map} between $p\colon \P\to\B$ and $p'\colon \P'\to\B$ is a pair $(\lift{F},F)$ such that  $\lift{F}$ is a lifting of $F$ that preserves the Cartesian liftings:
$(Ff)^*\lift{F} =\lift{F}f^*$ for any $\B$-morphism $f$. We denote by $\Fib(\B)$ the category of fibrations with base $\B$. 
%\end{definition}

\begin{example}
  A $\Set$-endofunctor $T$ has a \emph{canonical} relation lifting $\Rel(T)\colon \Rel\to\Rel$. Represent  $R\in\Rel_X$ as a jointly mono span  $X\leftarrow R\to X$ and apply $T$. Then  $\Rel(T)(R)$ is obtained by factorising the induced map $TR\to TX\times TX$. When $T$ preserves weak pullbacks, $(\Rel(T),T)$ is a fibration map (see e.g.~\cite{HughesJ04}).
\end{example}

\section{Compatible Functors}\label{sec:compatible-functors}
Given two monotone functions $A,B\colon \C \to \C$ on a complete
lattice $\C$, $A$ is said to be $B$-compatible if $AB\subseteq BA$.
In \cite[Theorem 6.3.9]{PS12}, it is shown that any $B$-compatible
function $A$ is sound, that is, it can be used as an up-to technique:
every $B$-invariant up to $A$ is included in a $B$-invariant.

This result is an instance of a more general fact which holds in any
category $\C$ with countable coproducts and for any pair of
endofunctors $A,B$ equipped with a distributive law $\gamma \colon AB
\Ra BA$. Indeed, following the proof of
\cite[Theorem~3.8]{Bartels03}, for any
$BA$-coalgebra $\xi$ (that is a $B$-invariant up to $A$) one can find
a $B$-coalgebra $\zeta$ (that is a $B$-invariant) making the next
diagram commutative.
$$\xymatrix {
  X \ar[d]_{\xi}  \ar[r]^{\kappa_0} & A^\omega X  \ar[d]^{\zeta} \\
  BAX \ar[r]_{B\kappa_1} & BA^\omega X }$$ (Here $A^\omega$ denotes
the coproduct $\coprod_{i\le\omega}A^i$ of all finite iterations of
$A$ and $\kappa_0$, $\kappa_1$ are the injections of $X$
and $AX$ respectively, into $A^\omega X$. Alternatively, we can replace the countable coproduct $A^\omega$ by the free monad on $A$, assuming the latter exists. 
In this case, the result is an instance of the generalized powerset construction~\cite{SilvaBBR10}.)
 
Similarly, that compatible functions
preserve bisimilarity~\cite[Lemma 6.4.3]{PS12} is an instance of the well-known fact~\cite{DBLP:conf/lics/TuriP97} that a final $B$-coalgebra $\nu B$ lifts to a final $\gamma$-bialgebra for $\gamma\colon AB\Ra BA$. When $\C$ is a lattice, this entails that $A(\nu
B) \subseteq \nu B$.
% , thus, the coinductive predicate $\nu B$ is closed w.r.t.\ $A$.
%
For instance, if $B$ is a predicate for bisimilarity and $A$ is the
congruence closure function, we obtain that bisimilarity is a
congruence whenever the congruence closure function is compatible.

As discussed in the Introduction, the main interest in compatible
functions comes from their nice compositionality properties. This leads
us to define compatibility of arbitrary functors of type $\C\to\C'$
rather than just endofunctors.

\begin{definition}\label{def:compatible-functor}
  Consider two endofunctors $B\colon \C\to\C$ and $B':\C'\to\C'$. We say
  that a functor $A\colon \C\to\C'$ is \emph{$(B,B')$-compatible} when
  there exists a natural transformation $\gamma\colon AB\Ra B'A$.
\end{definition}

Notice that the pair $(A,\gamma)$ is a morphism between endofunctors
$B$ and $B'$ in the sense of~\cite{LenisaPW00}. Since the examples
dealt with in this paper involve only poset fibrations, we will omit
the natural transformation $\gamma$ from the notation.  Moreover,
given an endofunctor $B:\C\to\C$, we will simply write that
$A:\C^n\to\C^m$ is $B$-compatible, when $A$ is $(B^n,B^m)$-compatible.

This definition makes it possible to use the internal notions of
product and pairing to emphasise the compositionality aspect.  For
instance, coproduct becomes a compatible functor by itself, rather
than a way to compose compatible functors.

\begin{proposition}\label{prop:modularity}
  Compatible functors are closed under the following constructions:
  \begin{enumerate}[(i)]
  \item\label{it:compo} composition: if $A$ is $(B,C)$-compatible and
    $A'$ is $(C,D)$-compatible, then $A'\circ A$ is
    $(B,D)$-compatible;
  \item\label{it:pairing} pairing: if $(A_i)_{i\in\iota}$ are
    $(B,C)$-compatible, then $\langle A_i\rangle_{i\in\iota}$ is
    $(B,C^\iota)$-compatible;
  \item\label{it:product} product: if $A$ is $(B,C)$-compatible and
    $A'$ is $(B',C')$-compatible, then $A\times A'$ is
    $(B{\times}B',C{\times}C')$-compatible; 
  \end{enumerate}
  Moreover, for an endofunctor $B \colon \C\to\C $, 
  \begin{enumerate}[(i)]\setcounter{enumi}{3}
  \item\label{it:id} the identity functor $\mathit{Id}\colon \C\to\C$
    is $B$-compatible;
  \item\label{it:constant} the constant functor to the carrier of any
    $B$-coalgebra is $B$-compatible, in particular the final one if it
    exists;
  \item\label{it:coproduct} the coproduct functor $\coprod\colon
    \C^\iota\to\C$ is $(B^\iota,B)$-compatible. %, assuming that $\C$ has coproducts of an ordinal $\iota$.
    \end{enumerate}
\end{proposition}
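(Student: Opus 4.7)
The plan is to unfold Definition~\ref{def:compatible-functor} in each case: compatibility of a functor $A$ with $(B,B')$ just means producing a natural transformation $\gamma \colon AB \Ra B'A$, so each clause reduces to exhibiting such a transformation built from the data. I would handle the clauses in roughly the order (iv), (v), (i), (ii), (iii), (vi), since the later ones reuse ideas from the earlier ones.

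For (iv), the identity $\Id \colon \C \to \C$ satisfies $\Id \circ B = B = B \circ \Id$, so the identity natural transformation does the job. For (v), writing $K_X \colon \C\to\C$ for the constant functor at the carrier of a $B$-coalgebra $\xi \colon X \to BX$, observe that $K_X \circ B = K_X$ and $B \circ K_X = K_{BX}$; the required transformation $K_X B \Ra B K_X$ is then $\xi$ at every component (naturality being trivial for constant functors). When $B$ admits a final coalgebra $\nu B$, the same recipe applied to $\xi \colon \nu B \xrightarrow{\cong} B(\nu B)$ yields the compatibility of $K_{\nu B}$. For (i), given $\gamma \colon AB \Ra CA$ and $\gamma' \colon A'C \Ra DA'$, take the pasting
\[
A'AB \xRightarrow{A'\gamma} A'CA \xRightarrow{\gamma' A} DA'A,
\]
which is natural by functoriality of horizontal composition.

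For (ii), if $\gamma_i \colon A_i B \Ra C A_i$ are given, the tupling $\langle \gamma_i\rangle_{i\in\iota} \colon \langle A_i\rangle_{i\in\iota} \circ B \Ra C^\iota \circ \langle A_i\rangle_{i\in\iota}$ is defined componentwise by the universal property of the product $\C'^\iota$, and is natural because each $\gamma_i$ is. Clause (iii) is then immediate, either directly as a component pair $(\gamma, \gamma')$ in $\C'' \times \C'''$, or by recognising $A \times A'$ as a special instance of the pairing construction after pre-composing with the product projections. For (vi), the natural transformation $\coprod \circ B^\iota \Ra B \circ \coprod$ is the canonical one: its $(X_i)_{i\in\iota}$-component is the map $\coprod_i BX_i \to B(\coprod_i X_i)$ obtained by the universal property of the coproduct from the morphisms $B\kappa_i \colon BX_i \to B(\coprod_i X_i)$, where $\kappa_i$ are the coproduct injections; naturality follows from the naturality of the $\kappa_i$ and functoriality of $B$.

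I do not anticipate any serious obstacle: each clause is a diagrammatic exercise, and the only substantive point is realising in (v) that $K_X \circ B$ collapses to $K_X$ and in (vi) that one must invoke the universal property of the coproduct to get a single comparison map $\coprod_i BX_i \to B(\coprod_i X_i)$. The mildest subtlety is to be explicit that the paper's convention of suppressing the witnessing $\gamma$ is justified here, since each construction produces a canonical $\gamma$ from the given ones; in the poset-fibration instances that follow, naturality is vacuous and the proofs reduce to inclusion-checking.
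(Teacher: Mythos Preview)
Your proposal is correct and matches the paper's own proof essentially line for line: the paper gives exactly the pasting $A'\gamma$ then $\gamma' A$ for (i), the tupling $\langle\gamma_i\rangle$ for (ii), and $\gamma\times\gamma'$ for (iii), while dismissing (iv), (v), (vi) as trivial and noting only that (vi) uses the universal property of the coproduct. Your treatment of (iv)--(vi) simply spells out what the paper leaves implicit, and in particular your observation for (v) that $K_X B = K_X$, $B K_X = K_{BX}$, with $\xi$ supplying the required transformation, is exactly the content the paper elides.
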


\section{Up-to Techniques in a Fibration}
\label{sec:upto:fibrations}

Throughout this section we fix a bifibration $p\colon \P\to\B$,
an endofunctor $F \colon \B \to \B$, a lifting $\lift{F}\colon
\P \to \P$ of $F$ and a coalgebra $\xi\colon X \to FX$.
Intuitively, the studied system 
% (i.e., the coalgebra $\xi\colon X \to FX$)
lives in the base category $\B$ while its properties live in $\P_X$, 
the fibre above $X$. We thus instantiate the category $\C$ from the
previous section with $\P_X$.

As explained in the Introduction~$(\dagger)$, we discuss proof
techniques for the properties modelled as final coalgebras of the
functor $\xi^*\circ\lift{F}_X\colon \P_X \to \P_X$, that we refer
hereafter as $\pred{\lift{F}}{\xi}$. In $\Rel$, when $\lift{F}$ is
the canonical lifting,  $\pred{\Rel(F)}{\xi}$-coalgebras are exactly the
Hermida-Jacobs bisimulations~\cite{Hermida97structuralinduction}.

To obtain sound techniques for $\pred{\lift{F}}{\xi}$, it suffices to
find $\pred{\lift{F}}{\xi}$-compatible endofunctors on $\P_X$. We
provide such functors by giving conditions on the lifting $\lift F$,
abstracting away from the coalgebra $\xi$ at hand.

\subsection{Compatibility of Behavioural Equivalence Closure}
\label{ssec:beh:compat}

The most basic technique is up to behavioural equivalence, a prime
example of which is Milner's up to bisimilarity~\cite{Milner89}, where
a relation $R$ is mapped into ${\sim}R{\sim}$.
If $f$ is the unique morphism from $\xi$ to a final $F$-coalgebra (assumed to exist), 
behavioural equivalence is the kernel  of $f$. This leads us to  consider the functor 
$$\Beh=f^*\circ\textstyle{\coprod}_f:\P_X\to\P_X.$$
For the fibrations $\Pred\to\Set$ and  $\Rel\to\Set$ the functor $\Beh$ maps a predicate, respectively a relation, to its closure under behavioural equivalence.
The compatibility of $\Beh$ is an instance of:
\begin{theorem}\label{theo:beh}
  Suppose that $(\lift{F}, F)$ is a fibration map. For any
  $F$-coalgebra morphism $f\colon (X,\xi)\to (Y,\zeta)$, the functor
  $f^*\circ \coprod_f$ is $\pred{\lift{F}}{\xi}$-compatible.
\end{theorem}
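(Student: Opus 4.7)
The plan is to unfold the compatibility condition into a single 2-cell and reduce its construction to a clean identity between the predicate functors for $\xi$ and $\zeta$. Explicitly, $\pred{\lift{F}}{\xi}$-compatibility of $f^*\circ\coprod_f$ amounts to exhibiting a natural transformation
\[
  (f^*\circ\coprod_f)\circ \pred{\lift{F}}{\xi} ~\Ra~ \pred{\lift{F}}{\xi}\circ (f^*\circ\coprod_f),
\]
where $\pred{\lift{F}}{\xi}=\xi^*\circ\lift{F}_X$ and analogously for $\zeta$.

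The first step is to establish the key identity $f^*\circ\pred{\lift{F}}{\zeta} = \pred{\lift{F}}{\xi}\circ f^*$, which fuses both hypotheses of the theorem. The coalgebra morphism condition $\zeta\circ f = Ff\circ\xi$, together with splitness of the fibration, gives $f^*\circ\zeta^* = \xi^*\circ(Ff)^*$; the fibration-map condition gives $(Ff)^*\circ\lift{F}_Y = \lift{F}_X\circ f^*$. Chaining these yields
\[
  f^*\circ\pred{\lift{F}}{\zeta} = f^*\circ\zeta^*\circ\lift{F}_Y = \xi^*\circ(Ff)^*\circ\lift{F}_Y = \xi^*\circ\lift{F}_X\circ f^* = \pred{\lift{F}}{\xi}\circ f^*.
\]

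The second step is to pass to the left adjoint using the bifibration structure $\coprod_f\dashv f^*$. Applying $\pred{\lift{F}}{\xi}$ to the unit $\eta\colon\Id_{\P_X}\Ra f^*\circ\coprod_f$ of this adjunction and substituting the identity above gives
\[
  \pred{\lift{F}}{\xi} ~\Ra~ \pred{\lift{F}}{\xi}\circ f^*\circ\coprod_f = f^*\circ\pred{\lift{F}}{\zeta}\circ\coprod_f.
\]
Transposing across the adjunction yields $\coprod_f\circ\pred{\lift{F}}{\xi}\Ra \pred{\lift{F}}{\zeta}\circ\coprod_f$; whiskering on the left by $f^*$ and invoking the key identity once more produces exactly the required 2-cell.

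No step is really an obstacle: the whole argument is a short diagram chase once the identity $f^*\circ\pred{\lift{F}}{\zeta}=\pred{\lift{F}}{\xi}\circ f^*$ is isolated. The only subtle point is orienting the 2-cells correctly; they come from the unit of $\coprod_f\dashv f^*$, not from Cartesian liftings, and in the $\Rel\to\Set$ instance the resulting inequality boils down to the familiar inclusion $R\subseteq (f\times f)^{-1}\bigl((f\times f)(R)\bigr)$ propagated through $\pred{\lift{F}}{\xi}$.
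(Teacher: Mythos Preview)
Your proof is correct and follows essentially the same strategy as the paper's. Both arguments factor through showing that $f^*$ is $(\pred{\lift{F}}{\zeta},\pred{\lift{F}}{\xi})$-compatible (your key identity, the paper's cells (a) and (c)) and that $\coprod_f$ is $(\pred{\lift{F}}{\xi},\pred{\lift{F}}{\zeta})$-compatible, then compose; the only organisational difference is that the paper builds the $\coprod_f$-cell by pasting the separate mates of (a) and (c) through the intermediate fibre $\P_{FY}$, whereas you take the mate of the composite identity in one step---these are the same 2-cell by functoriality of the mate correspondence.
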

\begin{proof}[Proof sketch]
  We exhibit a natural transformation 
  \[
  f^*\circ \textstyle{\coprod}_f\circ(\xi^*\circ\lift{F})\Ra (\xi^*\circ\lift{F})\circ f^*\circ \textstyle{\coprod}_f
  \]
  obtained by pasting the 2-cells $(a),(b),(c),(d)$ in the following diagram:
  \begin{equation*}
    % \label{eq:beh-eq-comp}
    \xymatrix@R=1.5em{
      \P_X\ar[r]^-{\lift{F}}\ar@{=}[dd] & \P_{FX}\ar[r]^-{\xi^*}\ar[dr]_-{\coprod_{Ff}}\ar@{}[ddl]|(.5){\Downarrow(b)}  & \P_X\ar[r]^-{\coprod_f}\ar@{}[d]|(.5){\Downarrow(d)} & \P_Y\ar[r]^-{f^*}\ar@{<-}[dl]^-{\zeta^*}\ar@{}[ddr]|(.5){\Downarrow(c)} & \P_X\ar@{=}[dd] \\
      & & \P_{FY}\ar[dr]^-{(Ff)^*}\ar@{}[d]|(.5){\Downarrow(a)} & & \\
      \P_X\ar[r]^-{\coprod_f} & \P_Y\ar[r]^-{f^*}\ar[ur]^-{\lift{F}} & \P_X\ar[r]^-{\lift{F}} & \P_{FX}\ar[r]^-{\xi^*} & \P_X \\
    }
  \end{equation*}
  \begin{enumerate}[(a)]
  \item Since $(\lift{F}, F)$ is a fibration map we have that 
    \begin{equation*}
      % \label{eq:7}
      \lift{F}f^*=(Ff)^*\lift{F}
    \end{equation*}
  \item is a consequence of Lemma~\ref{eq:2-cell-lifting-F} in
    \appref{app:sec:upto:fibrations}.
    % Using the counit of the adjunction $\coprod_{Ff}\dashv(Ff)^*$
    % from~\eqref{eq:7} we obtain a 2-cell
    % $$ \textstyle{\coprod}_{Ff}\lift{F}f^*\Ra\lift{F}$$
    % Using further the unit of the adjunction $\coprod_{f}\dashv f^*$
    % we obtain the 2-cell $(b)$.
  \item is a natural isomorphism and comes from the fact that $f$ is a
    coalgebra map and the fibration is split.
  \item is obtained from $(c)$ using the counit of $\coprod_{f}\dashv
    f^*$ and the unit of $\coprod_{Ff}\dashv (Ff)^*$.
    % \begin{equation*}
    %   \xymatrix{
    %     \textstyle{\coprod}_f\xi^*\ar@{=>}[r] & \textstyle{\coprod}_f\xi^*(Ff)^*\textstyle{\coprod}_{Ff}\ar@{=}[d]& \\ 
    %     &\textstyle{\coprod}_ff^*\zeta^*\textstyle{\coprod}_{Ff} \ar@{=>}[r]&
    %     \zeta^*\textstyle{\coprod}_{Ff}
    %   }
    % \end{equation*}
  \end{enumerate}
  (Note that this proof decomposes into a proof that $\coprod_f$ is
  $(\pred{\lift F}\xi,\pred{\lift F}\zeta)$-compatible, by pasting (b) and (d), and a proof
  that $f^*$ is $(\pred{\lift F}\zeta,\pred{\lift F}\xi)$-compatible, by pasting (a) and (c).
  These two independent results can be composed by
  Proposition~\ref{prop:modularity}\eqref{it:compo} to obtain the
  theorem.)
\end{proof}

\begin{corollary}
  If  $F$ is a $\Set$-functor preserving weak pullbacks 
  then  the behavioural equivalence closure functor
  $\Beh$ is $\cpred{F}{\xi}$-compatible.
\end{corollary}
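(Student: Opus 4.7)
The plan is to observe that this corollary is a direct instance of Theorem~\ref{theo:beh} applied to the bifibration $p \colon \Rel \to \Set$ and to a specific coalgebra morphism, namely the unique one into the final $F$-coalgebra.

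First I would recall from the example following the definition of fibration map that, because $F$ preserves weak pullbacks, the canonical relation lifting $\Rel(F)$ makes $(\Rel(F), F)$ into a fibration map between $p \colon \Rel \to \Set$ and itself. This is exactly the hypothesis needed to invoke Theorem~\ref{theo:beh}.

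Next, let $(\Omega, \omega)$ be the final $F$-coalgebra (which exists under the standard assumption, implicit in the statement that $\Beh$ is defined) and let $f \colon (X,\xi) \to (\Omega, \omega)$ be the unique coalgebra morphism. I would unfold the definition of the behavioural equivalence closure to check that in the fibration $\Rel \to \Set$ we have $\Beh = f^* \circ \coprod_f$: indeed, for $R \in \Rel_X$, $\coprod_f R = (f \times f)(R)$ and $f^* S = (f\times f)^{-1}(S)$, so $(f^* \circ \coprod_f)(R)$ relates $x$ and $x'$ precisely when there exist $(y,y') \in R$ with $f(x) = f(y)$ and $f(x') = f(y')$, i.e.\ when $x$ and $x'$ are related modulo behavioural equivalence on both sides.

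Finally, I would apply Theorem~\ref{theo:beh} to the coalgebra morphism $f$, which yields directly that $f^* \circ \coprod_f$ is $\pred{\Rel(F)}{\xi}$-compatible, that is, $\Beh$ is $\cpred{F}{\xi}$-compatible as desired. There is no real obstacle here; the only subtle point is the implicit existence of the final coalgebra, but this is already assumed when talking about behavioural equivalence and the closure $\Beh$.
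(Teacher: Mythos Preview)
Your proposal is correct and follows exactly the paper's approach: the paper's own proof consists of the single observation that $(\Rel(F),F)$ is a fibration map when $F$ preserves weak pullbacks, leaving the invocation of Theorem~\ref{theo:beh} with the unique morphism into the final coalgebra implicit. You have simply spelled out the unfolding of $\Beh = f^*\circ\coprod_f$ more explicitly than the paper does.
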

\begin{proof}
  $(\Rel(F),F)$ is a fibration map whenever $F$ preserves weak
  pullbacks (see e.g.~\cite{HughesJ04}).  
\end{proof}

From Theorem~\ref{theo:beh} we also derive the soundness of up-to
$\Beh$ for unary predicates: the \emph{monotone predicate liftings} used in coalgebraic modal
logic~\cite{DBLP:journals/cj/CirsteaKPSV11} are fibration
maps~\cite{Jacobs:coalg}, thus the hypothesis of Theorem~\ref{theo:beh} are satisfied.

\subsection{Compatibility of Equivalence Closure}
\label{ssec:trans:compat}

In this section we show that compatibility of equivalence closure can be 
 modularly derived from compatibility of reflexive, symmetric and transitive 
closures. 
For the latter it suffices to prove that relational composition is compatible. Composition of relations can be expressed in a fibrational
setting, by considering the category $\Rel \times_{\Set} \Rel$ obtained as a pullback of the fibration 
$\Rel\to\Set$ along itself:
\[
\xymatrix{
\Rel \times_{\Set} \Rel\ar[d]\pullbackcorner\ar[r] & \Rel\ar[d] \\
\Rel\ar[r] & \Set \\
}
\]
Then relational composition is a functor $\otimes \colon \Rel \times_{\Set} \Rel \to
\Rel$ mapping $R,S\subseteq X \times X$ to their composition. As we will see as a corollary of Proposition~\ref{prop:modular-compatibility-with-bisim}, when proving compatibility of relational composition  with respect to $\pred{\lift{F}}{\xi}$ we can abstract away from the coalgebra $\xi$ and simply use that $\otimes$ is a morphism of endofunctors from $\lift{F}^2$ to $\lift{F}$. 

Compatibility of symmetric and reflexive closures can be proved following the same principle. 
This leads us to consider for an arbitrary fibration $\P\to\B$ its $n$-fold product in the category
$\Fib(\B)$, denoted by $\P^{\times_{\B}n}\to \B$. The objects in
$\P^{\times_{\B}n}$ are tuples of objects in $\P$ belonging to the
same fibre. This product is computed fibrewise, that is,
$\P^{\times_{\B}n}_X{=}\P_X^n$.  For $n{=}0$ we have $\P^{0}{=}\B$.
% \footnote{This
%     makes sense, since $\id\colon \B\to\B$ is the final object in the
%     category of $\B$-indexed fibrations.}

Hereafter, we are interested in functors $G\colon \P^{\times_{\B}n}\to \P$
that are liftings of the identity functor on $\B$: for each $X$ in
$\B$ we have functors $G_X\colon \P_X^n\to\P_X$. Then relational composition is just an instance of $G$ for $n{=}2$.

\begin{proposition}\label{prop:modular-compatibility-with-bisim}
  Let $G\colon \P^{\times_{\B}n}\to \P$ be a lifting of the identity,
  with a natural transformation $G\lift{F}^n\Ra\lift{F}G$. Then $G_X$
  is $\pred{\lift{F}}{\xi}$-compatible.
\end{proposition}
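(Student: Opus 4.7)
The plan is to decompose the desired natural transformation into two independent pieces and paste them. Writing $(-)^n$ for the $n$-fold product functor on fibres, $G_X$ being $\pred{\lift{F}}{\xi}$-compatible unfolds to the existence of a natural transformation $G_X\circ(\xi^*)^n\circ(\lift{F}_X)^n\Ra \xi^*\circ\lift{F}_X\circ G_X$. I will produce this by pasting a \emph{lifting ingredient} $G_{FX}\circ(\lift{F}_X)^n\Ra\lift{F}_X\circ G_X$ with a \emph{reindexing ingredient} $G_X\circ(\xi^*)^n\Ra \xi^*\circ G_{FX}$. The lifting ingredient is immediate from the hypothesis $G\lift{F}^n\Ra\lift{F}G$: simply restrict the given natural transformation to the fibre above $X$.

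The reindexing ingredient is the main subtlety, but it is a purely fibrational observation about $G$ being a functor over $\B$ that lifts the identity---it does \emph{not} require $G$ to be a fibration map. For every $\B$-morphism $f\colon Y\to Z$ I construct a natural transformation $G_Y\circ(f^*)^n\Ra f^*\circ G_Z$ as follows: given $R_1,\dots,R_n\in\P_Z$, the tuple of Cartesian liftings $(\widetilde{f}_{R_i})_i$ is a morphism in $\P^{\times_{\B}n}$ sitting over $f$; applying the functor $G$ yields a morphism $G(f^*R_1,\dots,f^*R_n)\to G(R_1,\dots,R_n)$ in $\P$ sitting over $f$, which by the universal property of the Cartesian lifting $\widetilde{f}_{G(R_1,\dots,R_n)}$ factors uniquely as a vertical morphism $G_Y(f^*R_1,\dots,f^*R_n)\to f^*(G_Z(R_1,\dots,R_n))$. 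Naturality in the $R_i$ follows from uniqueness in the universal property. Specialising to $f=\xi$ supplies the reindexing ingredient.

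Pasting gives
\[
G_X\circ(\xi^*)^n\circ(\lift{F}_X)^n \Ra \xi^*\circ G_{FX}\circ(\lift{F}_X)^n \Ra \xi^*\circ\lift{F}_X\circ G_X,
\]
which is the required compatibility. The only real obstacle is noticing that the weaker ``lifting of identity'' hypothesis on $G$ is enough: if $G$ were a fibration map the reindexing ingredient would be an isomorphism, but for compatibility only the direction delivered by the universal property of Cartesian liftings is needed.
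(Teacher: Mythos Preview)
Your proposal is correct and follows essentially the same approach as the paper: the paper pastes exactly the two ingredients you identify, obtaining the reindexing ingredient $G_X\circ(\xi^*)^n\Ra \xi^*\circ G_{FX}$ via the universal property of Cartesian liftings (packaged as a separate lemma, itself an instance of the general fact that any lifting $\lift{T}$ of $T$ admits $\lift{T}\circ f^*\Ra (Tf)^*\circ\lift{T}$). Your explicit remark that $G$ need not be a fibration map, only a lifting of the identity, is precisely the point of that lemma.
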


We list now several applications of the proposition for the fibration
$\Rel\to\Set$.
\begin{enumerate}
\item[$(n{=}0)$] Let $\Ref\colon \Set\to\Rel$ be the functor mapping
  each set $X$ to $\Delta_X$, the identity relation on $X$.  $\Ref_X$
  is $\pred{\lift{F}}{\xi}$-compatible if 
  \begin{equation}
    \label{(*)}
    \Delta_{FX}\subseteq \lift{F}\Delta_X.\tag{$*$}
  \end{equation}
\item[$(n{=}1)$] Let $\Sym \colon \Rel\to\Rel$ be the functor mapping
  each relation $R\subseteq X^2$ to its converse $R^{-1}\subseteq
  X^2$. $\Sym_X$ is $\pred{\lift{F}}{\xi}$-compatible if for all
  relations $R\subseteq X^2$
  \begin{equation}
    \label{(**)}
    \lift F(R)^{-1} \subseteq \lift F(R^{-1})\tag{$**$} .    
  \end{equation}
\item[$(n{=}2)$] Let $\otimes\colon \Rel\times_\Set\Rel\to\Rel$ be the
  relational composition functor. Then $\otimes_X$ is
  $\pred{\lift{F}}{\xi}$-compatible if for all
  $R,S\subseteq X^2$ 
  \begin{equation}
    \label{(***)}
    \lift{F}R \otimes \lift{F}S \subseteq \lift{F}(R\otimes S)\tag{$*{*}*$}
  \end{equation}
  If moreover $T_1,T_2\colon\Rel_X\to\Rel_X$ are two
  $\pred{\lift{F}}{\xi}$-compatible functors, their pointwise composition $T_1\otimes
  T_2=\otimes_X\circ\langle T_1,T_2\rangle$ is
  $\pred{\lift{F}}{\xi}$-compatible by Proposition~\ref{prop:modularity}~(\ref{it:compo},\ref{it:pairing}).
\end{enumerate}
The transitive closure functor $\Tra$ is obtained from $\otimes$ in a
modular way: 
$$\Tra=\coprod_{i\geq 0} (-)^i\colon\P_X\to\P_X$$ where $(-)^0=\Id$ and
$(-)^{i+1}=\Id \otimes (-)^i$. 
Using
Proposition~\ref{prop:modularity} %~(\ref{it:compo},\ref{it:pairing},\ref{it:id},\ref{it:coproduct}),
we get

\begin{corollary}
  % Let $(X, \xi)$ be a coalgebra for a functor $F\colon \Set \to \Set$.
If $F$ is a $\Set$-functor  then the reflexive and symmetric closure functors $\Ref_X$ and $\Sym_X$
  are $\cpred{F}{\xi}$-compatible. Moreover, if $F$ preserves weak
  pullbacks, then the transitive closure functor $\Tra_X$ is
  $\cpred{F}{\xi}$-compatible.
\end{corollary}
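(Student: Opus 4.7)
The plan is to derive each of the three compatibility statements as an instance of Proposition~\ref{prop:modular-compatibility-with-bisim} applied to $\lift{F}=\Rel(F)$, by verifying the conditions (*), (**), (***) stated in the three items following that proposition; for the transitive closure, this is then combined with the modularity rules of Proposition~\ref{prop:modularity}.

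For the reflexive closure ($n=0$), recall that $\Delta_X$ is represented by the jointly mono span $X\xleftarrow{1_X}X\xrightarrow{1_X}X$. Applying $F$ gives $FX\xleftarrow{1_{FX}}FX\xrightarrow{1_{FX}}FX$, which already factorises through $FX\times FX$ as the diagonal; hence $\Rel(F)(\Delta_X)=\Delta_{FX}$ and condition (*) holds. For the symmetric closure ($n=1$), the jointly mono span representing $R^{-1}$ is obtained from the one representing $R$ by swapping its two legs, and the construction of $\Rel(F)$ is symmetric in the two legs; hence $\Rel(F)(R^{-1})=\Rel(F)(R)^{-1}$, which is~(**). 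In both cases no preservation hypothesis on $F$ is required.

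For the transitive closure, the key ingredient is the classical fact that whenever $F$ preserves weak pullbacks, its canonical relation lifting preserves relational composition: $\Rel(F)(R)\otimes\Rel(F)(S)\subseteq\Rel(F)(R\otimes S)$ for all $R,S\subseteq X^2$ (see e.g.~\cite{HughesJ04}). This establishes (***), so Proposition~\ref{prop:modular-compatibility-with-bisim} with $n=2$ yields that $\otimes_X$ is $\cpred{F}{\xi}$-compatible. Then, following the decomposition $\Tra=\coprod_{i\geq 0}(-)^i$ with $(-)^0=\Id$ and $(-)^{i+1}=\Id\otimes(-)^i$, we apply Proposition~\ref{prop:modularity}: by~(\ref{it:id}) $\Id$ is $\cpred{F}{\xi}$-compatible; by induction on $i$, using pairing~(\ref{it:pairing}) and composition~(\ref{it:compo}), each power $(-)^i$ is $\cpred{F}{\xi}$-compatible; finally, item~(\ref{it:coproduct}) yields compatibility of their coproduct $\Tra_X$.

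The only genuinely non-trivial step is the classical relation-lifting fact invoked for~(***); this is where weak-pullback preservation is used, and it is also the reason transitive closure, unlike reflexive and symmetric closures, requires this assumption on $F$. Once this is in hand, the remainder of the argument is a direct verification from the jointly mono span definition of $\Rel(F)$ together with a bookkeeping application of the modularity properties of compatible functors.
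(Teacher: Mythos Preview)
Your proof is correct and follows exactly the paper's approach: verify conditions $(*)$, $(**)$, $(*{*}*)$ for the canonical lifting $\Rel(F)$ (the latter under weak-pullback preservation), then assemble $\Tra$ from $\otimes$ via Proposition~\ref{prop:modularity}. The paper's own proof is a one-liner asserting precisely these facts; you have simply spelled out the span-level verifications of $(*)$ and $(**)$ and made the modularity bookkeeping for $\Tra$ explicit.
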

\begin{proof}
  The above conditions $(*)$ and $(**)$ always hold for the canonical
  lifting $\lift{F}=\Rel(F)$; $(*{*}*)$ holds for $\Rel(F)$ when $F$
  preserves weak pullbacks.
\end{proof}

By compositionality (Proposition~\ref{prop:modularity}), one can then
deduce compatibility of the equivalence closure functor: this functor
can be defined as
%\begin{align*}
  $\Eqv \triangleq \Tra\circ(\Id + \Sym + \Ref)$, where $+$ denotes
  binary coproduct.
%\end{align*}

When $\pred{\lift{F}}{\xi}$ has a final coalgebra $S$, one can define
a ``self closure'' $\P_X$-endofunctor $\Slf=\widetilde{S}\otimes\Id\otimes
\widetilde{S}$, where $\widetilde{S}:\P_X\to\P_X$ is the constant to $S$
functor. Thanks to Proposition~\ref{prop:modularity}, the functor $\Slf$ is
$\pred{\lift{F}}{\xi}$-compatible whenever $(*{*}*)$ holds. 
When $F$ preserves weak pullbacks and $\lift F$ is instantiated to the canonical 
lifting $\Rel(F)$, $\Slf$ coincides with $\Beh$ since $S$ is just behavioural equivalence in this case.
If instead we consider the lifting % of $F$ ($F\times F$ actually)
that yields weak bisimilarity (to be defined in
Section~\ref{ssec:weak}), $\Slf$ corresponds to a technique called
``weak bisimulation up to weak bisimilarity'', while $\Beh$
corresponds to ``weak bisimulation up to (strong) bisimilarity''.

\subsection{Compatibility of Contextual Closure}
\label{ssec:context:compat}

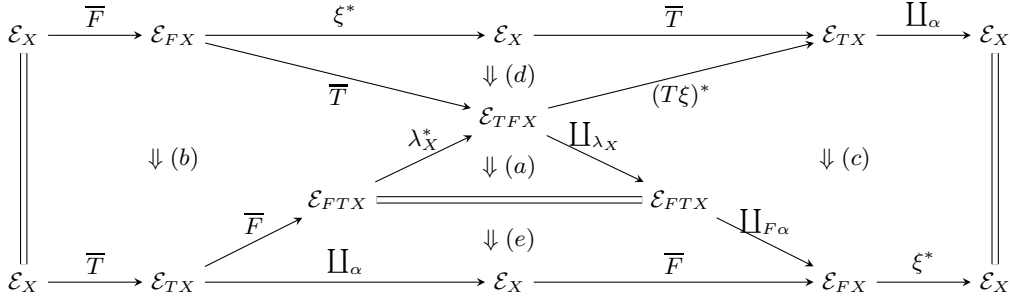
\begin{figure*}[t]
  \centering
  \begin{tikzpicture}
    \matrix (m) [matrix of math nodes,row sep=2em,column
    sep=4em,minimum width=2em] {
      \P_X & \P_{FX} &  & \P_X &  & \P_{TX} & \P_X \\
      &  & & \P_{TFX} & & & \\
      & {} &  \P_{FTX} & {} & \P_{FTX} & & \\
      \P_X & \P_{TX} & & \P_X & & \P_{FX} & \P_X \\}; \path[-stealth]
    (m-1-4) edge[color=white] node[fill=white]
    {$\textcolor{black}{\Downarrow(d)}$} (m-2-4) (m-1-1) edge
    [double,-,double distance=2pt] (m-4-1) edge node [above] {$
      \lift{F}$} (m-1-2) (m-1-2) edge node [above] {$\xi^*$} (m-1-4)
    edge node [below] {$\lift{T}$} (m-2-4) edge[color=white]
    node[fill=white] {$\textcolor{black}{\Downarrow(b)}$} (m-4-2) (m-1-4)
    edge node [above] {$\lift{T}$} (m-1-6) (m-1-6) edge node [above]
    {$\coprod_\alpha$} (m-1-7) edge[color=white] node[fill=white]
    {$\textcolor{black}{\Downarrow(c)}$} (m-4-6) (m-1-7) edge
    [double,-,double distance=2pt] (m-4-7)
    % (m-2-2) edge [double distance =1pt] (m-3-2)
    (m-2-4) edge node [below] {$(T\xi)^*$} (m-1-6) edge node [above]
    {$\coprod_{\lambda_X}$} (m-3-5) edge[color=white] node[fill=white]
    {$\textcolor{black}{\Downarrow(a)}$} (m-3-4) (m-3-3) edge node
    [above] {$\lambda_X^*$} (m-2-4) edge [double,-,double distance=2pt] 
    (m-3-5) (m-3-4) edge[color=white] node[fill=white]
    {$\textcolor{black}{\Downarrow(e)}$} (m-4-4) (m-3-5) edge node
    [above] {$\coprod_{F\alpha}$} (m-4-6) (m-4-1) edge node [above]
    {$\lift{T}$} (m-4-2) (m-4-2) edge node [above] {$\coprod_{\alpha}$}
    (m-4-4) edge node [above] {$\lift{F}$} (m-3-3) (m-4-4) edge node
    [above] {$\lift{F}$} (m-4-6) (m-4-6) edge node [above] {$\xi^*$}
    (m-4-7);
    % (m-2-1) edge node [below] {$=$} (m-2-2)
            
    % (m-1-2) edge node [right] {} (m-2-2)
    % % (m-2-1) edge [loop left] node [below] {$T$} (m-2-1.west);
    % (m-2-2) edge node [below] {$\Fcal$} (m-2-3) (m-1-2) edge node
    % [above] {$\Fcal_{\mathcal E\mathcal M}$} (m-1-3) (m-1-3) edge
    % node [above] {$U$} (m-1-4) edge node [below] {} (m-2-3) (m-2-3)
    % edge node [below] {$U$} (m-2-4) (m-1-4) edge node [below] {}
    % (m-2-4); \draw[vecArrow] (m-2-2) to (m-3-2); \draw[vecArrow]
    % (m-2-4) to (m-3-4);
    % % \draw[innerWhite] (m-2-2) to (m-3-2);
    % \draw[-,shorten <=5pt,shorten >=5pt,line width=3.8pt] (m-1-1) to
    % (m-4-1); \draw[-,shorten <=4pt,shorten >=4pt,line width=3pt,
    % color=white] (m-1-1) to (m-4-1);
  \end{tikzpicture}
  \nocaptionrule
  \caption{Compatibility of contextual closure in a fibration}
  \label{eq:compatibility}
\end{figure*}

For defining contextual closure, we assume that the state space of
the coalgebra  is equipped with an algebraic structure.
More precisely, we fix a bialgebra for a distributive law $\lambda\colon TF\Ra FT$, that is, a triple $(X,\alpha,\xi)$, where $\alpha:TX\to X$ is a an algebra and $\xi \colon X \to FX$ is a coalgebra  such that the next diagram commutes:
\begin{align*}\qquad\qquad
\begin{tikzpicture}[
font=\sffamily\footnotesize,
sink/.style={color=black, draw,thick,rounded corners,inner sep=.3cm,fill=green!20}
]
    \matrix (m) [matrix of nodes,row sep=3em,column
    sep=4em,minimum width=2em,ampersand replacement=\&] {
      \node (m-1-1) {$TX$}; \&  
      \node (m-1-2) {$X$}; \&  
      \node (m-1-3) {$FX $}; \&  
     \\      
     \node (m-2-1) {$TFX$}; \&  
      \node (m-2-2) {}; \&  
      \node (m-2-3) {$FTX$}; \&  
\\}; 
      \path[-stealth]
     (m-1-1) edge node [left] {$T\xi$} (m-2-1) 
     edge node [above] {$\alpha$} (m-1-2) 
     (m-1-2) edge node [above] {$\xi$} (m-1-3)
(m-2-1) edge node [below] {$\lambda_X$} (m-2-3)
(m-2-3) edge node [right] {$F\alpha$} (m-1-3);
  \end{tikzpicture}
\end{align*}

\begin{theorem}
  \label{thm:big-2-cell}
  Let $\lift{T},\lift{F}\colon \P\to\P$ be liftings of $T$ and $F$. If
  $\lift{\lambda} \colon \lift{T}\,\lift{F}\Ra\lift{F}\,\lift{T}$ is a
  natural transformation sitting above $\lambda$, then
  $\coprod_\alpha\circ\,\lift{T}$ is $\pred{\lift{F}}{\xi}$-compatible.
\end{theorem}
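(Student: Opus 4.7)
The plan is to construct the natural transformation
\[
\coprod_\alpha\, \lift{T}\, \xi^*\, \lift{F} \;\Rightarrow\; \xi^*\, \lift{F}\, \coprod_\alpha\, \lift{T}
\]
by pasting the five 2-cells $(a)$--$(e)$ of Figure~\ref{eq:compatibility}. I assemble them from four ingredients: (i) since $\lift T$ is a lifting of $T$, Lemma~\ref{eq:2-cell-lifting-F} (as used in the proof of Theorem~\ref{theo:beh}) gives a canonical 2-cell $\lift T\, \xi^* \Rightarrow (T\xi)^*\, \lift T$; (ii) analogously, $\lift F\, \alpha^* \Rightarrow (F\alpha)^*\, \lift F$; (iii) because $\lift\lambda$ sits above $\lambda_X$, the universal property of the Cartesian lifting of $\lambda_X$ factors each component of $\lift\lambda$ through $\lambda_X^*\,\lift F\,\lift T$, yielding a vertical 2-cell $\lift T\,\lift F \Rightarrow \lambda_X^*\,\lift F\,\lift T$; (iv) the bialgebra equation $\xi\circ\alpha = F\alpha\circ\lambda_X\circ T\xi$, which functorially yields the identity $\coprod_{F\alpha}\coprod_{\lambda_X}\coprod_{T\xi} = \coprod_\xi\coprod_\alpha$ between the corresponding left adjoints.

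Cell $(d)$ is obtained directly from (i) and cell $(b)$ is (iii). Cell $(c)$ is the Beck--Chevalley-style 2-cell $\coprod_{F\alpha}\,\lift F \Rightarrow \lift F\,\coprod_\alpha$, built from (ii) by composing with the unit of $\coprod_\alpha\dashv\alpha^*$ and the counit of $\coprod_{F\alpha}\dashv(F\alpha)^*$. Cell $(a)$ is the 2-cell $\coprod_\alpha\,(T\xi)^* \Rightarrow \xi^*\,\coprod_{F\alpha}\,\coprod_{\lambda_X}$, built by inserting the unit of $\coprod_\xi\dashv\xi^*$, substituting along the identity in (iv), and then absorbing $\coprod_{T\xi}(T\xi)^*$ via the counit of $\coprod_{T\xi}\dashv(T\xi)^*$. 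Cell $(e)$ is the counit $\coprod_{\lambda_X}\lambda_X^* \Rightarrow \mathrm{id}$ of $\coprod_{\lambda_X}\dashv\lambda_X^*$, inserted between (iii) and cell $(a)$ so that the $\lambda_X^*$ introduced by (iii) cancels with the $\coprod_{\lambda_X}$ produced by $(a)$.

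Pasting the five cells in the configuration of the figure yields the desired natural transformation, so by Definition~\ref{def:compatible-functor} the functor $\coprod_\alpha\circ\lift T$ is $\pred{\lift F}{\xi}$-compatible. The main obstacle is cell $(a)$: four morphisms ($\alpha$, $T\xi$, $\lambda_X$, $F\alpha$) interact through the bialgebra square, and one must insert units and counits of the four associated adjunctions in exactly the right positions for the source and target to match those of the adjacent cells $(b)$, $(d)$, $(e)$. This is the standard mate/Beck--Chevalley bookkeeping underlying bifibrational reasoning, and is the only non-routine step; the remaining cells are either direct instances of the lifting property of $\lift T$, $\lift F$, or mates thereof along $\coprod_f\dashv f^*$.
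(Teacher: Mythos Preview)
Your proof is correct and follows exactly the paper's pasting strategy, but your labels $(a)$, $(c)$, $(e)$ are permuted relative to Figure~\ref{eq:compatibility}: in the paper $(a)$ is the counit of $\coprod_{\lambda_X}\dashv\lambda_X^*$ (your $(e)$), $(c)$ is the bialgebra 2-cell $\coprod_\alpha(T\xi)^*\Rightarrow\xi^*\coprod_{F\alpha}\coprod_{\lambda_X}$ (your $(a)$, whose construction via the direct-image identity $\coprod_{F\alpha}\coprod_{\lambda_X}\coprod_{T\xi}\cong\coprod_\xi\coprod_\alpha$ is a valid alternative to the paper's reindexing-side construction in Lemma~\ref{eq:2-cell-bialgebra}), and $(e)$ is the mate $\coprod_{F\alpha}\lift F\Rightarrow\lift F\coprod_\alpha$ (your $(c)$). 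Also, your ingredient~(i) is an instance of Lemma~\ref{eq:2-cell-lifting-T}, not Lemma~\ref{eq:2-cell-lifting-F}.
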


\begin{proof}[Proof sketch]
  We exhibit a natural transformation
  $$(\textstyle{\coprod}_\alpha\circ \lift{T})\circ(\xi^*\circ\lift{F})
  \Rightarrow (\xi^*\circ \lift{F})\circ(\textstyle{\coprod}_\alpha\circ \lift{T})\enspace.$$
  This is achieved in Figure~\ref{eq:compatibility} by pasting five
  natural transformations, obtained as follows:
  \begin{enumerate}[(a)]
  \item is the counit of the adjunction
    $\coprod_{\lambda_X}\dashv\lambda_X^*$.
  \item comes from $\lift{\lambda}$ being a lifting of
    $\lambda$. %, see Lemma~\ref{eq:2-cell-distributivity}.
  \item comes from the bialgebra condition, the fibration being split,
    and the units and counits of the adjunctions
    $\coprod_{\alpha}\dashv\alpha^*$,
    $\coprod_{F\alpha}\dashv(F\alpha)^*$, and 
    $\coprod_{\lambda_X}\dashv\lambda_X^*$. % See Lemma~\ref{eq:2-cell-bialgebra}.
  \item arises since $\lift{T}$ is a lifting of $T$, using the
    universal property of the Cartesian lifting
    $(T\xi)^*$. % , see Lemma~\ref{eq:2-cell-lifting-T}.
  \item comes from $\lift{F}$ being a lifting of $F$, combined with
    the unit and counit of the adjunction
    $\coprod_{\alpha}\dashv\alpha^*$.
    % , see Lemma~\ref{eq:2-cell-lifting-F}. 
  \end{enumerate}
  (Note that like for Theorem~\ref{theo:beh}, this proof actually
  decomposes into a proof that $\lift{T}$ is
  $(\pred{\lift{F}}{\xi},(T\xi)^*\circ\lambda_X^*)$-compatible, and a
  proof that $\coprod_\alpha$ is
  $((T\xi)^*\circ\lambda_X^*,\pred{\lift{F}}{\xi})$-compatible.)
\end{proof}

When the fibration at issue is $\mathsf{Rel}\to\Set$ and $\lift{T}$ is
the canonical lifting $\Rel(T)$, one can easily check that
$\coprod_\alpha\!\circ\, \Rel(T)$ applied to a relation $R$ gives exactly
its \emph{contextual closure} as described in~\cite{RotBBPRS}. For
this reason, we abbreviate $\coprod_\alpha\!\circ\, \Rel(T)$ to $\Con$.
When moreover $\lift{F}$ is the canonical lifting $\Rel(F)$, we get:
\begin{corollary}[{\cite[Theorem~4]{RotBBPRS}}]\label{cor:context}
  If $F,T$ are $\Set$-functors and $(X, \alpha, \xi)$ is a bialgebra for $\lambda \colon T F
  \Rightarrow F T$.  The contextual closure functor $\Con$
  is $\cpred{F}{\xi}$-compatible.
\end{corollary}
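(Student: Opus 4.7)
The plan is to apply Theorem~\ref{thm:big-2-cell} directly, choosing $\lift{T}=\Rel(T)$ and $\lift{F}=\Rel(F)$ to be the canonical relational liftings, since by definition $\Con = \coprod_\alpha\circ\Rel(T)$. All the work is therefore concentrated in verifying the hypothesis of the theorem: that the distributive law $\lambda\colon TF\Ra FT$ lifts to a natural transformation $\lift{\lambda}\colon\Rel(T)\Rel(F)\Ra\Rel(F)\Rel(T)$ sitting above~$\lambda$. Since the fibres $\Rel_{TFX}$ and $\Rel_{FTX}$ are posets, this amounts to showing that for each relation $R\subseteq X\times X$, the map $\lambda_X\times\lambda_X$ sends $\Rel(T)\Rel(F)(R)$ into $\Rel(F)\Rel(T)(R)$; naturality of the resulting family in $\Rel$ then follows from naturality of $\lambda$ in $\Set$.

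To check the inclusion, I would unfold each canonical lifting as an image factorisation. Writing $\pi_1,\pi_2\colon R\to X$ for the projections, let $e\colon FR\twoheadrightarrow \Rel(F)(R)$ with inclusions $\sigma_1,\sigma_2$ realise $\Rel(F)(R)$ as the image of $\langle F\pi_1,F\pi_2\rangle$; then $\Rel(T)\Rel(F)(R)$ is the image of $\langle T\sigma_1,T\sigma_2\rangle$. Symmetrically, let $e'\colon TR\twoheadrightarrow \Rel(T)(R)$ with inclusions $\sigma'_1,\sigma'_2$ realise $\Rel(T)(R)$, so that $\Rel(F)\Rel(T)(R)$ is the image of $\langle F\sigma'_1,F\sigma'_2\rangle$.

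The key ingredient is that every $\Set$-endofunctor preserves surjections (surjections split in $\Set$), so $Te$ is surjective. Hence any element of $\Rel(T)\Rel(F)(R)$ has the form $(T\sigma_1(Te(r)),T\sigma_2(Te(r))) = (TF\pi_1(r),TF\pi_2(r))$ for some $r\in TFR$. Naturality of $\lambda$ at $\pi_i\colon R\to X$ rewrites its image under $\lambda_X\times\lambda_X$ as $(FT\pi_1(\lambda_R(r)),FT\pi_2(\lambda_R(r)))$. Factoring $T\pi_i = \sigma'_i\circ e'$ and setting $u := Fe'(\lambda_R(r))\in F\Rel(T)(R)$ then exhibits the image as $(F\sigma'_1(u),F\sigma'_2(u))$, which lies in $\Rel(F)\Rel(T)(R)$, as desired.

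The main obstacle is this elementwise diagram chase, which combines two rounds of image factorisation with naturality of $\lambda$; the surjectivity of $Te$ is what makes the argument go through without assuming weak pullback preservation for $T$ or $F$. Once $\lift{\lambda}$ is constructed, the bialgebra datum $(X,\alpha,\xi)$ satisfies all the hypotheses of Theorem~\ref{thm:big-2-cell}, which immediately delivers the $\cpred{F}{\xi}$-compatibility of $\coprod_\alpha\circ\Rel(T) = \Con$.
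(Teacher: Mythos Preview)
Your proposal is correct and follows the same overall strategy as the paper: instantiate Theorem~\ref{thm:big-2-cell} with the canonical liftings $\lift{T}=\Rel(T)$ and $\lift{F}=\Rel(F)$, and check that $\lambda$ lifts. The difference lies only in how that last step is discharged. The paper invokes a single structural fact, namely that the canonical relation lifting $\Rel(-)$ is a $2$-functor~\cite[Exercise~4.4.6]{Jacobs:coalg}; this immediately yields $\Rel(\lambda)\colon\Rel(T)\Rel(F)\Ra\Rel(F)\Rel(T)$ sitting above $\lambda$, with no further work. Your elementwise chase---image factorisations, surjectivity of $Te$, naturality of $\lambda$ at the projections---is precisely an unrolling of what $2$-functoriality of $\Rel$ says in this instance: $\Rel(TF)=\Rel(T)\Rel(F)$ (via the split epi $Te$) and $\Rel$ sends the natural transformation $\lambda$ to one between the composites. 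Your route is more self-contained and makes explicit why no weak-pullback-preservation hypothesis is needed on $T$ or $F$; the paper's route is shorter, and once the $2$-functor observation is in hand it also transparently gives lifted versions of \emph{any} natural transformation between composites of $\Set$-endofunctors, not just this particular $\lambda$.
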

\begin{proof}
  The canonical lifting $\Rel(-)$ is a 2-functor~\cite[Exercise 4.4.6]{Jacobs:coalg}. Therefore 
  $\lift\lambda=\Rel(\lambda)$ fulfils the assumption of  Theorem~\ref{thm:big-2-cell}.
\end{proof}
Our interest in Theorem \ref{thm:big-2-cell} is not restricted to
prove compatibility of up to $\Con$. By taking non canonical liftings
of $T$, one derives novel and effective up-to techniques, such as the
\emph{monotone contextual closure} and the \emph{left-contextual
  closure} defined in Sections \ref{ssec:weighted} and
\ref{ssec:divergence}.
In order to apply Theorem \ref{thm:big-2-cell} for situations when
either $\lift{T}$ or $\lift{F}$ is not the canonical relation lifting,
one has to exhibit a $\lift{\lambda}$ sitting above $\lambda$. In
$\Rel$, such a $\lift{\lambda}$ exists if and only if for all
relations $R\subseteq X^2$, the restriction of $\lambda_X \times
\lambda_X$ to $\lift{T}\,\lift{F}R$ corestricts to
$\lift{F}\,\lift{T}R$.
A similar condition has to be checked for $\mathsf{Pred}\to\Set$.

\subsection{Abstract GSOS}
\label{ssec:abstract-GSOS}

For several applications, it is convenient to consider natural
transformations of a slightly different type $\lambda \colon T(F
\times \Id) \Rightarrow F\T$, where $\T$ is the free monad over
$T$. These are called \emph{abstract GSOS specifications} since, as
shown in \cite{DBLP:conf/lics/TuriP97}, they generalise GSOS rules to
any behaviour endofunctor $F$.  Each such $\lambda$ induces a
distributive law $\lambda^{\dagger}\colon \T(F \times \Id) \Rightarrow
(F \times \Id)\T$ of the monad $\T$ over the copointed functor $F
\times \Id$, whose bialgebras are the objects of our interest (see
\appref{app:GSOS}). In order to prove compatibility via Theorem
\ref{thm:big-2-cell}, one should exhibit a $\lift{\lambda^{\dagger}}$
sitting above $\lambda^{\dagger}$. The following lemma simplifies such
a task.
%\dam: maybe we should point out that this is a generalisation of
%\cite[Prop.~6.4.10]{PS12}}
\begin{lemma}\label{lm:gsos}
  Let $\lift{\lambda}\colon \lift{T}(\lift{F}\times\lift{\Id})\Rightarrow
  \lift{F}\,\lift{\T}$ be a natural transformation sitting above
  $\lambda\colon T(F\times\Id)\Rightarrow F\T$.  Then there exists a
  $\lift{\lambda^\dagger}
  \colon\lift{\T}(\lift{F}\times\lift{\Id})\Rightarrow
  (\lift{F}\times\lift{\Id})\lift{\T} $ sitting above
  $\lambda^\dagger\colon\T(F\times\Id)\Rightarrow (F\times\Id)\T $.
\end{lemma}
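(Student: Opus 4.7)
The plan is to mimic, inside the fibration, the standard construction of $\lambda^\dagger$ from $\lambda$. Write $G=F\times\Id$ and $\lift{G}=\lift{F}\times\lift{\Id}$. Recall that by freeness, $\T GX$ is the initial algebra of the endofunctor $Y\mapsto GX+TY$; hence $\lambda^\dagger_X\colon \T GX\to G\T X$ is obtained by endowing $G\T X$ with a $(GX+T(-))$-algebra structure and invoking initiality. Explicitly, on the $GX$ summand one uses $G\eta_X$, and on the $TY$ summand the pairing $\langle F\mu_X\circ\lambda_{\T X},\;\tau_X\circ T\pi_2\rangle$, where $\eta,\mu$ are the unit and multiplication of $\T$ and $\tau\colon T\T\Ra\T$ is the canonical $T$-algebra structure on $\T$.

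In the fibration I would carry out the analogous construction fibrewise. By assumption $p$ is bicartesian, so $\lift{G}$ is a lifting of $G$; likewise $\lift{\T}$ is a lifting of $\T$ with unit $\lift{\eta}$, multiplication $\lift{\mu}$ and $T$-algebra $\lift{\tau}$ sitting above $\eta,\mu,\tau$, and $\lift{\T}\,\lift{G}R$ is the initial $(\lift{G}R+\lift{T}(-))$-algebra in the fibre $\P_{pR}$. Endow $\lift{G}\,\lift{\T}R$ with an algebra for this functor using $\lift{G}\,\lift{\eta}_R$ on the first summand and $\langle \lift{F}\,\lift{\mu}_R\circ\lift{\lambda}_{\lift{\T}R},\;\lift{\tau}_R\circ\lift{T}\pi_2\rangle$ on the second. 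Initiality in the fibre then yields a unique algebra morphism $\lift{\lambda^\dagger}_R\colon \lift{\T}\,\lift{G}R\to\lift{G}\,\lift{\T}R$; naturality in $R$ is routine from the naturality of $\lift{\lambda}$ and of the structure maps of $\lift{\T}$.

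To check that $\lift{\lambda^\dagger}$ sits above $\lambda^\dagger$, apply $p$. Because $p$ strictly preserves the bicartesian structure, and because each of $\lift{T},\lift{F},\lift{\T},\lift{G},\lift{\eta},\lift{\mu},\lift{\tau},\lift{\lambda}$ projects to its unlifted counterpart, the image under $p$ of the fibrewise $(\lift{G}R+\lift{T}(-))$-algebra on $\lift{G}\,\lift{\T}R$ is precisely the base-level $(GX+T(-))$-algebra on $G\T X$ used to define $\lambda^\dagger_X$. Uniqueness of the induced initial-algebra morphism then forces $p(\lift{\lambda^\dagger}_R)=\lambda^\dagger_X$. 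The main technical point, and the step that would need the most care, is to justify that $\lift{\T}$ really is obtained fibrewise as such an initial algebra and that $p$ strictly preserves the colimits involved in its construction; once this is established, everything else follows formally from universal properties.
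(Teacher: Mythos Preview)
Your approach is essentially the paper's: construct $\lift{\lambda^\dagger}$ by replaying in $\P$ the initial-algebra definition of $\lambda^\dagger$, then use uniqueness of $\lambda^\dagger_X$ in the base to conclude that $p(\lift{\lambda^\dagger}_R)=\lambda^\dagger_X$.

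Two points of correction. First, the phrase ``in the fibre $\P_{pR}$'' is wrong: the functor $Y\mapsto \lift{G}R+\lift{T}Y$ is not an endofunctor on any fibre (since $\lift{T}$ moves between fibres), and $\lift{\T}\,\lift{G}R$ lives in $\P_{\T GX}$, not $\P_X$. The initiality you invoke is in the total category $\P$, not fibrewise; once you fix this the rest of your argument stands unchanged.

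Second, you correctly flag the crux---that $\lift{\T}$ arises from free $\lift{T}$-algebras and that $p$ sends these to free $T$-algebras---but leave it unproved. The paper dispatches this with a short lemma: since the fibration is bicartesian, $p$ has a right adjoint $\mathbf{1}\colon\B\to\P$ (the fibrewise terminal object), which lifts to an adjunction $\Alg(p)\dashv\Alg(\mathbf{1})$ between $\Alg(\lift{T})$ and $\Alg(T)$. As a left adjoint, $\Alg(p)$ preserves initial objects, so the free $\lift{T}$-algebra on $P\in\P_X$ sits above the free $T$-algebra on $X$, and $\lift{\T}$ is a lifting of $\T$ with its unit, multiplication, and $T$-algebra structure all lifting the corresponding base data. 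This is cleaner than arguing about preservation of the transfinite colimits building the initial algebra.
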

For a bialgebra $(X,\alpha, \langle \xi, id \rangle)$, the existence
of $\lift{\lambda^\dagger}$ ensures, via Theorem \ref{thm:big-2-cell},
compatibility w.r.t.\ $\pred{(\lift{F}\times \Id)}{\langle
  \xi,id\rangle }$, which is not exactly
$\pred{\lift{F}}{\xi}$. However, this difference is harmless in poset
fibrations: coalgebras for the two functors coincide, and for any
pointed functor $A$ compatible with $\pred{(\lift{F}\times
  \Id)}{\langle \xi,id\rangle }$, every
$\pred{\lift{F}}{\xi}$-invariant up to $A$ is also an
$\pred{(\lift{F}\times \Id)}{\langle \xi,id\rangle }$-invariant up to
$A$. 

\section{Examples}
\label{sec:examples}

\subsection{Inclusion of weighted automata}\label{ssec:weighted}

\newcommand{\sem}{\mathbb{S}}

To illustrate how to instantiate the above framework, we consider
weighted automata.  We first give a short description of their
coalgebraic treatment~\cite{Bonchi12}.
For a semiring $\sem$ and a set $X$, we denote by $\sem^X_\omega$ the
set of functions $f \colon X \rightarrow \sem$ with finite support.
% ,
% that is, such that $f(x) \neq 0$ for finitely many $x$. 
These
 functions can be thought of as linear combinations $\sum_{x\in
   X}f(x)\cdot x$, and in fact
$\sem^-_\omega\colon \Set \to \Set$ is the monad
sending each set $X$ to the free semi-module generated by $X$.

A \emph{weighted automaton} over a semiring $\sem$ with alphabet $A$
is a pair $(X,\langle o,t\rangle)$, where $X$ is a set of states,
$o\colon X \to \sem$ is an output function associating to each state
its output weight and $t\colon X \to (\sem^X_\omega)^A$ is a
weighted transition relation. Denoting by $F$ the functor $\sem \times
(-)^A$, weighted automata are thus coalgebras for the composite
functor $F\sem^-_\omega$. By the generalised powerset
construction~\cite{SilvaBBR10}, they induce bialgebras for the functor
$F$, the monad $\sem^-_\omega$, and the distributive law $\lambda
\colon \sem^-_\omega F \Rightarrow F\sem^-_\omega$ given for all sets
$X$ by $\lambda_X(\sum r_i(s_i,\varphi_i)) = \langle \sum r_i s_i,
\lambda a . \sum r_i \varphi_i(a) \rangle$.  Indeed every $(X,\langle
o,t\rangle)$ induces a bialgebra $(\sem^X_\omega, \mu, \langle
o^{\sharp},t^{\sharp} \rangle )$ where $\mu$ is the multiplication of
$\sem^-_\omega$ and 
% $\langle o^{\sharp},t^{\sharp} \rangle= (\sem
% \times \mu^A) \circ \lambda \circ (\sem^{ \langle o,t
%   \rangle})$. Intuitively,
$\langle o^{\sharp},t^{\sharp} \rangle\colon \sem^X_\omega \to \sem
\times (\sem^X_\omega)^A $ is the linear extension of $\langle
o,t\rangle$, defined as 
%$(\sem \times \mu^A) \circ \lambda \circ(\sem^{ \langle o,t \rangle}_{\omega})$.
$(F \mu) \circ \lambda \circ(\sem^{ \langle o,t \rangle}_{\omega})$.

For a concrete example we take the semiring $\R^+$ of positive real
numbers. A weighted automaton is depicted on the left below: arrows
$x\tr{a,r}y$ mean that $t(x)(a)(y)=r$ and arrows $x
\stackrel{r}{\Rightarrow}$ mean that $o(x)=r$.
\vspace{-.7em}
\begin{align}
\label{ex:weighted-aut}
\begin{split}
  \quad\nfa{\xymatrix @C=.8em @R=1em {\\\\ %
           \state{x} \ar@{=>}[d]_{0} \ar@/^/[rr]^{a,1}  && %
          \state{y} \ar@{=>}[d]^{1} \ar@(ul,ur)^{a,1}
          \ar@/^/[ll]^{a,1}\\ & &  }}
      &\qquad
      \dfa{\xymatrix @C=1em @R=.8em { 
       &&&\\%
          \state{x} \ar@{=>}[u]^{0} \ar[r]^{a} & \state{y} \ar@{=>}[u]^{1} \ar[r]^{a} &\state{x{+}y} \ar@{=>}[u]^{1} \ar[r]^{a} &\cdots \\%
          \state{y} \ar@{.}[u] \ar@{=>}[d]_{0} \ar[r]_(0.4){a} &
          \state{x{+}y} \ar@{.}[u] \ar@{=>}[d]_{1} \ar[r]_(0.4){a}
          &\state{x{+}2y} %\ar@{.}[u] 
          \ar@{=>}[d]_{2} \ar[r]_(0.5){a} & \cdots \\ &&& }}
\end{split}
\end{align}
On the right is depicted (part of) the corresponding bialgebra: states
are elements of $(\R^+)_\omega^X$ (hereafter denoted by $v,w$), arrows
$v\tr{a}w$ mean that $t^{\sharp}(v)(a)=w$ and arrows $v
\stackrel{r}{\Rightarrow}$ mean that $o^{\sharp}(v)=r$. %The language $\bb{x}$
%accepted by $x$ maps the word $a^n$ into the $n$-th
%Fibonacci number.

\smallskip

Whenever $\sem$ carries a partial order $\leq$, one can take the following lifting
$\bar{F} \colon \Rel \rightarrow \Rel$ of $F$
%\colon \Set \to \Set
defined for $R\subseteq X^2$ by:
$$  \{((r,\varphi),(s,\psi)) \mid r \leq s\ \land\  \forall
  a. \varphi(a) \mathrel R \psi(a)\}\subseteq (FX)^2.
%FX {\times} FX
$$

\newcommand{\W}{\pred{\lift{F}}{\langle o^{\sharp},t^{\sharp}\rangle}}
Then the functor $\W=\langle o^{\sharp},t^{\sharp} \rangle^* \circ
\bar{F}\colon \Rel_X \to \Rel_X$ maps a relation $R\subseteq X^2$ into
\begin{align*}
  \{(x,y) \mid o^{\sharp}(x)\leq o^{\sharp}(y) \land \forall a.
  t^{\sharp}(x)(a) \mathrel R t^{\sharp}(y)(a) \}\enspace.
\end{align*}

The carrier of a final $\W$-coalgebra is a relation, denoted by $\precsim$, which we call \emph{inclusion}:
when $\sem$ is the Boolean semiring, it coincides with language inclusion of non-deterministic automata.

For any two $v,w\in\sem^X_\omega$, one can prove that $v\precsim w$ by
exhibiting a $\W$-invariant relating them. These invariants are
usually infinite, since there are infinitely many reachable states in
a bialgebra $\sem^X_\omega$, even for finite $X$. This is the case
when trying to check $x\precsim y$ in~\eqref{ex:weighted-aut}: we should
relate infinitely many reachable states.

In order to obtain finite proofs, we exploit the algebraic structure
of bialgebras and employ an up to context technique.
%
%We can use for that the canonical lifting of the monad
To this end, we use the canonical lifting of the monad
$\sem^-_\omega$, defined for all $R \subseteq X^2$ as
\begin{align*}
  \Rel(\sem^-_\omega)(R) &= \left\{\left(\sum r_ix_i,~\sum
      r_iy_i\right) \mid x_i \mathrel R y_i\right\}
\end{align*}

We prove that the endofunctor $\Con=\coprod_\mu \circ
\Rel(\sem^-_\omega) $ is $\W$-compatible by
Theorem~\ref{thm:big-2-cell}: it suffices to check that for any
relation $R$ on $X$, the restriction of
$\lambda_X{\times}\lambda_X$ to $\Rel(\sem^-_\omega)\bar{F} (R)$
corestricts to $\bar{F}\Rel(\sem^-_\omega)(R)$. This is the case when
for all $n_1, m_1, n_2, m_2 \in \sem$ such that $n_1 \leq m_1$ and
$n_2 \leq m_2$, we have (a) $n_1 + n_2 \leq m_1 + m_2$ and (b) $n_1
\cdot n_2 \leq m_1 \cdot m_2$.  These two conditions are satisfied,
e.g., in the Boolean semiring or in $\R^+$ and thus, in
these cases, we can prove
inclusion of automata using $\W$-invariants up to $\Con$. For example, in~\eqref{ex:weighted-aut}, the
relation $R=\{(x,y),(y,x{+}y)\}$ is a $\W$-invariant up to $\Con$ (to check this, just observe that $(x{+}y, x{+}2y)\in \Con(R)$). This finite
relation thus proves $x \precsim y$.

\smallskip

Unfortunately, condition~(b) fails for the semiring $\R$ of (all) real
numbers. Nevertheless, our framework allows us to define another up-to technique, 
which we call ``up to
\emph{monotone} contextual closure''. It is obtained by composing $\coprod_\mu$ and 
a non-canonical lifting of $\R^-_\omega$:
\begin{align*}
  \bar{\R^-_\omega}(R) &= \left\{\left(\sum r_ix_i,\sum r_iy_i\right)
    \mid
    \begin{array}{r}
      r_i \geq 0 \Rightarrow x_i \mathrel R y_i\\  
      r_i < 0 \Rightarrow y_i \mathrel R x_i
    \end{array}\right\}
\end{align*}
The restriction of $\lambda_X \times
\lambda_X$ to $\bar{\R^-_\omega}\bar{F} (R)$ corestricts to
$\bar{F}\bar{\R^-_\omega}(R)$.  Therefore, by
Theorem~\ref{thm:big-2-cell}, the monotone contextual closure is
$\W$-compatible.
%

%%%%%%%%%%%%%%%%%%%%%%%%%%%%%%%%%%%%%%%%%%%%%%%%%%%%%%%%%%%%%%%%%%%%%%%%

%
%
\subsection{Divergence of processes}\label{ssec:divergence}

%
% We now show how up-to techniques can be useful to work with unary
% predicates, using the fibration $p:\Pred\to\Set$ from Example
% \ref{ex:pred-fib}.
Up-to techniques can be instrumental in proving unary predicates. We
take the fibration $\Pred\to\Set$ and we focus on the
\emph{divergence} predicate $\nu u. \diamondtau u$ defined on
LTSs. The latter are coalgebras $\xi \colon X \to F(X)$ for the
$\Set$-functor $FX=\pow(L\times X)$, where
$L=\{a,\bar{a},b,\bar{b}\dots, \tau\}$ is a set of labels containing a
special symbol $\tau$ and $\pow$ is the \emph{finite} powerset
functor.
We lift $F$ to $\lift{F}^{\diamondtau}\colon \Pred \to \Pred$,
defined for all sets $X$ as
%$\lift{FX}^{\diamondtau}, \lift{FX}^{\diamondtau} \colon \Pred_X \to \Pred_{FX}$
$$\lift{F}^{\diamondtau}_X(P\subseteq X) = \{ S \in  FX \mid  \exists (\tau,x)\in S, x\in P \}.$$
The final $\pred{\lift{F}^{\diamondtau}}{\xi}$-coalgebra consists
precisely of all the states in $X$ satisfying $\nu u. \diamondtau
u$.  Hence, to prove that a state $p$ diverges, it suffices to
exhibit an $\pred{\lift{F}^{\diamondtau}}{\xi}$-invariant containing
$p$.

When the LTS is specified by some process algebra, such invariants
might be infinite. Suppose for instance that we have a parallel
operator defined by the following GSOS rules and their symmetric
counterparts:
\begin{align*}
  \frac{x\tr{l}x'}{x|y\tr{l}x'|y} &&\frac{x\tr{a}x' \quad
    y\tr{\bar{a}}y'}{x|y\tr{\tau}x'|y'}\enspace.
\end{align*}
Consider the processes $p\tr{a}p|p$ and $q\tr{\bar a}q$.  To prove
that $p|q$ diverges, any invariant should include all the states that
are on the infinite path $p|q \tr{\tau} (p|p)|q \tr{\tau}\dots$.

Instead, an intuitive proof would go as follows: assuming 
that $p|q$ diverges one has to prove that the $\tau$ successor $(p|p)|q$ also
diverges. Rather than looking further for the $\tau$-successors of
$(p|p)|q$, observe that
\begin{enumerate}[(a)]
\item since $p|q$ diverges by hypothesis, then also $(p|q)|p$
  diverges, and
\item since $(p|q)|p$ is bisimilar (i.e., behavioural equivalent) to
  $(p|p)|q$, then also $(p|p)|q$ diverges.
\end{enumerate}
Formally, (b) corresponds to using the functor $\Beh$ from
Section~\ref{ssec:beh:compat}. For (a) we define the \emph{left
  contextual closure} functor as $\Con^{\ell}(P{\subseteq}X)=\{(\ldots(x|y_1)|\ldots) |y_n
\mid x\in P,~y_i\in X\}$. Indeed, it is easy to see that $P=\{p|q\}$ is
an $\pred{\lift{F}^{\diamondtau}}{\xi}$-invariant up to $\Beh\circ
\Con^{\ell}$, i.e, $P\subseteq \pred{\lift{F}^{\diamondtau}}{\xi}
\circ \Beh\circ \Con^{\ell}(P)$.

In order to prove soundness of this ``up to behavioural equivalence
and left contextual closure'', we show compatibility of $\Beh$ and
$\Con^{\ell}$ separately.
For the former, we note that $\lift{F}^{\diamondtau}$ is defined
exactly as in coalgebraic modal logic
\cite{DBLP:journals/cj/CirsteaKPSV11,HasuoCKJ:coindFib} and thus
$(\lift{F}^{\diamondtau}, F)$ is a fibration map:
Theorem~\ref{theo:beh} applies.
The functor $\Con^{\ell}$ is defined just as $\Con$, but instead of
the canonical lifting of the endofunctor for binary operations
$T(X)=X{\times}X$ we use the predicate lifting
$\lift{T}(P\subseteq X) = P{\times} X\subseteq TX$.
The conditions of Lemma \ref{lm:gsos} are met for the distributive law
given by the above GSOS rules (see~\appref{app:divergence}).
The functor $\Con^{\ell}$ can be seen to be the composition
$\textstyle{\coprod}_\mu\circ\lift{\mathbb{T}}$ where
$\lift{\mathbb{T}}$ is the free monad on $\lift{T}$ and $\mu$ is the
multiplication of $\mathbb{T}$. We can thus apply
Theorem~\ref{thm:big-2-cell} and obtain its compatibility.

\subsection{Equivalence of nominal automata}
\label{ssec:nominal-automata}

Nominal automata and variants~\cite{BojanczykKL11} have been
considered as a means of studying languages over infinite alphabets,
but also for the operational semantics of process
calculi~\cite{MontanariP05}. We refer the reader to~\cite{Pitts-book}
for background on the category $\Nom$ of nominal sets. These are sets equipped with actions of the group of permutations on a countable
set $\At$ of names, satisfying an additional finite support condition.

Consider the nominal automaton below. The part 
%which is 
reachable from state $*$ corresponds to~\cite[Example~I.1]{BojanczykKLT13}. 
\vspace{-.5em}
\begin{equation*}
  \nfa{\xymatrix @C=.8em @R=1em { %
      \state{{*}}\ar@(ul,ur)^-{a}  \ar[rr]^-{a}&  & %
      \state{a}  \ar@(ul,ur)^-{b}\ar[rr]^-{a} & & \state{\bar{\top}}\ar@(ul,ur)^{a}\\
      \state{\star}\ar[rru]^-{a}\ar[rr]^-{b} & & \state{a'}\ar@(rd,ru)_-{b}\ar[u]^-{a} & &
    }}
\end{equation*}
It is important to specify how to read this drawing: the represented
nominal automaton has as state space the orbit-finite nominal set
$\{*\}+\{\star\}+\At+\At'+\{\top\}$, where $\At'$ is a copy of
$\At$. It suffices in this case to give only one representative of
each of the five orbits: we span all the transitions and states of the
automaton by applying all possible finite permutations to those
explicitly written. For example, the transition $a\tr{c} a$ is
obtained from $a\tr{b}a$ by applying the transposition $(b\ c)$ to the
latter.

With this semantics in mind, one can see that the state $*$ accepts
the language of words in the alphabet $\At$ where some letter appears
twice: it reads a word in $\At$, then it nondeterministically guesses
that the next letter will appear a second time and verifies that this
is indeed the case.  The state $\star$ accepts the same language, in a
different way: it reads a first letter, then guesses if this letter
will be read again, or, if a distinct letter---nondeterministically
chosen---will appear twice.

Formally, nominal automata are $F\Pf$-coalgebras $\langle o, t\rangle$ 
%on $\{*\}+\{\star\}+\At+\At'+\{\top\}$, 
where $F\colon \Nom\to\Nom$ is given by $FX=2\times X^\At$ and the monad $\Pf$ is the
finitary version of the power object functor in the category of
nominal sets (mapping a nominal set to its finitely-supported
orbit-finite subsets). In our example, $o(a)=0$ and $t(a)$ is the following map:
\begin{equation*}
  t(a) = \left\{
  \begin{array}{lc}
    b\mapsto \{a\} & b\# a\\
    a\mapsto \{\top\} & \\
\end{array}
\right.
\end{equation*}

By the generalised powerset construction~\cite{SilvaBBR10}, $\langle o,t \rangle$ induces a deterministic nominal automata, which is a bialgebra on $\Pf(X)$ with the algebraic structure given by union. To prove that $*$ and $\star$ accept the same language, we should play
the bisimulation game in the determinisation of the
automaton. However, the latter has \emph{infinitely} many orbits and a rather
complicated structure. A bisimulation constructed like this will thus
have infinitely many orbits.
Instead, we can show that the orbit-finite relation spanned by the
four pairs
$$(\{*\},\{\star\}),~ (\{a\},\{a,a'\}),~ (\{\top\},\{ a,\top\}),~ (\{*\},\A') $$
is a bisimulation up to congruence (w.r.t.\ union).

The soundness of this technique is established in
\appref{ssec:nom-aut-ex} using the fibration $\Rel(\Nom)\to\Nom$ of
equivariant relations.
We derive the compatibility of contextual closure using
Theorem~\ref{thm:big-2-cell}, and compatibility of the transitive,
symmetric, and reflexive closures using
Proposition~\ref{prop:modular-compatibility-with-bisim}.
Compatibility of congruence closure follows from
Proposition~\ref{prop:modularity}\eqref{it:compo}.

\section{Compositional Predicates}\label{sec:compositional}

In this section we consider a structured way of defining coinductive
predicates, by composing lifted functors.  Assume a fibration
$p\colon \P\to\B$ and a functor $\otimes \colon \P\times_\B\P\to\P$. Given
two liftings $\bar{F_1},\bar{F_2} \colon \P \to \P$ of the same
endofunctor $F$ on $\B$, one can then define a \emph{composite}
lifting $\otimes \circ \langle \bar{F_1}, \bar{F_2} \rangle$, which we
denote by $\bar{F_1} \otimes \bar{F_2}$.  We will instantiate this to
the fibration $\Rel \rightarrow \Set$ with relational composition for
$\otimes$, to define simulation and weak bisimulation as coinductive
predicates.

One advantage of this approach is that the compatibility of
up-to-context can be proved in a modular way. %compositionally.

\newcommand{\thmcomplambda}{
  Let $\lift{T}$ be a lifting of $T$ having a
  $\gamma \colon \lift{T}\otimes\Ra\otimes\lift{T}^2$ above $\Id\colon T\Ra T$.
 Let both $\lift{F_1}$ and
  $\lift{F_2}$ be liftings of $F$.  If $\lambda_1\colon
  \lift{T}\,\lift{F_1} \Rightarrow \lift{F_1}\,\lift{T}$ and
  $\lambda_2 \colon \lift{T}\,\lift{F_2} \Rightarrow
  \lift{F_2}\,\lift{T}$ sit above the same $\lambda \colon TF
  \Rightarrow FT$, then there
  exists $\lift{\lambda} \colon \lift{T}(\lift{F_1} \otimes
  \lift{F_2}) \Rightarrow (\lift{F_1} \otimes \lift{F_2}) \lift{T}$
  above $\lambda$.
}
\begin{theorem}\label{thm:comp-lambda}
\thmcomplambda
\end{theorem}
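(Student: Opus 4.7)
The proof plan is to construct $\lift{\lambda}$ by pasting together two natural transformations, one obtained by whiskering $\gamma$ with the pairing $\langle \lift{F_1}, \lift{F_2}\rangle$, and one obtained by pairing $\lambda_1$ and $\lambda_2$ inside the pullback $\P \times_\B \P$ and whiskering with $\otimes$. Unfolding the definition $\lift{F_1}\otimes \lift{F_2} = \otimes \circ \langle \lift{F_1}, \lift{F_2}\rangle$, the source of $\lift{\lambda}$ is $\lift{T}\circ \otimes \circ \langle \lift{F_1}, \lift{F_2}\rangle$ and its target is $\otimes \circ \langle \lift{F_1}, \lift{F_2}\rangle \circ \lift{T}$; I will connect them through the intermediate functor $\otimes \circ \langle \lift{T}\lift{F_1}, \lift{T}\lift{F_2}\rangle$.

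For the first leg, whiskering $\gamma \colon \lift{T}\otimes \Ra \otimes \lift{T}^2$ on the right with $\langle \lift{F_1}, \lift{F_2}\rangle \colon \P \to \P \times_\B \P$ yields
\[
\gamma\langle \lift{F_1}, \lift{F_2}\rangle \colon \lift{T}(\lift{F_1}\otimes \lift{F_2}) \Ra \otimes \langle \lift{T}\lift{F_1}, \lift{T}\lift{F_2}\rangle,
\]
using the identity $\lift{T}^2\circ\langle \lift{F_1}, \lift{F_2}\rangle = \langle \lift{T}\lift{F_1}, \lift{T}\lift{F_2}\rangle$ coming from the universal property of the pullback. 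Since $\gamma$ sits above $\Id_T$ by hypothesis, and $\langle \lift{F_1}, \lift{F_2}\rangle$ is a functor above $F$, this whiskering sits above $\Id_{TF}$.

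For the second leg, since $\lambda_1$ and $\lambda_2$ both sit above the \emph{same} $\lambda$, the componentwise pair $\langle \lambda_1, \lambda_2\rangle$ constitutes a natural transformation
\[
\langle \lambda_1, \lambda_2\rangle \colon \langle \lift{T}\lift{F_1}, \lift{T}\lift{F_2}\rangle \Ra \langle \lift{F_1}\lift{T}, \lift{F_2}\lift{T}\rangle
\]
in $\P \times_\B \P$, since a morphism in the pullback is precisely a pair of $\P$-morphisms projecting to the same morphism in $\B$. Whiskering on the left with $\otimes$, which is a fibred functor above $\Id_\B$, produces $\otimes\langle \lambda_1, \lambda_2\rangle$ above $\lambda$.

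Setting $\lift{\lambda} \triangleq \otimes\langle \lambda_1, \lambda_2\rangle \circ \gamma\langle \lift{F_1}, \lift{F_2}\rangle$, the composite is a natural transformation of the required type sitting above $\lambda \circ \Id_{TF} = \lambda$. The one delicate point is checking that $\langle \lambda_1, \lambda_2\rangle$ genuinely lives in $\P \times_\B \P$: this is exactly the content of the hypothesis that $\lambda_1$ and $\lambda_2$ sit above \emph{one and the same} $\lambda$, without which the pairing would fail to yield morphisms in the pullback, and the construction would break down.
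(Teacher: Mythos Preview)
Your proof is correct and follows essentially the same approach as the paper: the paper also defines $\lift{\lambda}$ as the composite $\otimes\langle \lambda_1,\lambda_2\rangle\circ \gamma\langle \lift{F_1},\lift{F_2}\rangle$, noting that $\langle \lambda_1,\lambda_2\rangle$ is a natural transformation $\lift{T}^2\langle\lift{F_1},\lift{F_2}\rangle\Ra\langle\lift{F_1},\lift{F_2}\rangle\lift{T}$ sitting above $\lambda$ and that $\gamma\langle \lift{F_1},\lift{F_2}\rangle$ sits above the identity. Your explicit justification for why $\langle\lambda_1,\lambda_2\rangle$ lands in $\P\times_\B\P$---namely, that both components project to the same $\lambda$---makes more visible a point the paper leaves implicit.
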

Notice that the canonical lifting $\Rel(T)$ always satisfies the first
hypothesis of the theorem when $\otimes$ is relational composition.

\subsection{Simulation up-to}
\label{ssec:compositional:simulation}

We recall simulations for coalgebras as introduced in~\cite{HughesJ04}.
An endofunctor $F$ on $\Set$ is said to be \emph{ordered} if it factors
through the forgetful functor from $\Pre$ (the category of preorders)
to $\Set$: this means that for every $X$, $FX$ is equipped with a
preorder $\sqsubseteq_{FX}$. An ordered functor gives rise to a
\emph{constant} relation lifting $\lift{\sqsubseteq}$ of $F$ defined
as $\lift{\sqsubseteq}(R \subseteq X^2) = {\sqsubseteq_{FX}}$. Then the
\emph{lax relation lifting} $\laxlift{F}$ is defined as
\begin{align*}
  \laxlift{F} &= {\lift\sqsubseteq} \otimes \Rel(F) \otimes
  {\lift\sqsubseteq}
\end{align*}
where $\otimes$ is relational composition. For a coalgebra $\xi \colon
X \to FX$, the coalgebras for the endofunctor $\xi^* \circ
\laxlift{F}_X$---which we denote as $\pred{\laxlift{F}}{\xi}$---are
called \emph{simulations}; the final one is called \emph{similarity}.
We list two examples of ordered functors and their associated notion
of simulations, and refer to~\cite{HughesJ04} for many more.

\begin{example}\label{ex:ordered-functors}
  For weighted automata on a semiring $\sem$ equipped with a partial
  order $\leq$, the functor $FX=\sem \times X^A$ is ordered with
  $\sqsubseteq_{FX}$ defined as $(s,\phi) \sqsubseteq_{FX} (r,\psi)$
  iff $s\leq r$ and $\phi=\psi$. It is immediate to see that
  $\laxlift{F}$ coincides with the lifting $\lift{F}$
  defined in Section~\ref{ssec:weighted}.

  For LTSs, the functor $FX=\pow(A \times X)$ is ordered with subset
  inclusion $\subseteq$. In this case a simulation is a relation
  $R\subseteq X^2$ such that for all $(x,y) \in R$: if $x
  \xrightarrow{a} x'$ then there exists $y'$ such that $x'
  \xrightarrow{a} y'$ and $x' R y'$.
\end{example}

An ordered functor $F$ is called \emph{stable} if $(\laxlift{F},F)$ is a
fibration map~\cite{HughesJ04}. Since polynomial functors are stable, as well as the one for
LTSs~\cite{HughesJ04}, the following results hold for the coalgebras in
Example~\ref{ex:ordered-functors}.
\begin{proposition}
  If $F$ is a stable ordered functor, then $\Beh$, $\Slf$, and $\Tra$
  are $\pred{\laxlift{F}}{\xi}$-compatible.
\end{proposition}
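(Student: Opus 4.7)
The three compatibility claims follow by recombining tools that are already in place.

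Compatibility of $\Beh$ is an immediate application of Theorem~\ref{theo:beh}: by definition, stability of $F$ means exactly that $(\laxlift{F},F)$ is a fibration map, so with $f\colon(X,\xi)\to(\Omega,\omega)$ the unique morphism to a final $F$-coalgebra, $\Beh=f^*\circ\coprod_f$ is $\pred{\laxlift{F}}{\xi}$-compatible.

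For $\Tra$ and $\Slf$, both functors are assembled from relational composition $\otimes_X$ by operations that preserve compatibility. The transitive closure is $\Tra=\coprod_{i\ge 0}(-)^i$ with $(-)^{i+1}=\Id\otimes(-)^i$; the self-closure is $\Slf=\widetilde S\otimes\Id\otimes\widetilde S$, where $\widetilde S$ is the constant functor into similarity $S$, which is the carrier of the final $\pred{\laxlift{F}}{\xi}$-coalgebra and thus compatible by Proposition~\ref{prop:modularity}\eqref{it:constant}. Once $\otimes_X$ itself is known to be $\pred{\laxlift{F}}{\xi}$-compatible, Proposition~\ref{prop:modularity}\eqref{it:compo},\eqref{it:pairing},\eqref{it:id},\eqref{it:coproduct} then assembles both $\Tra$ and $\Slf$ out of it.

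The remaining and technically substantial step is to verify the hypothesis of Proposition~\ref{prop:modular-compatibility-with-bisim} for $\otimes$, i.e., condition $(*{*}*)$
\[
\laxlift{F}(R)\otimes\laxlift{F}(S)~\subseteq~\laxlift{F}(R\otimes S)\enspace.
\]
Unfolding $\laxlift{F}={\lift\sqsubseteq}\otimes\Rel(F)\otimes{\lift\sqsubseteq}$ and using transitivity of $\sqsubseteq$ to collapse the two adjacent $\sqsubseteq$-blocks in the middle, the left-hand side is contained in $\sqsubseteq\otimes\Rel(F)(R)\otimes\sqsubseteq\otimes\Rel(F)(S)\otimes\sqsubseteq$. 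The main obstacle is the central $\sqsubseteq$-step that separates the two $\Rel(F)$ pieces: it must be displaced to the boundary so that the middle reduces to $\Rel(F)(R)\otimes\Rel(F)(S)=\Rel(F)(R\otimes S)$ (the latter equality being the weak-pullback condition, which the ordered functors we consider satisfy). This is precisely the content of stability in Hughes--Jacobs' analysis of $\laxlift{F}$: stability of $F$ forces the canonical lifting $\Rel(F)$ to commute with $\sqsubseteq$ in the right way, so that any $\Rel(F)(R)$-step followed by a $\sqsubseteq$-step can be replaced by a $\sqsubseteq$-step followed by a $\Rel(F)(R)$-step, thus sliding the middle $\sqsubseteq$ to the edges. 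Once this pointwise inclusion is checked, $(*{*}*)$ is established, $\otimes_X$ is compatible by Proposition~\ref{prop:modular-compatibility-with-bisim}, and the compatibility of $\Tra$ and $\Slf$ follows by the modular assembly above.
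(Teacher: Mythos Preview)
Your overall strategy matches the paper's exactly: $\Beh$ via Theorem~\ref{theo:beh} (stability is by definition the fibration-map condition), and $\Slf$, $\Tra$ assembled modularly from $\otimes_X$ via Proposition~\ref{prop:modularity} once condition~$(*{*}*)$ is established for $\laxlift{F}$.

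The divergence is in how $(*{*}*)$ itself is obtained. The paper does not attempt to prove it; it simply invokes \cite[Lemma~5.3]{HughesJ04}, where it is shown that stable ordered functors satisfy $(*{*}*)$. Your sketch, by contrast, tries to argue it directly and has two shaky steps. First, you appeal to $\Rel(F)(R)\otimes\Rel(F)(S)=\Rel(F)(R\otimes S)$ as ``the weak-pullback condition, which the ordered functors we consider satisfy''---but weak pullback preservation is \emph{not} part of the hypothesis ``stable ordered functor'', so you are importing an assumption the statement does not grant. Second, your ``sliding'' reading of stability (a $\Rel(F)(R)$-step followed by a $\sqsubseteq$-step can be swapped for $\sqsubseteq$ followed by $\Rel(F)(R)$) is not what stability says: stability is the fibration-map equation $\laxlift{F}\circ f^* = (Ff)^*\circ\laxlift{F}$, a statement about how the \emph{lax} lifting interacts with reindexing, not a commutation of the canonical lifting $\Rel(F)$ with the order. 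The Hughes--Jacobs argument for $(*{*}*)$ proceeds differently (via tabulations/graphs of the relations involved and the fibration-map property of $\laxlift{F}$ applied to projections) and does not go through your two intermediate claims. So either cite their lemma, as the paper does, or reproduce their actual proof; the present sketch does not stand on its own.
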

\begin{proof}
  Compatibility of $\Beh$ comes from
  Theorem~\ref{theo:beh}. Compatibility of $\Slf$ and $\Tra$ comes
  from Proposition~\ref{prop:modular-compatibility-with-bisim}: stable
  functors satisfy
  $(*{*}*)$~\cite[Lemma~5.3]{HughesJ04}. 
\end{proof}

We proceed to consider the compatibility of up to context, for which
we assume an abstract GSOS specification $\lambda \colon T
(F{\times}\Id) \Rightarrow F\T$. By Theorem~\ref{thm:comp-lambda}, proving compatibility w.r.t.\
$\pred{\laxlift{F}}{\xi}$ is reduced to proving compatibility w.r.t. its components $\Rel(F)$ and $\lift{\sqsubseteq}$. 
For the former,
compatibility comes immediately from the proof of Corollary~\ref{cor:context}.
For
the latter, we need to assume that the abstract GSOS specification is
\emph{monotone}, i.e, such that for any set $X$, the restriction of
$\lambda_X \times \lambda_X$ to $\Rel(T) (\sqsubseteq_{FX}\times
\Delta_X)$ corestricts to $\sqsubseteq_{F\T X}$.  If $T$ is a
polynomial functor representing a signature, then this means that for
any operator $\sigma$ (of arity $n$) we have
$$
\frac{b_1 \sqsubseteq_{FX} c_1 \quad\ldots\quad b_n \sqsubseteq_{FX} c_n}
 {\lambda_X(\sigma(\mathbf{b,x})) \sqsubseteq_{F \T X} \lambda_X(\sigma(\mathbf{c,x}))}
$$
where $\mathbf{b,x} = (b_1,x_1), \ldots, (b_n,x_n)$ with $x_i\in X$
and similarly for $\mathbf{c,x}$.  If $\sqsubseteq$ is the order on
the functor for LTSs, monotonicity corresponds to the \emph{positive
  GSOS} format~\cite{MS10} which, as expected, is
GSOS~\cite{BloomCT88:GSOS} without negative premises.
Monotonicity turns out to be precisely the condition needed to apply 
Lemma~\ref{lm:gsos}, yielding
\newcommand{\propsim}{Let $\lambda$ be a monotone abstract GSOS specification and $(X, \alpha, \langle \xi, id\rangle )$ be a $\lambda^{\dagger}$-bialgebra.
Then $\Con$ is $\pred{(\laxlift{F}\times \Id)}{\langle \xi, id\rangle}$-compatible. 
}  
\begin{proposition}\label{prop:sim}
  \propsim
\end{proposition}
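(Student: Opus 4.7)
The plan is to apply Theorem~\ref{thm:big-2-cell} to the bialgebra $(X,\alpha,\langle\xi,\id\rangle)$ for $\lambda^\dagger$, taking $\lift{T}=\Rel(T)$ and, for the behaviour functor $F\times\Id$, the lifting $\laxlift{F}\times\Id$. This reduces the goal to producing a natural transformation $\lift{\lambda^\dagger}\colon \Rel(\T)(\laxlift{F}\times\Id)\Rightarrow(\laxlift{F}\times\Id)\Rel(\T)$ sitting above $\lambda^\dagger$. By Lemma~\ref{lm:gsos}, it suffices instead to exhibit a natural transformation
\begin{equation*}
  \lift{\lambda}\colon \Rel(T)(\laxlift{F}\times\Id)\Rightarrow \laxlift{F}\,\Rel(\T)
\end{equation*}
above the abstract GSOS specification $\lambda$ itself.

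To construct $\lift{\lambda}$, I would exploit the decomposition $\laxlift{F}=\lift{\sqsubseteq}\otimes\Rel(F)\otimes\lift{\sqsubseteq}$ and apply Theorem~\ref{thm:comp-lambda} iteratively (twice) to the threefold composition. The required hypothesis on $\Rel(T)$, namely a 2-cell $\Rel(T)\otimes\Rightarrow\otimes\,\Rel(T)^2$ above the identity, is automatic when $\otimes$ is relational composition, as remarked immediately after Theorem~\ref{thm:comp-lambda}. Consequently the task reduces to providing two lifts of $\lambda$: one against the canonical component $\Rel(F)$, and one against $\lift{\sqsubseteq}$.

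The lift against $\Rel(F)$ is $\Rel(\lambda)$; its existence follows from the 2-functoriality of $\Rel(-)$, exactly as in the proof of Corollary~\ref{cor:context}. The lift against $\lift{\sqsubseteq}$ is the one place where monotonicity is essential: by assumption, $\lambda_X\times\lambda_X$ maps $\Rel(T)(\sqsubseteq_{FX}\times\Delta_X)$ into $\sqsubseteq_{F\T X}$, which is exactly the statement that $\lambda$ restricts to a natural transformation $\Rel(T)(\lift{\sqsubseteq}\times\Id)\Rightarrow\lift{\sqsubseteq}\,\Rel(\T)$ sitting above $\lambda$. Naturality of these two pieces follows from the naturality of $\lambda$ together with the fact that the liftings involved are defined pointwise.

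The main obstacle I anticipate is bookkeeping: iterating Theorem~\ref{thm:comp-lambda} across the three-fold composition $\lift{\sqsubseteq}\otimes\Rel(F)\otimes\lift{\sqsubseteq}$ while correctly threading the $\times\Id$ factor imposed by the abstract GSOS shape. Once $\lift{\lambda}$ has been assembled in this way, Lemma~\ref{lm:gsos} produces $\lift{\lambda^\dagger}$ above $\lambda^\dagger$, and Theorem~\ref{thm:big-2-cell} then yields the compatibility of $\Con=\coprod_\alpha\circ\Rel(\T)$ with $\pred{(\laxlift{F}\times\Id)}{\langle\xi,\id\rangle}$, which is the claim.
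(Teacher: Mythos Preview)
Your approach is essentially the paper's, but with one ordering difference that turns into the very obstacle you flag at the end. You propose to assemble GSOS-shaped lifts of $\lambda$ for each factor of $\laxlift{F}$ and combine them via Theorem~\ref{thm:comp-lambda}, applying Lemma~\ref{lm:gsos} once afterwards. The paper instead applies Lemma~\ref{lm:gsos} \emph{per component} first and then combines the resulting lifts of $\lambda^\dagger$ via Theorem~\ref{thm:comp-lambda}. The reason this ordering matters is that Theorem~\ref{thm:comp-lambda} is stated for a symmetric $\lambda\colon TF\Rightarrow FT$, whereas the GSOS specification $\lambda\colon T(F\times\Id)\Rightarrow F\T$ is not of that shape; the induced $\lambda^\dagger\colon\T(F\times\Id)\Rightarrow(F\times\Id)\T$ is.

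Concretely, the paper decomposes $\laxlift{F}\times\Id$ (not just $\laxlift{F}$) as
\[
(\lift{\sqsubseteq}\times\lift{\Id})\otimes(\Rel(F)\times\Id)\otimes(\lift{\sqsubseteq}\times\lift{\Id}),
\]
where $\lift{\Id}$ is the constant-to-diagonal lifting of $\Id$, so that each factor is a lifting of $F\times\Id$. For the middle factor one has $\Rel(F)\times\Id=\Rel(F\times\Id)$, and the lift of $\lambda^\dagger$ is $\Rel(\lambda^\dagger)$ by 2-functoriality, as in Corollary~\ref{cor:context}. For the outer factors, monotonicity of $\lambda$ is exactly the statement that $\lambda$ lifts to $\Rel(T)(\lift{\sqsubseteq}\times\lift{\Id})\Rightarrow\lift{\sqsubseteq}\,\Rel(\T)$, and Lemma~\ref{lm:gsos} upgrades this to a lift of $\lambda^\dagger$. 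Theorem~\ref{thm:comp-lambda} (with $\lift{T}=\Rel(\T)$) then combines the three, and Theorem~\ref{thm:big-2-cell} concludes. So your ingredients are right; the ``$\times\Id$ threading'' you worry about is handled precisely by pushing Lemma~\ref{lm:gsos} inside the decomposition rather than outside.
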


\subsection{Weak bisimulation-up-to}\label{ssec:weak}
A \emph{weak bisimulation} is a relation $R \subseteq X^2$ on the
states of an LTS such that for every pair $(x,y) \in R$: (1) if $x
\xrightarrow{l} x'$ then $y \stackrel{l}{\Rightarrow} y'$ with
$(x',y')\in R$ and (2) if $y \xrightarrow{l} y'$ then $x
\stackrel{l}{\Rightarrow} x'$ with $(x',y')\in R$.  Here $\rightarrow$
and $\Ra$ are two LTSs, i.e., coalgebras for the functor $FX = \pow(L
{\times} X)$, and $\Ra$ is the \emph{saturation}~\cite{Milner89} of
$\to$.
Weak bisimilarity can alternatively be reduced to strong bisimilarity
on $\Rightarrow$, but the associated proof method is rather
tedious. To remain faithful to the above definition, we define weak
bisimulations via the following lifting of $F{\times}F$:
$$\lift{F \times F} = \rho \otimes \wlift{F \times F}\enspace,$$
where $\rho$ is the constant functor defined as $\rho(R \subseteq X^2) =
\{((U,V), (V,U)) \mid U,V \in FX\}$ and $\wlift{F \times F}$ is the
lax relation lifting of $F\times F$ for the ordering $(U_1, V_1)
[\supseteq \subseteq] (U_2,V_2)$ iff $U_2 \subseteq U_1$ and $V_1
\subseteq V_2$.%, for all $U_i,V_i \in FX$.

For an intuition, observe that an $F \times F$-coalgebra is a pair $\langle \xi_1, \xi_2\rangle \colon X \to FX \times FX$ of LTSs that we denote with $\to_1$ and $\to_2$. An invariant for $\pred{\wlift{F \times F}}{\langle \xi_1,\xi_2 \rangle}$ is a relation $R \subseteq X^2$ such that for each $(x,y) \in R$: (1) if $y \xrightarrow{l}_1 y'$ then $x \xrightarrow{l}_1 x'$ with $x'Ry'$, and (2)
if $x \xrightarrow{l}_2 x'$ then $y \xrightarrow{l}_2 y'$ with $x' R y'$. 
Composing with $\rho$ ``flips'' the LTSs $\to_1$ and $\to_2$:
an invariant for $\pred{\lift{F \times F}}{\langle \xi_1,\xi_2 \rangle}$ is now an $R \subseteq X^2 $ such that:
(1) if $y \xrightarrow{l}_1 y'$ then $x \xrightarrow{l}_2 x'$ with $x'Ry'$, and (2)
if $x \xrightarrow{l}_1 x'$ then $y \xrightarrow{l}_2 y'$ with $x' R y'$.
It is easy to see that for $\langle \xi_1,\xi_2 \rangle= \langle \to, \Ra \rangle$, coalgebras for $\pred{\lift{F \times F}}{\langle \xi_1,\xi_2 \rangle}$ are weak bisimulations and the final coalgebra is weak bisimilarity.

In \appref{app:proof6}, we show that $(\lift{F \times F}, F)$ is a fibration map and by Theorem~\ref{theo:beh} we now obtain the following.
\begin{corollary}
\label{cor:bhv-weak-bisim}
  $\Beh$ is $\pred{\lift{F \times F}}{\langle \xi_1,\xi_2\rangle}$-compatible.
\end{corollary}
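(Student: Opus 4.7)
The plan is to apply Theorem~\ref{theo:beh}, instantiated with $F \times F$ in place of $F$ and the coalgebra $\langle \xi_1, \xi_2\rangle \colon X \to FX \times FX$: once $(\lift{F \times F}, F \times F)$ is shown to be a fibration map, $\Beh = f^* \circ \coprod_f$ (where $f$ is the unique map to a final $F\times F$-coalgebra) is automatically $\pred{\lift{F \times F}}{\langle \xi_1,\xi_2\rangle}$-compatible. So the whole task reduces to verifying, for every $g \colon X \to Y$ and every $R \in \Rel_Y$, the equality $((F \times F)g)^* \lift{F \times F}(R) = \lift{F \times F}(g^* R)$.

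My first step would be to unfold the definition $\lift{F \times F} = \rho \otimes \wlift{F \times F}$ into a concrete form. Since $\rho(R)$ is independent of $R$ (it is the graph of the swap on $FX \times FX$), composing through $\rho$ simply swaps the second pair; unpacking $\wlift{F \times F} = \lift{[\supseteq \subseteq]} \otimes \Rel(F \times F) \otimes \lift{[\supseteq \subseteq]}$ then yields the explicit characterisation: $((P,Q),(C,D)) \in \lift{F \times F}(R)$ iff there exist $A_1 \subseteq Q$, $B_1 \supseteq P$, $C_1 \supseteq C$, $D_1 \subseteq D$ in $FX$ with $(A_1, C_1), (B_1, D_1) \in \Rel(F)(R)$. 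This reduces the fibration map condition to a concrete statement about existence of witnesses, which I can check inclusion by inclusion.

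The forward inclusion $\lift{F \times F}(g^* R) \subseteq ((F \times F)g)^* \lift{F \times F}(R)$ is the easy direction: given witnesses $A_1, B_1, C_1, D_1$ in $FX$, apply $Fg$ pointwise. Since $Fg$ preserves set inclusion (as $F = \pow(L \times -)$ acts by image) and $(\Rel(F), F)$ is a fibration map (because $F$ preserves weak pullbacks), the images $Fg A_1, Fg B_1, Fg C_1, Fg D_1$ serve as witnesses in $FY$ for the right-hand side.

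The reverse inclusion is where I expect the main obstacle. Given witnesses $A_1' \subseteq Fg Q$, $B_1' \supseteq Fg P$, $C_1' \supseteq Fg C$, $D_1' \subseteq Fg D$ in $FY$ satisfying the relatedness conditions, I must construct witnesses back in $FX$. For $A_1'$ and $D_1'$, which sit \emph{below} a set in the image of $Fg$, I select a preimage in $Q$ (resp.\ $D$) for each element of $A_1'$ (resp.\ $D_1'$), producing $A_1 \subseteq Q$ and $D_1 \subseteq D$ with $Fg A_1 = A_1'$ and $Fg D_1 = D_1'$. For $B_1'$ and $C_1'$, which sit \emph{above} $Fg P$ (resp.\ $Fg C$), I take $B_1 = P \cup \{\text{selected preimages of } B_1' \setminus Fg P\}$ and analogously $C_1 \supseteq C$, arranging $Fg B_1 = B_1'$ and $Fg C_1 = C_1'$. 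The relatedness $(A_1, C_1), (B_1, D_1) \in \Rel(F)(g^* R)$ then transfers from the assumption $(A_1', C_1'), (B_1', D_1') \in \Rel(F)(R)$ via $(\Rel(F), F)$ being a fibration map. The delicate point, and the real source of bookkeeping, is keeping the asymmetric $\subseteq/\supseteq$ conditions introduced by the swap $\rho$ and the $[\supseteq\subseteq]$ orderings correctly aligned during the preimage construction.
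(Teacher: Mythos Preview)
Your overall strategy---reduce the corollary to showing that $(\lift{F\times F},F\times F)$ is a fibration map and then invoke Theorem~\ref{theo:beh}---is exactly the paper's. The difference is entirely in how the fibration map condition is verified, and here your reverse-direction construction has a gap.

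You propose to build $B_1\supseteq P$ by setting $B_1=P\cup\{\text{selected preimages of }B_1'\setminus Fg\,P\}$ so that $Fg\,B_1=B_1'$. But $B_1'$ is an arbitrary finite subset of $L\times Y$ containing $Fg\,P$; nothing forces elements of $B_1'\setminus Fg\,P$ to lie in the image of $L\times g$. When $g$ is not surjective, some elements of $B_1'$ may have no preimage in $L\times X$ at all, so the construction fails. The same objection applies to $C_1$. Your subsequent appeal to $(\Rel(F),F)$ being a fibration map therefore does not go through as stated.

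The fix---and this is what the paper does implicitly---is to observe that the existential witnesses can always be chosen minimally: one may take $B_1=P$ and $C_1=C$ from the outset and then construct $D_1\subseteq D$ (resp.\ $A_1\subseteq Q$) element by element, using only that $D_1'\subseteq Fg\,D$ (resp.\ $A_1'\subseteq Fg\,Q$) already lies in the image. Carrying this out collapses your existential characterisation to the direct description
\[
((S,U),(V,W))\in\lift{F\times F}(R)
\iff
\begin{cases}
\forall(a,x)\in S.\ \exists(a,y)\in W.\ x\,R\,y,\\
\forall(a,y)\in V.\ \exists(a,x)\in U.\ x\,R\,y,
\end{cases}
\]
which is the form the paper uses. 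With this simplification in hand the fibration map equality becomes a two-line rewriting: ``$f(x)\,R\,f(y)$'' on elements of $S,U,V,W$ is literally the same statement as ``$x'\,R\,y'$'' on elements of $Ff[S],Ff[U],Ff[V],Ff[W]$. No preimage selection is needed at all, and the delicate bookkeeping you anticipate evaporates.
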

For $\langle \xi_1,\xi_2 \rangle= \langle \to, \Ra \rangle$,
behavioural equivalence is simply strong bisimilarity. Consequently,
Corollary~\ref{cor:bhv-weak-bisim} actually gives the compatibility of
weak bisimulation up to strong bisimilarity~\cite{PS12}.
One could wish to use up to $\Slf$ or up to $\Tra$ for weak
bisimulations. However, the condition $(*{*}*)$ from
Section~\ref{ssec:trans:compat} fails, and indeed, weak bisimulations
up to weak bisimilarity or up to transitivity are not
sound~\cite{PS12}.

For up to context, we use Theorem~\ref{thm:comp-lambda} to reduce
compatibility w.r.t. $\lift{F \times F}$ to compatibility
w.r.t. $\rho$ and $\wlift{F \times F}$ (for which we can reuse the
result of the previous section).

\newcommand{\propweak}{
Let $\lambda \colon T(F \times \Id) \Rightarrow F\T$ be a positive GSOS specification and $(X, \alpha,\langle \xi_1, id \rangle)$ and $(X,\alpha,\langle \xi_2, id \rangle)$ be two $\lambda^{\dagger}$-bialgebras then $\Con$ is $\pred{(\lift{F\times F}\times \Id)}{\langle \xi_1,\xi_2, id \rangle}$-compatible.
}
\begin{proposition}\label{prop:weak}
\propweak
\end{proposition}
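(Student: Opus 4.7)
The plan is to parallel the proof of Proposition~\ref{prop:sim}. First, from the positive GSOS specification $\lambda$, I would form a derived specification $\mu\colon T((F\times F)\times\Id)\Ra (F\times F)\T$ by pairing two copies of $\lambda$, one for each $F$-projection of the input (with the $\Id$ factor threaded through both times). The hypothesis that $(X,\alpha,\langle\xi_1,id\rangle)$ and $(X,\alpha,\langle\xi_2,id\rangle)$ are both $\lambda^\dagger$-bialgebras then makes $(X,\alpha,\langle\xi_1,\xi_2,id\rangle)$ a $\mu^{\dagger}$-bialgebra, and compatibility will follow from Theorem~\ref{thm:big-2-cell} as soon as I produce a lifting of $\mu^{\dagger}$; by Lemma~\ref{lm:gsos}, it is enough to produce $\lift{\mu}\colon \Rel(T)(\lift{F\times F}\times\Id)\Ra\lift{F\times F}\,\Rel(\T)$ sitting above $\mu$.

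To build $\lift{\mu}$, I use the decomposition $\lift{F\times F} = \rho\otimes\wlift{F\times F}$ and apply Theorem~\ref{thm:comp-lambda}, reducing the task to constructing two liftings above $\mu$, one through $\rho$ and one through $\wlift{F\times F}$ (the required natural transformation $\gamma$ for the canonical lifting $\Rel(T)$ with respect to relational composition being standard).

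For the $\wlift{F\times F}$ component I reuse the argument of Proposition~\ref{prop:sim}: positive GSOS of $\lambda$ implies monotonicity of $\mu$ with respect to the order $[\supseteq\subseteq]$, since enlarging the first $F$-component (resp.\ shrinking the second) of the input enlarges (resp.\ shrinks) the corresponding output through the copy of $\lambda$ that reads it, which is precisely the monotonicity condition already exploited in the simulation case. For the $\rho$ component, observe that $\rho(R)$ does not depend on $R$ and coincides with the graph of the swap map $\tau\colon FX\times FX\to FX\times FX$, so $\Rel(T)\rho(R)$ is the graph of $T\tau$, and the required lifting above $\mu$ reduces to the identity $\mu\circ T\tau = \tau\circ\mu$ (modulo the $\Id$ factor), which holds by the symmetric construction of $\mu$.

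Combining the two components through Theorem~\ref{thm:comp-lambda} yields $\lift{\mu}$; Lemma~\ref{lm:gsos} then produces $\lift{\mu^{\dagger}}$ above $\mu^{\dagger}$, and Theorem~\ref{thm:big-2-cell} concludes. The main obstacle is the bookkeeping needed to adapt Theorem~\ref{thm:comp-lambda}, which is stated for plain distributive laws $\lift T\lift F\Ra\lift F\lift T$, to the abstract GSOS form required here (with an extra $\Id$ factor on the left and $\lift\T$ on the right): the argument is essentially a routine 2-cell pasting, but it must be carried out carefully for the $\rho$ component, where the $\Id$ factor interacts with the swap and one must check that symmetry of $\mu$ really does give the needed naturality square.
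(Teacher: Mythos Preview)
Your proposal is correct and follows essentially the same strategy as the paper: form the paired specification $\tilde\lambda$ (your $\mu$), observe that the two $\lambda^\dagger$-bialgebras combine into a $\tilde\lambda^\dagger$-bialgebra, decompose $\lift{F\times F}\times\Id$ as $(\rho\times\lift\Id)\otimes(\wlift{F\times F}\times\Id)$, handle the $\wlift{F\times F}$ component by reducing positivity of $\lambda$ to monotonicity of $\tilde\lambda$ for the order $[\supseteq\subseteq]$, and handle the $\rho$ component by exploiting that $\rho$ is constant and $\tilde\lambda$ is symmetric in its two $F$-inputs.

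The one noteworthy difference is the order in which you apply Lemma~\ref{lm:gsos} and Theorem~\ref{thm:comp-lambda}. You plan to combine the two GSOS-form liftings via an adapted version of Theorem~\ref{thm:comp-lambda} and then apply Lemma~\ref{lm:gsos} once at the end; this forces you to restate Theorem~\ref{thm:comp-lambda} for transformations of shape $\lift T(\lift{F_i}\times\Id)\Ra\lift{F_i}\lift\T$, which is the ``bookkeeping obstacle'' you flag. The paper instead applies Lemma~\ref{lm:gsos} separately to each component first, obtaining genuine distributive laws $\lift{\tilde\lambda_i}^\dagger\colon\Rel(\T)(\lift{F_i})\Ra(\lift{F_i})\Rel(\T)$ over the monad lifting $\Rel(\T)$; Theorem~\ref{thm:comp-lambda} then applies verbatim with $\lift T=\Rel(\T)$, and Theorem~\ref{thm:big-2-cell} concludes directly without a further invocation of Lemma~\ref{lm:gsos}. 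Both routes work, but the paper's ordering sidesteps the adaptation you identify as the main difficulty.
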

The above proposition requires both $\to$ and $\Ra$ to be
models~\cite{AFV} of the same positive GSOS specification $\lambda$.
This means that the rules of $\lambda$ should be sound for both $\to$
and $\Ra$. For instance, in the case of CCS, $\Ra$ is not a model of
$\lambda$ because the rule for non-deterministic choice is not sound
for $\Ra$. Nevertheless, we can use our framework to prove the
compatibility of weak bisimulation up to contextual closure w.r.t.\ the
remaining operators.

\section{Directions for future work}

Our nominal automata example leads us to expect that the
framework introduced in this paper will lend itself to obtaining a clean theory of up-to
techniques for name-passing process calculi. For instance, we would
like to understand whether the congruence rule format proposed by
Fiore and Staton~\cite{FioreS09} can fit in our setting: this would
provide general conditions under which up-to techniques related to
name substitution are sound in such calculi.

Another interesting research direction is suggested by the divergence
predicate we studied in Section~\ref{ssec:divergence}. Other formulas
of (coalgebraic) modal logic~\cite{DBLP:journals/cj/CirsteaKPSV11} can
be expressed by taking different predicate liftings, and yield
different families of compatible functors. This suggests a connection
with the proof systems
in~\cite{DBLP:conf/concur/Dam95,Simpson2004287}: we can regard proofs
in those systems as invariants up to some compatible functors.
By using our framework and the logical distributive laws
of~\cite{DBLP:conf/lics/Klin07}, we hope to obtain a systematic way to
derive or enhance such proof systems, starting from a given abstract
GSOS specification.

\acks

We are grateful to the anonymous reviewers for their constructive
comments, and in particular to the one who noticed that our notion of
compatible functor was just an instance of morphisms of endofunctors.
We would also like to thank Alexander Kurz and Alexandra Silva for the stimulating discussions that eventually led to the example on nominal automata; Marcello Bonsangue, Tom Hirschowitz and Henning Kerstan for comments on preliminary versions of the paper; Ichiro Hasuo
for the inspiring talk at Bellairs Workshop on Coalgebras.

%%% Local Variables: 
%%% mode: latex
%%% IspellDict: british
%%% TeX-master: "lics.tex"
%%% End: 

% LocalWords:  dually bifibration Lenisa et al Bartels equivariant modularity
% LocalWords:  determinisation monotonicity isomorphism bicartesian adjoint FX
% LocalWords:  posets adjoints coproducts coproduct tuples fibrewise

\bibliographystyle{plain}

\clearpage
\appendix
\section{Proofs for Section~\ref{sec:compatible-functors}}

The following Proposition generalises the compositionality results for compatible functions on lattices, see~\cite{pous:aplas07:clut} or~\cite[Proposition~6.3.11]{PS12}.

\begin{repproposition}{prop:modularity}
  Compatible functors are closed under the following constructions:
  \begin{enumerate}[(i)]
  \item composition: if $A$ is $(B,C)$-compatible and
    $A'$ is $(C,D)$-compatible, then $A'\circ A$ is
    $(B,D)$-compatible;
  \item pairing: if $(A_i)_{i\in\iota}$ are
    $(B,C)$-compatible, then $\langle A_i\rangle_{i\in\iota}$ is
    $(B,C^\iota)$-compatible;
  \item product: if $A$ is $(B,C)$-compatible and
    $A'$ is $(B',C')$-compatible, then $A\times A'$ is
    $(B{\times}B',C{\times}C')$-compatible; 
  \end{enumerate}
  Moreover, for an endofunctor $B \colon \C\to\C $, 
  \begin{enumerate}[(i)]\setcounter{enumi}{3}
  \item the identity functor $\mathit{Id}\colon \C\to\C$
    is $B$-compatible;
  \item the constant functor to the carrier of any
    $B$-coalgebra is $B$-compatible, in particular the final one if it
    exists;
  \item the coproduct functor $\coprod\colon
    \C^\iota\to\C$ is $(B^\iota,B)$-compatible. %, assuming that $\C$ has coproducts of an ordinal $\iota$.
  \end{enumerate}
\end{repproposition}
\begin{proof}
  \begin{enumerate}[(i)]
  \item Given $\gamma:AB\Ra CA$ and $\gamma':A'C\Ra DA'$ we obtain 
    \begin{equation*}
      \xymatrix{
        A'AB\ar@{=>}[r]^-{A'\gamma}&A'CA\ar@{=>}[r]^-{\gamma' A} &DA'A 
}
    \end{equation*}
  \item Given natural transformations $\gamma_i:A_iB\Ra CA_i$ for
    all $i\in\iota$ we obtain a natural transformation
    \begin{equation*}
      \xymatrix{
        \langle A_i\rangle_{i\in\iota}B\ar@{=}[d] & C^\iota \langle A_i\rangle_{i\in\iota} \ar@{=}[d]\\
        \langle A_iB\rangle_{i\in\iota}\ar@{=>}[r]^-{\langle\gamma_i\rangle_{i\in\iota}} &\langle CA_i\rangle_{i\in\iota}
      }
    \end{equation*}
  \item Given $\gamma:AB\Ra CA$ and $\gamma':A'B'\Ra C'A'$ we
    construct $\gamma\times\gamma':(A\times A')(B\times B')\Ra
    (C\times C')(A\times A')$.
  \end{enumerate}
  \noindent
  Items~\eqref{it:id},~\eqref{it:constant} and~\eqref{it:coproduct}
  are trivial. For example, the latter is immediate using the
  universal property of the coproduct.
\end{proof}

\section{Proofs for Section~\ref{sec:upto:fibrations}}
\label{app:sec:upto:fibrations}
The next simple Lemma about liftings in fibrations will be used throughout this appendix, e.g., to prove   Proposition~\ref{prop:modular-compatibility-with-bisim}, but also Theorem~\ref{thm:big-2-cell}.

\begin{lemma}
\label{eq:2-cell-lifting-T}
  Let $p:\P\to\B$ and $p':\P'\to\B$ be two fibrations and  assume $\lift{T}:\P\to\P'$ is the lifting of a functor $T:\B\to\B$. Consider a $\B$-morphism $f:X\to Y$. Then there exists a natural transformation:
$$\theta:\lift{T}\circ f^* \Ra (Tf)^*\circ\lift{T}:\P_Y\to\P'_{TX}.$$
\end{lemma}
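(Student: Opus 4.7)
The plan is to define the components of $\theta$ pointwise using the universal property of Cartesian liftings in $\P'$, and then check naturality by the uniqueness part of the same universal property.

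First, fix $R\in\P_Y$ and consider the Cartesian lifting $\widetilde{f_R}\colon f^*(R)\to R$ in $\P$ above $f$. Applying the lifting $\lift{T}$ yields a morphism $\lift{T}(\widetilde{f_R})\colon \lift{T}(f^*(R))\to \lift{T}(R)$ in $\P'$. Since $p'\lift{T}=Tp$, this morphism sits above $Tf\colon TX\to TY$, and its codomain $\lift{T}(R)$ belongs to $\P'_{TY}$ while its domain $\lift{T}(f^*(R))$ belongs to $\P'_{TX}$. Now consider the Cartesian lifting $\widetilde{(Tf)_{\lift{T}(R)}}\colon (Tf)^*\lift{T}(R)\to \lift{T}(R)$ in $\P'$ above $Tf$. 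Since $\lift{T}(\widetilde{f_R})$ sits above $Tf=Tf\circ \id_{TX}$, the universal property of the Cartesian lifting gives a unique morphism
\[
\theta_R\colon \lift{T}(f^*(R)) \to (Tf)^*\lift{T}(R)
\]
in $\P'$ above $\id_{TX}$ (hence a morphism in the fibre $\P'_{TX}$) such that $\widetilde{(Tf)_{\lift{T}(R)}}\circ \theta_R = \lift{T}(\widetilde{f_R})$.

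For naturality, take a morphism $u\colon R\to S$ in $\P_Y$, i.e.\ above $\id_Y$. One checks that both $(Tf)^*\lift{T}(u)\circ \theta_R$ and $\theta_S\circ \lift{T}(f^*u)$ are morphisms $\lift{T}(f^*R)\to (Tf)^*\lift{T}(S)$ in $\P'_{TX}$ whose composition with the Cartesian lifting $\widetilde{(Tf)_{\lift{T}(S)}}$ equals $\lift{T}(\widetilde{f_S})\circ \lift{T}(f^*u)=\lift{T}(u\circ \widetilde{f_R})=\lift{T}(u)\circ \lift{T}(\widetilde{f_R})$; the last expression equals both by a direct computation using the defining equation of $\theta_R$ and $\theta_S$ respectively. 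The uniqueness part of the universal property forces the two morphisms to coincide.

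The main obstacle is really notational rather than conceptual: the core argument is a single invocation of the universal property of Cartesian lifts, and naturality is essentially automatic from uniqueness. No assumption on $p$ or $p'$ beyond being fibrations, nor on $\lift{T}$ beyond being a lifting of $T$, is needed; in particular, $\theta$ is an isomorphism precisely when $(\lift{T},T)$ is a fibration map, which recovers the condition used as step $(a)$ in the proof of Theorem~\ref{theo:beh}.
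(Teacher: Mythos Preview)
Your construction is exactly the paper's: define $\theta_R$ by applying $\lift{T}$ to the Cartesian lift $\widetilde{f_R}$ and then factoring through the Cartesian lift $\widetilde{Tf}_{\lift{T}R}$ via its universal property. The paper omits the naturality verification and the closing remark about fibration maps, but the argument is identical.
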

\begin{proof}
 In order to define $\theta_R$ for some $R$ in $\P_Y$, we
  use the universal property of the Cartesian lifting
  $\widetilde{Tf}_{\lift{T}(R)}$. In a diagram:
  \begin{equation}
    \label{eq:embeding-coalg}
    \begin{gathered}
      \begin{tikzpicture}
        \matrix (m) [matrix of math nodes,row sep=3em,column
        sep=4em,minimum width=2em] {
          \lift{T}(f^*(R)) & \\
          (Tf)^*(\lift{T}R) & \lift{T}R  \\
          TX & TY\\
        }; 
        \path[-stealth] 
        (m-1-1) edge node [above,right] {$\lift{T}(\widetilde{f}_R)$} (m-2-2) 
        (m-2-1) edge node [below] {$\widetilde{Tf}_{\lift{T}R}$} (m-2-2) 
        (m-1-1) edge[dashed] node [left] {$\theta_R$} (m-2-1) 
        (m-3-1) edge node [above] {$Tf$} (m-3-2);
      \end{tikzpicture}
    \end{gathered}
  \end{equation}
\end{proof}

\begin{lemma}
  \label{eq:2-cell-lifting-F}
  Let $p:\P\to\B$ be a bifibration and  assume $\lift{F}:\P\to\P$ is the lifting of a functor $F:\B\to\B$. Consider a $\B$-morphism $f:X\to Y$. Then there exists a natural transformation:
$$\rho:\textstyle{\coprod}_{Ff}\circ \lift{F}\Ra\lift{F}\circ\textstyle{\coprod}_f:\P_X\to\P_{FY}.$$
\end{lemma}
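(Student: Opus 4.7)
The plan is to obtain $\rho$ as the mate of the natural transformation $\theta$ provided by the preceding Lemma~\ref{eq:2-cell-lifting-T} (applied with $\lift{T}=\lift{F}$ and $T=F$), using the bifibration structure. Recall that because $p$ is a bifibration, we have two adjunctions
\[
\textstyle\coprod_f\dashv f^*\colon\P_X\to\P_Y,\qquad \textstyle\coprod_{Ff}\dashv (Ff)^*\colon\P_{FX}\to\P_{FY},
\]
with unit $\eta\colon \Id_{\P_X}\Ra f^*\textstyle\coprod_f$ and counit $\epsilon\colon\textstyle\coprod_{Ff}(Ff)^*\Ra\Id_{\P_{FY}}$. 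Meanwhile, Lemma~\ref{eq:2-cell-lifting-T} supplies a natural transformation
\[
\theta\colon\lift{F}\circ f^*\Ra (Ff)^*\circ\lift{F}\colon\P_Y\to\P_{FX},
\]
which goes in the direction opposite to the one we need; the standard remedy is to take its mate under the two adjunctions above.

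Concretely, I would define $\rho\colon\textstyle\coprod_{Ff}\circ\lift{F}\Ra\lift{F}\circ\textstyle\coprod_f$ as the composite
\[
\textstyle\coprod_{Ff}\lift{F}
\xRightarrow{\textstyle\coprod_{Ff}\lift{F}\,\eta}
\textstyle\coprod_{Ff}\lift{F}f^*\textstyle\coprod_f
\xRightarrow{\textstyle\coprod_{Ff}\,\theta\,\textstyle\coprod_f}
\textstyle\coprod_{Ff}(Ff)^*\lift{F}\textstyle\coprod_f
\xRightarrow{\epsilon\,\lift{F}\textstyle\coprod_f}
\lift{F}\textstyle\coprod_f,
\]
where the first arrow inserts the unit of $\coprod_f\dashv f^*$, the middle arrow whiskers $\theta$ on both sides by $\coprod_{Ff}$ and $\coprod_f$, and the last arrow applies the counit of $\coprod_{Ff}\dashv (Ff)^*$. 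A type check confirms that each intermediate functor lies in $\P_X\to\P_{FY}$, as required. Naturality of $\rho$ follows immediately from naturality of $\theta$, $\eta$, $\epsilon$, together with the functoriality of $\coprod_f$, $\coprod_{Ff}$, and $\lift{F}$.

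There is essentially no obstacle here beyond bookkeeping: the preceding lemma does the real work (producing $\theta$ from the universal property of the Cartesian lifting), and the bifibration hypothesis is used only to supply the two adjunctions and hence the mate correspondence. The only point that needs a brief justification is that the construction is independent of the choice of Cartesian/opcartesian liftings up to canonical isomorphism; this is automatic since we took $p$ to be split, so $(Ff)^*$ and $\coprod_{Ff}$ are honest functors and $\eta,\epsilon$ are genuine natural transformations.
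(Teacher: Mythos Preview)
Your proposal is correct and essentially identical to the paper's proof: both obtain $\rho$ as the mate of the natural transformation $\theta\colon\lift{F}f^*\Ra(Ff)^*\lift{F}$ from Lemma~\ref{eq:2-cell-lifting-T} under the adjunctions $\coprod_f\dashv f^*$ and $\coprod_{Ff}\dashv(Ff)^*$. The paper phrases this as two successive adjoint transposes, whereas you unfold the mate explicitly as the unit--$\theta$--counit composite; these are the same construction.
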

\begin{proof}
  The proof uses the universal property of the opcartesian liftings. Equivalently, from Lemma~\ref{eq:2-cell-lifting-T} we have a natural transformation $\lift{F}f^*\Ra(Ff)^*\lift{F}$. Taking the adjoint transpose via $\coprod_{Ff}\dashv(Ff)^*$ we get a natural transformation
$\coprod_{Ff}\lift{F}f^*\Ra\lift{F}$. A further adjoint transpose via the adjunction $\coprod_f\dashv f^*$ yields the desired $\rho:\textstyle{\coprod}_{Ff}\lift{F}\Ra\lift{F}\textstyle{\coprod}_f$.
\end{proof}

\subsection{Proofs for Section~\ref{ssec:trans:compat}}

In this section we prove Proposition~\ref{prop:modular-compatibility-with-bisim}. 
For the sake of clarity we explain how $\lift{F}^n$ is defined for $n=2$. Recall that $\P\times_\B\P$ is obtained as a pullback of $p$ along $p$ in $\Cat$.

For a lifting $\lift{F}$ of $F$, the functor $\lift{F}^2$ makes the next diagram commute.
\begin{equation*}
  \xymatrix@C=0.3cm@R=0.3cm{
\P\times_\B\P\ar[rr]\ar[dd]\ar@{..>}[rd]^-{\lift{F}^2}& &\P\ar[dd]^(0.35){p}\ar[rd]^-{\lift{F}} & \\
& \P\times_\B\P\pullbackcorner\ar[rr]\ar[dd] &\qquad\qquad & \P\ar[dd]^-{p}\\
\P\ar[rr]^(0.3){p}\ar[rd]_-{\lift{F}} & & \B\ar[rd]^-{F} & \\
& \P\ar[rr]_-{p} & & \B \\
}
\end{equation*}
 This means  that on each fibre we have  
$$\lift{F}^n_X=(\lift{F}_X)^n\colon \P_X^n\to\P_{FX}^n.$$

As a consequence of Lemma~\ref{eq:2-cell-lifting-T} we obtain:

\begin{lemma}\label{lem:compatibility-with-cart-funct}
    Let $p:\P\to\B$ and assume $G:\P^{\times_{\B}n}\to \P$ is a lifting  of the identity on $\B$.  If
    $f:X\to Y$ is a $\B$-morphism, there is a canonical natural
    transformation
$$\theta:G(f^*)^n\Ra f^*G:\P_Y\to\P_{GX}.$$
\end{lemma}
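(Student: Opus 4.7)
The plan is to reduce this directly to Lemma~\ref{eq:2-cell-lifting-T} by recognising $G$ as a lifting along the identity between two fibrations over $\B$, where the source is the $n$-fold product fibration.

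First I would verify that $p^n \colon \P^{\times_\B n}\to\B$ is itself a fibration. This is routine: the Cartesian lifting of $f\colon X\to Y$ at a tuple $(R_1,\dots,R_n)\in\P_Y^n$ is simply the tuple of Cartesian liftings $(\widetilde{f}_{R_1},\dots,\widetilde{f}_{R_n})$ in $\P$, and its universal property in $\P^{\times_\B n}$ follows componentwise from the universal property of each $\widetilde{f}_{R_i}$ in $\P$. Consequently, a choice of reindexing for $p$ determines a choice of reindexing for $p^n$, which is precisely $(f^*)^n\colon \P_Y^n\to\P_X^n$.

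Next, by the hypothesis that $G$ is a lifting of $\mathrm{Id}_\B$, we have $p\circ G = p^n$, so $G$ is a morphism of fibrations (over $\B$) sitting above $\mathrm{Id}_\B$ in the sense required by Lemma~\ref{eq:2-cell-lifting-T}. Applying that lemma with $p^n$ in place of $p$, with $p$ in place of $p'$, with $\mathrm{Id}_\B$ in place of $T$, and with $G$ in place of $\lift T$, yields a natural transformation
\[
\theta\colon G\circ (f^*)^n \;\Ra\; f^*\circ G
\]
whose component at $(R_1,\dots,R_n)$ is the unique arrow in $\P_X$ obtained by factoring $G(\widetilde{f}_{R_1},\dots,\widetilde{f}_{R_n})$ through the Cartesian lifting $\widetilde{f}_{G(R_1,\dots,R_n)}$ (using that the image of $G(\widetilde{f}_{R_1},\dots,\widetilde{f}_{R_n})$ under $p$ is $f$).

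There is no real obstacle here, only bookkeeping: the only thing one has to check carefully is that the pullback $\P^{\times_\B n}$ inherits a fibration structure from $p$ whose Cartesian liftings and reindexing functors are the componentwise ones, so that Lemma~\ref{eq:2-cell-lifting-T} may be applied verbatim. Once that is in place, $\theta$ is manufactured by a single application of the universal property of a Cartesian lifting, exactly as in the proof of Lemma~\ref{eq:2-cell-lifting-T}.
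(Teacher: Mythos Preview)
Your proposal is correct and follows essentially the same approach as the paper: both apply Lemma~\ref{eq:2-cell-lifting-T} with $T=\Id$ and $\lift{T}=G$, after observing that the reindexing functor in the product fibration $\P^{\times_\B n}\to\B$ is computed componentwise as $(f^*)^n$. You spell out the fibration structure on $\P^{\times_\B n}$ and the component of $\theta$ in more detail than the paper does, but the argument is identical.
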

\begin{proof}
This is an instance of Lemma~\ref{eq:2-cell-lifting-T} for $T=\Id$ and $\lift{T}=G$. We also use that the Cartesian lifting of a $\B$-morphism $f$ in $\P^{\times_{\B}n}$ is $(f^*)^n$, where $f^*$ is the Cartesian lifting in $\P$.
\end{proof}

\begin{repproposition}{prop:modular-compatibility-with-bisim}
  Let $G\colon \P^{\times_{\B}n}\to \P$ be a lifting of the identity on $\B$ such that there exists  a
  natural transformation
  $G\lift{F}^n\Ra\lift{F}G$. Then $G_X$ is
  $\pred{\lift{F}}{\xi}$-compatible.
\end{repproposition}

\begin{proof}
  Consider the natural transformation obtained as the composition
$$G_X(\xi^*)^n(\lift{F})^n \Ra \xi^* G_X(\lift{F})^n \Ra \xi^*\lift{F} G_X $$
and use that $(\xi^*\circ\lift{F})^n=(\xi^*)^n\circ(\lift{F})^n$. The
first natural transformation comes from
Lemma~\ref{lem:compatibility-with-cart-funct} applied for $\xi$.
\end{proof}

\subsection{Proofs for Section~\ref{ssec:context:compat}}
\label{app:ssec:context:compat}
%\subsection{Lemmas for Theorem \ref{thm:big-2-cell}}
In the next Theorem we only use that the fibration $p:\P\to\B$ is a bifibration and is split. 
\footnote{The Beck-Chevalley condition is not required for the functors $\coprod_f$.}
\begin{reptheorem}{thm:big-2-cell}
  Let $\lift{T},\lift{F}\colon \P\to\P$ be liftings of $T$ and $F$. If
  $\lift{\lambda} \colon \lift{T}\,\lift{F}\Ra\lift{F}\,\lift{T}$ is a
  natural transformation sitting above $\lambda$, then
  $\coprod_\alpha\circ\,\lift{T}$ is $\pred{\lift{F}}{\xi}$-compatible.
\end{reptheorem}

\begin{proof}
  We exhibit a natural transformation
  $$\textstyle{\coprod}_\alpha\circ \lift{T}\circ \xi^*\circ \lift{F} \Rightarrow \xi^*\circ \lift{F}\circ \textstyle{\coprod}_\alpha\circ \lift{T}.$$
  This is achieved in Figure~\ref{eq:compatibility} by pasting five
  natural transformations, obtained as follows:
  \begin{enumerate}[(a)]
  \item is the counit of the adjunction
    $\coprod_{\lambda_X}\dashv\lambda_X^*$.
  \item comes from $\lift{\lambda}$ being a lifting of
    $\lambda$, see Lemma~\ref{eq:2-cell-distributivity}.
  \item comes from the bialgebra condition, the fibration being split,
    and the units and counits of the adjunctions
    $\coprod_{\alpha}\dashv\alpha^*$,
    $\coprod_{F\alpha}\dashv(F\alpha)^*$, and $\coprod_{\lambda_X}\dashv\lambda_X^*$. See Lemma~\ref{eq:2-cell-bialgebra}.
  \item arises since $\lift{T}$ is a lifting of $T$, using the
    universal property of the Cartesian lifting
    $(T\xi)^*$, see Lemma~\ref{eq:2-cell-lifting-T}.
  \item comes from $\lift{F}$ being a lifting of $F$, combined with
    the unit and counit of the adjunction
    $\coprod_{\alpha}\dashv\alpha^*$, see Lemma~\ref{eq:2-cell-lifting-F}. 
    \qedhere
  \end{enumerate}
\end{proof}

\begin{lemma}
  \label{eq:2-cell-distributivity}
Consider a fibration $p:\P\to\B$, two $\B$-endofunctors $F,T$ with corresponding liftings $\lift{T},\lift{F}$. Assume $\lambda:TF\Ra FT$ is a natural transformation and $\lift{\lambda}:\lift{T}\lift{F}\Ra\lift{F}\lift{T}$ sits above $\lambda$. Then there exists a 2-cell as in the diagram below:

\begin{equation}
  \begin{gathered}
    \begin{tikzpicture}
      \matrix (m) [matrix of math nodes,row sep=3em,column
      sep=4em,minimum width=2em] {
        \P_X & \P_{FX} & \P_{TFX} \\
        \P_X &\P_{TX} & \P_{FTX} \\
      }; 
      \path[-stealth] 
      (m-1-1) edge node [above] {$\lift{F}$} (m-1-2)
              edge node [left] {$\id$} (m-2-1)
      (m-1-2) edge node [above] {$\lift{T}$} (m-1-3) 
      (m-2-1) edge node [below] {$\lift{T}$} (m-2-2) 
      (m-2-2) edge node [below] {$\lift{F}$} (m-2-3)
      (m-2-3) edge node [right] {$\lambda_X^*$} (m-1-3)
(m-1-2) edge[color=white]
    node[fill=white] {$\textcolor{black}{\Downarrow}$} (m-2-2);
    \end{tikzpicture}
  \end{gathered}
\end{equation}
\end{lemma}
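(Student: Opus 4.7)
The plan is to construct the required 2-cell $\theta \colon \lift{T}\lift{F} \Ra \lambda_X^*\,\lift{F}\,\lift{T}$ pointwise in $R \in \P_X$, by invoking the universal property of the Cartesian lifting of $\lambda_X$.

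First I would observe that, for each $R \in \P_X$, the component $\lift{\lambda}_R \colon \lift{T}\,\lift{F}R \to \lift{F}\,\lift{T}R$ is by hypothesis a $\P$-morphism sitting above $\lambda_X \colon TFX \to FTX$ (here one uses that $pR = X$, so that the naturality square of $\lift\lambda$ is pinned above the single arrow $\lambda_X$). The Cartesian lifting $\widetilde{(\lambda_X)}_{\lift{F}\lift{T}R} \colon \lambda_X^*(\lift{F}\,\lift{T}R) \to \lift{F}\,\lift{T}R$ also sits above $\lambda_X$, so applying its universal property to the data $u = \lift{\lambda}_R$ and $g = \id_{TFX}$ (so that $f g = \lambda_X = p(u)$) yields a unique vertical arrow $\theta_R \colon \lift{T}\,\lift{F}R \to \lambda_X^*\,\lift{F}\,\lift{T}R$ in the fibre $\P_{TFX}$ satisfying
\[
  \widetilde{(\lambda_X)}_{\lift{F}\lift{T}R} \circ \theta_R \;=\; \lift{\lambda}_R.
\]

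Next I would verify naturality of $\theta$ in $R$. Given a map $h \colon R \to R'$ in $\P_X$, the two composites $\theta_{R'} \circ \lift{T}\,\lift{F}h$ and $\lambda_X^*(\lift{F}\,\lift{T}h) \circ \theta_R$ are both vertical arrows $\lift{T}\,\lift{F}R \to \lambda_X^*\,\lift{F}\,\lift{T}R'$ in $\P_{TFX}$. Postcomposing either of them with $\widetilde{(\lambda_X)}_{\lift{F}\lift{T}R'}$ yields $\lift{\lambda}_{R'} \circ \lift{T}\,\lift{F}h$ in the first case (by the defining equation of $\theta_{R'}$), and $\lift{F}\,\lift{T}h \circ \lift{\lambda}_R$ in the second case (by the definition of the reindexing functor $\lambda_X^*$ applied to morphisms, combined with the defining equation of $\theta_R$). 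These two arrows agree by naturality of $\lift{\lambda}$, so the uniqueness clause in the universal property of the Cartesian lifting forces $\theta_{R'} \circ \lift{T}\,\lift{F}h = \lambda_X^*(\lift{F}\,\lift{T}h) \circ \theta_R$.

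There is no real obstacle here: the entire construction is a direct application of the universal property of Cartesian liftings, with naturality squeezed out from uniqueness. The only subtle point to keep straight is the typing discipline, namely that $\lift\lambda_R$ lies above exactly $\lambda_X$ (rather than above some $\lambda_Y$ for $Y \neq X$) precisely because we have restricted attention to the fibre $\P_X$; this is what licenses the use of the Cartesian lifting of $\lambda_X$ itself, rather than of some reindexed variant, and makes the constructed 2-cell live in the fibre over $TFX$ as required by the diagram.
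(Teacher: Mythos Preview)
Your proof is correct and follows essentially the same approach as the paper: construct the component at each $R\in\P_X$ by factoring $\lift{\lambda}_R$ through the Cartesian lifting of $\lambda_X$, and deduce naturality from the uniqueness part of that universal property. Your write-up is in fact more careful than the paper's, which contains a typo (it writes ``$R\in\P_{FTX}$'' where ``$R\in\P_X$'' is meant) and only asserts naturality without spelling out the argument you give.
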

\begin{proof} For $R\in\P_{FTX}$ the $R$-component of the required natural transformation is the dashed line in the diagram below and is obtained using the universal property of the Cartesian lifting of $\lambda_X$.
    \begin{equation}
  \label{eq:2-cell-distributivity-proof}
    \begin{gathered}
      \begin{tikzpicture}
        \matrix (m) [matrix of math nodes,row sep=3em,column
        sep=4em,minimum width=2em] {
          \lift{T}\lift{F}R & \\
          \lambda^*(\lift{F}\lift{T}R) & \lift{F}\lift{T}R  \\
          TFX & FTX\\
        }; 
        \path[-stealth] 
(m-1-1) edge node [above,right] {$\lift{\lambda}_R$} (m-2-2) 
(m-2-1) edge node [below] {$\widetilde{\lambda}_{\lift{F}\lift{T}R} $} (m-2-2) 
(m-1-1) edge[dashed] node [left] {}(m-2-1) 
(m-3-1) edge node [above] {$\lambda_X$} (m-3-2);
      \end{tikzpicture}
    \end{gathered}
  \end{equation}
The naturality in $R$ can be easily checked and is a consequence of the uniqueness of the factorisation.
\end{proof}

\begin{lemma}
\label{eq:2-cell-bialgebra}  
Given $(X,\alpha,\xi)$ an $\lambda$-bialgebra as in~\eqref{eq:bialg}
\begin{equation}
 \label{eq:bialg}
 \begin{gathered}
   \begin{tikzpicture}
     \matrix (m) [matrix of math nodes,row sep=2em,column
     sep=4em,minimum width=2em] {
       TX & X & FX \\
       TFX & & FTX \\
     }; \path[-stealth] (m-1-1) edge node [above] {$\alpha$} (m-1-2)
     (m-1-1) edge node [left] {$T\xi$} (m-2-1) (m-1-2) edge node
     [above] {$\xi$} (m-1-3) (m-2-3) edge node [right] {$F\alpha$}
     (m-1-3) (m-2-1) edge node [below] {$\lambda_X$} (m-2-3);
   \end{tikzpicture}
 \end{gathered}
\end{equation}
and $p:\P\to\B$ a split fibration, there exists a 2-cell
\begin{equation}
 \label{eq:2-cell-bialgebra-proof}
    \begin{gathered}
    \begin{tikzpicture}
      \matrix (m) [matrix of math nodes,row sep=2.5em,column
      sep=4em,minimum width=2em] {
        \P_{TFX} & \P_{TX} & \P_{X} \\
        \P_{FTX} &\P_{FX} & \P_{X} \\
      }; 
      \path[-stealth] 
      (m-1-1) edge node [above] {$(T\xi)^*$} (m-1-2)
              edge node [left] {$\coprod_{\lambda_X}$} (m-2-1)
      (m-1-2) edge node [above] {$\coprod_{\alpha}$} (m-1-3) 
      (m-2-1) edge node [below] {$\coprod_{F\alpha}$} (m-2-2) 
      (m-2-2) edge node [below] {$\xi^*$} (m-2-3)
      (m-2-3) edge node [right] {$\id$} (m-1-3)
(m-1-2) edge[color=white]
    node[fill=white] {$\textcolor{black}{\Downarrow}$} (m-2-2);
    \end{tikzpicture}
  \end{gathered} 
\end{equation}
\end{lemma}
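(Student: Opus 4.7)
The plan is to translate the bialgebra equation into an equality between compositions of reindexing functors (using that $p$ is split), pass to the dual equality between direct image functors, and then close up the two paths with unit and counit whiskerings.

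First, applying the split functoriality $(g \circ f)^* = f^* \circ g^*$ to the bialgebra square $\xi \circ \alpha = F\alpha \circ \lambda_X \circ T\xi$ yields the strict equality
$$\alpha^* \circ \xi^* \;=\; (T\xi)^* \circ \lambda_X^* \circ (F\alpha)^*$$
of functors from $\P_{FX}$ to $\P_{TX}$. Since left adjoints of equal functors coincide, this is equivalent to the dual equality of direct image functors
$$\coprod_\xi \circ \coprod_\alpha \;=\; \coprod_{F\alpha} \circ \coprod_{\lambda_X} \circ \coprod_{T\xi}\,.$$

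The 2-cell in~\eqref{eq:2-cell-bialgebra-proof} is then assembled in three steps. Starting from $\coprod_\alpha \circ (T\xi)^*$, I first whisker the unit $\eta \colon \Id \Ra \xi^* \coprod_\xi$ of $\coprod_\xi \dashv \xi^*$ on the left to obtain
$$\coprod_\alpha \circ (T\xi)^* \;\Ra\; \xi^* \circ \coprod_\xi \circ \coprod_\alpha \circ (T\xi)^*\,.$$
The equality of direct images rewrites the codomain as $\xi^* \circ \coprod_{F\alpha} \circ \coprod_{\lambda_X} \circ \coprod_{T\xi} \circ (T\xi)^*$, and whiskering the counit $\epsilon \colon \coprod_{T\xi} (T\xi)^* \Ra \Id$ of $\coprod_{T\xi} \dashv (T\xi)^*$ on the right then produces $\xi^* \circ \coprod_{F\alpha} \circ \coprod_{\lambda_X}$. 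Composing these natural transformations yields the required 2-cell, with naturality inherited from that of $\eta$ and $\epsilon$.

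The construction is essentially mechanical once the equation on direct images is in hand, so the main obstacle is purely bookkeeping: verifying that the units and counits are whiskered at positions compatible with the two strict equalities, and that the resulting composite is well-typed. Equivalently, one could describe the 2-cell as the adjoint transpose under $\coprod_\xi \dashv \xi^*$ of the counit of $\coprod_{T\xi} \dashv (T\xi)^*$ whiskered with $\coprod_{F\alpha} \circ \coprod_{\lambda_X}$, after substituting via the equality of direct images; both presentations give the same natural transformation.
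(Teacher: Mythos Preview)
Your argument is correct and takes a genuinely different route from the paper's. The paper builds the 2-cell in four steps: it inserts the unit of $\coprod_{\lambda_X}\dashv\lambda_X^*$, then the unit of $\coprod_{F\alpha}\dashv(F\alpha)^*$, then uses the split equality of \emph{reindexing} functors $(T\xi)^*\lambda_X^*(F\alpha)^*=\alpha^*\xi^*$ directly, and finally removes via the counit of $\coprod_\alpha\dashv\alpha^*$. You instead insert the unit of $\coprod_\xi\dashv\xi^*$ on the left, pass to the direct-image side of the bialgebra equation, and remove via the counit of $\coprod_{T\xi}\dashv(T\xi)^*$ on the right. Your route is more economical (two adjunctions rather than three) and is the ``mate'' of the paper's construction under the adjunctions for $\xi$ and $T\xi$ rather than those for $\alpha$, $F\alpha$, $\lambda_X$.

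One point of care: the step ``left adjoints of equal functors coincide'' only gives a canonical natural \emph{isomorphism}, not a strict equality. The hypothesis is that $p$ is a split \emph{fibration}, which makes the reindexing functors compose strictly, but nothing is assumed about splitness of the opfibration structure, so $\coprod_\xi\coprod_\alpha$ and $\coprod_{F\alpha}\coprod_{\lambda_X}\coprod_{T\xi}$ are a priori only isomorphic. This is harmless---inserting that isomorphism as an extra step still yields the required 2-cell---but your phrase ``equality of direct images'' should be read as ``canonical isomorphism''. The paper's route sidesteps this by never leaving the reindexing side, where the split hypothesis gives a genuine equality.
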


\begin{proof}
We obtain the required natural transformation as the composite of the natural transformations  of~\eqref{eq:4} below.
  \begin{equation}
    \label{eq:4}
    \begin{array}{cc}
        \coprod_\alpha\circ (T\xi)^* &  \\
& \\
        \Downarrow & (\coprod_\lambda\dashv \lambda^*)\\
& \\
        \coprod_\alpha\circ (T\xi)^*\circ \lambda^*\circ\coprod_{\lambda} & \\
& \\
        \Downarrow & (\coprod_{F\alpha}\dashv (F\alpha)^*)\\
& \\
        \coprod_\alpha\circ (T\xi)^*\circ \lambda^*\circ(F\alpha)^*\circ\coprod_{F\alpha}\circ\coprod_{\lambda} & \\
& \\
        \Downarrow & (\mathrm{bialg})\\
& \\
        \coprod_\alpha\circ \alpha^*\circ\xi^*\circ\coprod_{F\alpha}\circ\coprod_{\lambda} & \\
& \\        
        \Downarrow & (\coprod_{\alpha}\dashv \alpha^*)\\
& \\
       \xi^*\circ\coprod_{F\alpha}\circ\coprod_{\lambda} & \\
    \end{array}      
  \end{equation}

 Except for the third one, these 2-cells are obtained from the units or counits of the adjunctions recalled on the right column. The third natural transformation is actually an isomorphism and arises from  $(X,\alpha,\xi)$ being a bialgebra and the fibration being split.
\end{proof}

\subsection{Proofs for Section~\ref{ssec:abstract-GSOS}}
\label{app:GSOS}

In this section we will prove Lemma~\ref{lm:gsos}. First we recall some basic facts on the free monad $\T$ over a functor $T$ on some category $\C$. 

Assuming $T$ has free algebras over any $X$ in $\C$ one can show that the free monad $\T$ over $T$ exists. We can define $\T X$  as the free $T$-algebra on $X$, or equivalently, as the initial algebra for the functor $X+T(-)$. Thus for each $X$ in $\C$ one has an isomorphism
$$[\eta_X,\kappa_X]\colon X+T\T X\to\T X.$$ 
The $\eta$ above gives the unit of the monad $\T$.
% The composition $\iota = \kappa\circ T\eta\colon T\Ra\T$ satisfies a universal
% property that amounts to saying that $\T$ is the free monad on $T$. 
The monad multiplication $\mu: \T\T X \to \T X$ is given as the unique morphism obtained by equipping $\T X$ with the $\T X+T(-)$-algebra structure $[\id,\kappa_X]$.

Recall from~\cite{DBLP:conf/lics/TuriP97} that there exists a bijective correspondence between natural transformations
  $$\lambda:T(F\times\Id)\Rightarrow F\T$$
and distributive laws
$$\lambda^\dagger\colon\T(F\times\Id)\Rightarrow (F\times\Id)\T. $$
We briefly recall here how $\lambda^{\dagger}$ is obtained from $\lambda$.
For $X$ in $\B$, we equip $(F\times\Id)\T X$ with a $FX\times X + {T}(-)$-algebra structure, given by the sum of :
% $$
% \xymatrix{
% (F\times\Id)X \ar[rrr]^{(F\times\Id)\eta_X} & &
% (F\times\Id)\bSigma X \\
% {T}(F\times\Id)\bSigma X\ar[rr]^{(F\times\Id)\kappa_X\circ\lambda_{\bSigma X}} & & 
% {T}(F\times\Id)\bSigma X
% (F\times\Id)\bSigma X \\
% }
% $$

\begin{equation*}
%  \label{eq:1}
  \xymatrix{
FX\times X\ar[rr]^-{(F\times\Id)\eta_X} & & (F\times\Id)\T X \\
& F\T\T X\ar[r]^-{F\mu_X} & F\T X\\
{T}(F\times\Id)\T X\ar[ur]^-{\lambda_{\T X}}\ar[dr]_-{{T}\pi_2\T_X}\ar@{-->}[rr] & & (F\times\Id)\T X\ar[u]\ar[d] \\
& {T}\T\ar[r]_-{\kappa_X} & \T X \\
}
\end{equation*}
Hence $\ldag_X$ is defined as the unique $(F\times\Id)X + {T}(-)$-algebra morphism:
\begin{equation}
\begin{gathered}
  \label{eq:2}
    \xymatrix{
{T}{\T}(F\times\Id)X
\ar[r]^-{{T}(\ldag_X)}
\ar[d]_-{\kappa_{(F\times\Id)X}} 
& 
{T}(F\times\Id){\T} X
\ar[d]_-{\langle F\mu_X\lambda_{{\T} X}}^-{\kappa_X({T}\pi_2{\T})_X\rangle} 
\\
{\T} (F\times\Id)X
\ar@{-->}[r]^-{\ldag_X} 
& 
(F\times\Id){\T} X
\\
(F\times\Id)X
\ar[u]^-{\eta_{(F\times\Id)X}}
\ar[ur]_-{(F\times\Id)\eta_X}
}
\end{gathered}
\end{equation}

The following technical lemma is needed to establish that whenever the lifting of $\lift{T}$ of a functor $T$ has free algebras, the free monad over $\lift{T}$ is the lifting of the free monad over $T$.

\begin{lemma}\label{lem:free-alg-of-lifting}
  Consider a lifting  $\lift{T}$ of a $\B$-endofunctor ${T}$ and assume $\lift{T}$ has free algebras.
  \begin{enumerate}
  \item The functor $p:\P\to\B$ has a right adjoint $\mathbf{1}:\B\to\P$ inducing an adjunction\footnote{The functor $\Alg$ stems from the 2-categorical notion of \emph{inserter}, see~\cite{Street74} or~\cite[Theorem~2.14,Appendix~A.5]{Hermida97structuralinduction} for a concise exposition.
}
\[
\begin{tikzpicture}
  \matrix (m) [matrix of nodes,row sep=2.5em,column
    sep=3em,minimum width=2em,ampersand replacement=\&] {
\node (1) {$\Alg(\lift{T})$} ;\&
\node {$\bot$} ;\&
\node (2) {$\Alg(T)$}; \\
};
\path[-stealth] (1) edge [bend left=20] node [above] {$\Alg(p)$} (2)
(2) edge [bend left=20] node [below] {$\Alg(\mathbf{1})$} (1);
\end{tikzpicture}
\]
\item The functor $\Alg(p)$ preserves the initial algebras.
\item When $P\in\P_X$ for some $X$ in $\B$, the free $\lift{T}$-algebra  over $P$ sits above the free ${T}$-algebras over $X$.
\item The free monad $\lift{\T}$ over $\lift{T}$ exists and is a lifting of the free monad $\T$ over ${T}$.
\end{enumerate}
\end{lemma}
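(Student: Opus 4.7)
The plan is to handle the four parts in order, each building on the previous. For (1), I would define $\mathbf{1}\colon \B \to \P$ on objects by $\mathbf{1}X := 1_X$, the terminal object of the fibre $\P_X$, which exists because $p$ is bicartesian and strictly preserves terminals. The bijection $\P(P, \mathbf{1}X) \cong \B(pP, X)$ follows from the universal property of fibre-terminals combined with the factorisation of $\P$-morphisms through Cartesian liftings. To lift this adjunction to algebras, I observe that $\lift{T}\mathbf{1}X$ sits in $\P_{TX}$ (since $\lift{T}$ is a lifting of $T$) and $\mathbf{1}TX$ is terminal there, so there is a canonical natural transformation $\theta \colon \lift{T}\mathbf{1} \Ra \mathbf{1}T$. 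For a $T$-algebra $(X,a)$, the composite $\mathbf{1}a \circ \theta_X$ defines a $\lift{T}$-algebra structure on $\mathbf{1}X$ and hence $\Alg(\mathbf{1})$ on objects; the action on morphisms is immediate, and the unit and counit of $\Alg(p) \dashv \Alg(\mathbf{1})$ are inherited fibrewise from $p \dashv \mathbf{1}$.

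Part (2) is then immediate: $\Alg(p)$ is a left adjoint, hence preserves all colimits, in particular initial objects. For (3), I would apply (1) and (2) not to $\lift{T}$ itself but to the endofunctor $P + \lift{T}(-)$ on $\P$. Since $p$ strictly preserves coproducts, this is a lifting of $pP + T(-)$ on $\B$, so the proofs of (1) and (2) go through verbatim for this functor. The hypothesis that $\lift{T}$ has free algebras supplies the initial $(P + \lift{T}(-))$-algebra, which is exactly $\lift{\T}P$ with structure $[\eta_P, \kappa_P]$; applying the (now-established) $\Alg(p)$ for this functor sends it to the initial $(pP + T(-))$-algebra, i.e.\ the free $T$-algebra on $X = pP$. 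This yields $p\lift{\T}P = \T X$ and $p[\eta_P,\kappa_P] = [\eta_X, \kappa_X]$.

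Finally, for (4) the monad structure on $\lift{\T}$ comes from the universal property of free algebras in the standard way: $\lift{\T}$ becomes a functor by freeness, the unit $\eta_P$ is the first coproduct injection, and $\mu_P$ is defined by initiality after equipping $\lift{\T}P$ with a $\lift{\T}P + \lift{T}(-)$-algebra structure $[\id, \kappa_P]$, in complete analogy with the recollection preceding the lemma. Part (3) guarantees that each of these structural maps sits above its $\T$-analogue, so $\lift{\T}$ is a monad lifting $\T$. I expect the main obstacle to be the algebra-lifting step in (1): one must verify that the fibrewise-terminal characterisation of $\mathbf{1}$ really yields a functor $\Alg(\mathbf{1})$ and that the triangle identities survive on algebras; once this is secured, (2)--(4) are routine consequences of adjunctions preserving colimits and of initiality.
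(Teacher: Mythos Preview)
Your proposal is correct and follows essentially the same route as the paper: define $\mathbf{1}$ via fibre-terminals, deduce (2) from left-adjointness, obtain (3) by re-applying (1)--(2) to the lifting $P+\lift{T}(-)$ of $X+T(-)$, and read off (4) from (3). The only difference is that for the algebra-lifting in (1) the paper simply cites \cite[Theorem~2.14]{Hermida97structuralinduction} rather than constructing $\theta\colon\lift{T}\mathbf{1}\Ra\mathbf{1}T$ and $\Alg(\mathbf{1})$ by hand as you do.
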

\begin{proof}
  \begin{enumerate}
\item  Since the fibration considered here is bicartesian, one can define $\mathbf{1}(X)$ as the terminal object in $\P_X$. Then the statement of this item is an immediate consequence of~\cite[Theorem~2.14]{Hermida97structuralinduction}. 

\item  follows because $\Alg(p)$ is a left adjoint. 

\item follows from item 1) applied for the lifting $P+\lift{T}$ of $X+{T}$.
\item is an immediate consequence of item 3).
\end{enumerate}
\end{proof}

\begin{replemma}{lm:gsos}
  Consider a lifting  $\lift{T}$ of a $\B$-endofunctor ${T}$ and assume $\lift{T}$ has free algebras. Let $\lift{\lambda}:\lift{T}(\lift{F}\times\lift{\Id})\Rightarrow \lift{F}\lift{\T}$ be a natural transformation sitting above 
  $\lambda:T(F\times\Id)\Rightarrow F\T$.
Then $ \lift{\lambda}^\dagger \colon\lift{\T}(\lift{F}\times\lift{\Id})\Rightarrow (\lift{F}\times\lift{\Id})\lift{\T} $ sits above $\lambda^\dagger\colon\T(F\times\Id)\Rightarrow (F\times\Id)\T $.
\end{replemma}

\begin{proof}%[Proof of Lemma \ref{lm:gsos}]
We know that ${\T} X$ is the free ${T}$-algebra on $X$. Let 
$$[\eta_X,\kappa_X]:X+{T}{\T} X\to{\T} X$$ denote the initial $X+{T}(-)$-algebra.
Similarly, let 
$$[\lift{\eta}_P,\lift{\kappa}_P]:P+\lift{T}\lift{{\T}} P\to\lift{{\T}} P$$
denote the initial $P+\lift{T}(-)$-algebra. By Lemma~\ref{lem:free-alg-of-lifting} we know that when $P\in\P_X$ we have that $[\lift{\eta}_P,\lift{\kappa}_P]$ is a lifting of $[\eta_X,\kappa_X]$.

For $P\in\P_X$ the map $\ldagbar_P$ is defined similarly to~\eqref{eq:2}, as the unique map such that:
\begin{equation}
  \label{eq:3}
\begin{gathered}
    \xymatrix{
\lift{T}\lift{{\T}}(\lift{F}\times\lift{\Id})P
\ar[r]^-{\lift{T}(\ldagbar_P)}
\ar[d]_-{\lift{\kappa}_{(\lift{F}\times\lift{\Id})P}} 
& 
\lift{T}(\lift{F}\times\lift{\Id})\lift{{\T}} P
\ar[d]_-{\langle \lift{F}\lift{\mu}_P\lift{\lambda}_{\lift{{\T}} P}}^-{\lift{\kappa}_P(\T\pi_2\lift{{\T}})_P\rangle} 
\\
\lift{{\T}} (\lift{F}\times\lift{\Id})P
\ar@{-->}[r]^-{\ldagbar_P} 
& 
(\lift{F}\times\lift{\Id})\lift{{\T}} P
\\
(\lift{F}\times\lift{\Id})P
\ar[u]^-{\lift{\eta}_{(\lift{F}\times\lift{\Id})P}}
\ar[ur]_-{(\lift{F}\times\lift{\Id})\lift{\eta}_P}
}
\end{gathered}
\end{equation}
By Lemma~\ref{lem:free-alg-of-lifting} we have that the $(\lift{F}\times\lift{\Id})P + \lift{T}(-)$-algebras  $\lift{{\T}} (\lift{F}\times\lift{\Id})P$
and $(\lift{F}\times\lift{\Id})\lift{{\T}} P$ of diagram~\eqref{eq:3} sit above the 
 $(F\times\Id)X + {T}(-)$-algebras $\T(F\times\Id) X$,
respectively  $(F\times\Id){\T} X$ of diagram~\eqref{eq:2}. By uniqueness of $\ldag_X$ it follows that $\ldagbar_P$ sits above $\ldag_X$.
\end{proof}

\section{Details on Divergence}\label{app:divergence}
In this appendix, we discuss some details for showing compatibility of $\Con^{\ell}$ that were omitted in the main text for lack of space.

First of all, observe that the GSOS rules defining the parallel operator corresponds to a distributive law $\lambda\colon T(F\times \Id) \Rightarrow F\T$, which is defined for all sets $X$, $x,y\in X$ and $S,T\in \pow(L\times X)$ as
\begin{align*}
(S,x), (T,y) \mapsto &~ \{(l,x'|y) \mid  (l,x') \in S\} \\ 
\cup &~ \{(l,x|y') \mid  (l,y') \in T\} \\
\cup &~ \{ (\tau, x'|y') \mid  \exists a,  (a,x') \in S \land (\bar{a},y') \in T \}\\   
\cup &~ \{ (\tau, x'|y') \mid  \exists a,  (\bar{a},x') \in S \land (a,y') \in T \}\text{.}
\end{align*}
Intuitively, $S$ and $T$ are the sets of transitions of the states $x$
and $y$. The first set $\{(l,x'|y) \mid (l,x') \in S\}$ corresponds to
the first GSOS rule 
\begin{align*}
  \frac{x\tr{l}x'}{x|y\tr{l}x'|y}
\end{align*}
and similarly for the others.

By virtue of Lemma \ref{lm:gsos}, to prove compatibility of $\Con^{\ell}$, we only have to show that for all predicates $P\subseteq X$, the restriction of $\lambda_X$ to $\lift{T}(\lift{F}^{\diamondtau} \times \Id )P$ corestricts to $\lift{F}^{\diamondtau}\lift{\T} P$, that is whenever $(S,x), (T,y)\in \lift{T}( \lift{F}^{\diamondtau} \times \Id )P$, then $\lambda_X ((S,x), (T,y)) \in \lift{F}^{\diamondtau}\lift{\T} P$. 

The latter means, by definition of $\lift{F}^{\diamondtau}$, that there exists a $(\tau, t)\in \lambda_X ((S,x), (T,y))$ such that $t \in \lift{\T} P$.  This can be proved as follows: since $S\in \lift{F}^{\diamondtau}P$, then there exists $(\tau,x')\in S$ such that $x'\in P$. By definition of $\lambda_X$, $(\tau,x'|y) \in \lambda_X ((S,x), (T,y))$. Finally, since $x'\in P$, then $x'|y 
\in \lift{\T} P$.

\section{Details on Nominal Automata}
\label{ssec:nom-aut-ex}

In this section we assume the reader has some familiarity with nominal
sets, see~\cite{Pitts-book}.
\subsection{The base category}

We denote by $\At$ a countable set of names. The category $\Nom$ of
nominal sets has as objects sets $X$ equipped with an action
$\cdot:\Sym(\At)\times X\to X$ of the group of finitely supported
permutations on $\A$ (that is, permutations generated by transpositions
of the form $(a\ b)$) and such that each $x\in X$ has a finite
support. Morphisms in $\Nom$ are \emph{equivariant} functions, i.e., 
functions that preserve the group action.

\subsection{The fibration at issue}

It is well known that $\Nom$ can equivalently be described as a
Grothendieck topos. Since $\Nom$ is a regular category,
by~\cite[Observation~4.4.1]{Jacobs:fib} we know that the subobject
fibration on $\Nom$ is in fact a bifibration. Furthermore, by a
change-of-base situation described below we obtain the bifibration
$\Rel(\Nom)\to\Nom$, see also~\cite[Example~9.2.5(ii)]{Jacobs:fib}
$$
\xymatrix{
  \Rel(\Nom)\ar[r]\ar[d]&\Sub(\Nom)\ar[d] \\
  \Nom\ar[r]^{I\mapsto I\times I} & \Nom }
$$
Objects of $\Rel(\Nom)$ are equivariant relations. That is, if $X$ is
a nominal set, a nominal relation on $X$ is just a subset $R\subseteq
X^2$ such that $x R y$ implies $(\pi\cdot x) R(\pi\cdot y)$ for all
permutations $\pi$. This bifibration is also split and bicartesian.

\subsection{The functors and the distributive law}

We will use the following $\Nom$-endofunctors:
\begin{enumerate}
\item $F:\Nom\to\Nom$ given by $FX=2\times X^\At$, where $2=\{0,1\}$
  is equipped with the trivial action and $X^\At$ is given by the
  internal hom. Concretely, an element $f\in X^\At$ is a function
  $f:\At\to X$ such that there exists a finite subset $S\subseteq \At$
  and $f(\pi(a))=\pi\cdot f(a)$ for all names $a\in\At$ and
  permutations $\pi\in\Sym(\At)$ fixing the elements of $S$.

\item $\Pf:\Nom\to\Nom$ that maps a nominal set $X$ to its
  orbit-finite finitely supported subsets.  In particular one can
  check that $\Pf$ is a monad and let $\mu$ denote its multiplication,
  given by union.
\end{enumerate}

The functors $\Pf$ and $F$ are related by a distributive law
$$\lambda:\Pf F\Ra F\Pf.$$
For a nominal set $X$, the map $\lambda_X$ is given by the product of
the morphisms acting on $S\in\Pf F(X)$ by
$$ S \mapsto  1\in 2 \mathrm{\ iff\ } 1\in(\Pf \tau_1)(S)$$
and
$$  S \mapsto \lambda a.\{x\in X | \exists f\in (\Pf\tau_2)(S).\  f(a)=x\}\in (\Pf X)^A$$
where $\tau_1,\tau_2$ are the projections from $FX$ to $2$,
respectively $X^\At$.

\subsection{The liftings}

The distributive law $\lambda$ can be lifted to $\Rel(\Nom)$,
see~\cite[Exercise 4.4.6]{Jacobs:coalg}.
$$\Rel(\lambda):\Rel(\Pf) \Rel(F)\Ra \Rel(F)\Rel(\Pf).$$
Concretely, for $R\in\Rel(\Nom)_X$, the nominal relation $\Rel(F)(R)$
is given by $(o,f)\ \Rel(F)(R)\ (o',f')$ iff $o=o'$ and for all
$a\in\At$ we have $f(a) Rf'(a)$.

On the other hand $\Rel(\Pf)$ is given by $S\ \Rel(\Pf)(R)\ S'$ iff
for all $x\in S$ exists $y\in S'$ with $xR y$ and for all $y\in S'$
exists $x\in S$ with $xR y$.  As for $\Rel(\lambda)_R$, this is
obtained as the restriction of $\lambda_R\times \lambda_R$ to
$\Rel(\Pf) \Rel(F)(R)$.
% The incredulous reader can check the details :)

\subsection{Soundness of bisimulation up to congruence}

Nondeterministic nominal automata~\cite{BojanczykKL11} can be modelled
as $F\Pf$-coalgebras, while deterministic nominal automata are
represented as $F$-coalgebras. The classical notion of finiteness is
replaced by orbit-finiteness---from a categorical perspective this
makes sense, since orbit-finite nominal sets are exactly the finitely
presentable objects in the lfp category $\Nom$.

The generalised powerset construction~\cite{SilvaBBR10} can be applied
in this situation as well, that is, a nondeterministic nominal
automata modelled as a coalgebra
$$\langle o,t \rangle:X\to 2\times\Pf(X)^A $$ 
yields an $F$-coalgebra structure
$$\langle o^\sharp,t^\sharp \rangle:\Pf X\to 2\times (\Pf X)^A,$$
on $\Pf X$, given by the composite $F(\mu) \circ \lambda \circ \Pf(\langle o,t
\rangle)$.  The reason why determinisation fails in a nominal
setting~\cite{BojanczykKL11} is that the finitary power object functor
$\Pf$ does not preserve orbit finiteness. This is the case in the
example of Section~\ref{ssec:nominal-automata}.

Notice that $(\Pf X, \mu, \langle o^\sharp,t^\sharp \rangle)$ is a
$\lambda$-bialgebra.

The fibrations $\Rel(\Nom)\to\Nom$ and $\Sub(\Nom)\to\Nom$ are
well-founded in the sense of~\cite{HasuoCKJ:coindFib}.
To prove this we can apply~\cite[Lemma~3.4]{HasuoCKJ:coindFib}, which
gives as a sufficient condition for well-foundedness: that the fibre above each finitely presentable object be finite. Indeed, recall from~\cite{Petrisan:phd} that finitely presentable nominal sets are the orbit-finite ones. Then, it is easy to check that a nominal set with $n$ orbits has $2^n$ \emph{equivariant} nominal subsets.

Hence, by~[Theorem~3.7]\cite{HasuoCKJ:coindFib}, the final
$\cpred{F}{\langle o,t \rangle}$-coalgebra exists and can be computed
as the limit of an $\omega^\op$-chain in the fibre $\Rel(\Nom)_X$. 
We will use this
coinductive predicate to prove that two states of a nominal automata
accept the same language.

We can apply Theorem~\ref{thm:big-2-cell} to prove that the contextual
closure $\Con=\coprod_\mu\circ\Rel(\Pf)$ is $\cpred{F}{\langle o^{\sharp},t^{\sharp}
  \rangle}$-compatible.

Thus bisimulation up to context is a valid proof technique for nominal
automata.

Moreover, we can apply Proposition~\ref{prop:modular-compatibility-with-bisim}
to prove compatibility of the up to reflexive, symmetric and
transitive closure techniques, respectively.
\begin{enumerate}
\item[$(n{=}0)$] Let $\Ref\colon \Nom\to\Rel(\Nom)$ be the functor
  mapping each nominal set $X$ to $\Delta_X$, the identity relation on
  $X$. Then $\Ref_X$ is $\pred{\Rel(F)}{\langle o,t
    \rangle}$-compatible since $\Delta_{FX}= \Rel(F)\Delta_X$.
\item[$(n{=}1)$] Let $\Sym \colon \Rel(\Nom)\to\Rel(\Nom)$ be the
  functor mapping each nominal relation $R\subseteq X^2$ to its
  converse $R^{-1}\subseteq X^2$. $\Sym_X$ is $\pred{\lift{F}}{\langle
    o,t \rangle}$-compatible since $\lift F(R)^{-1} \subseteq \lift
  F(R^{-1})$ for all relations $R\subseteq X^2$.
\item[$(n{=}2)$] Let $\otimes\colon
  \Rel(\Nom)\times_\Nom\Rel(\Nom)\to\Rel(\Nom)$ be the nominal
  relational composition functor. Composition of nominal relations is
  computed just as in $\Set$ and one can show that $\Rel(F)$ preserves
  it. Thus $\otimes$ is $\pred{\Rel(F)}{\langle o,t
    \rangle}$-compatible.
\end{enumerate}

Employing Proposition~\ref{prop:modularity} and the fact that
congruence closure is obtained as the composition of the equivalence,
context and reflexive closure functors we derive that bisimulation up
to congruence is a sound technique.

\subsection{The concrete example}

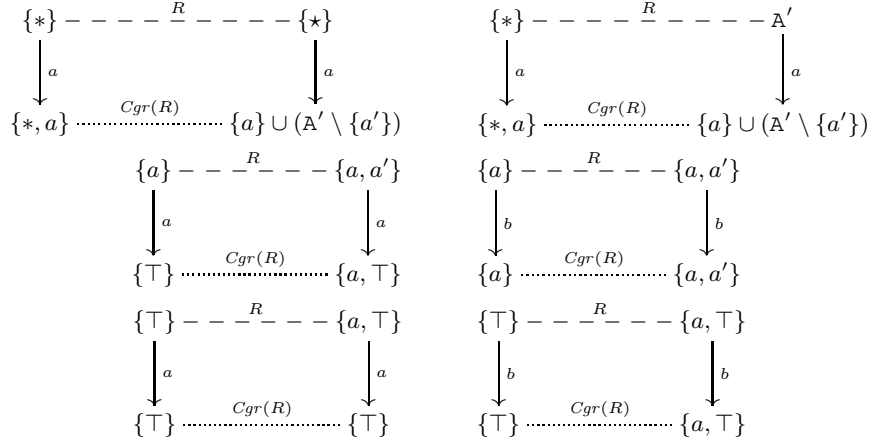
\begin{figure*}[t]
  \centering $\begin{array}{rcl} 
  \xymatrix{
      \{*\}\ar[d]^{a}\ar@{--}[rr]^-{R} & & \{\star\}\ar[d]^{a} \\
      \{*,a\}\ar@{.}[rr]^-{\Cgr(R)} & &\{a\}\cup (\At'\setminus\{a'\})
    } & \ & 
         \xymatrix{
      \{*\}\ar[d]^{a}\ar@{--}[rr]^-{R} & & \At'\ar[d]^{a} \\
      \{*,a\}\ar@{.}[rr]^-{\Cgr(R)} & &\{a\}\cup (\At'\setminus\{a'\})
    } 
    \\
    \xymatrix{
      \{a\}\ar[d]^{a}\ar@{--}[rr]^-{R} & & \{a,a'\}\ar[d]^{a} \\
      \{\top\}\ar@{.}[rr]^-{\Cgr(R)} & &\{a,\top\} }   & & 
       \xymatrix{
      \{a\}\ar[d]^{b}\ar@{--}[rr]^-{R} & & \{a,a'\}\ar[d]^{b} \\
      \{a\}\ar@{.}[rr]^-{\Cgr(R)} & &\{a,a'\} }  
\\    
   \xymatrix{
      \{\top\}\ar[d]^{a}\ar@{--}[rr]^-{R} & & \{a,\top\}\ar[d]^{a} \\
      \{\top\}\ar@{.}[rr]^-{\Cgr(R)} & &\{\top\} }
  &&
     \xymatrix{
      \{\top\}\ar[d]^{b}\ar@{--}[rr]^-{R} & & \{a,\top\}\ar[d]^{b} \\
      \{\top\}\ar@{.}[rr]^-{\Cgr(R)} & &\{a,\top\} }
 \end{array}
  $
  \caption{Proving $R$ to be a bisimulation up to congruence}
  \label{fig:nom-aut-ex}
\end{figure*}

The nondeterministic nominal automaton of
Section~\ref{ssec:nominal-automata} (reported on the left below) is given formally by an
$F\Pf$-coalgebra $\langle o,t \rangle$ on the nominal set
$1+1+\At+\At+1$. For simplicity we denote the second copy of
$\At$ by $\At'$. The map $\langle o,t \rangle$ is given below on the right.
$$
\begin{array}{cc}
 \nfa{\xymatrix @C=.8em @R=1em { %
      \state{{*}}\ar@(ul,ur)^-{a}  \ar[rr]^-{a}&  & %
      \state{a}  \ar@(ul,ur)^-{b}\ar[rr]^-{a} & & \state{\bar{\top}}\ar@(ul,ur)^{a}\\
      \state{\star}\ar[rru]^-{a}\ar[rr]^-{b} & & \state{a'}\ar@(rd,ru)_-{b}\ar[u]^-{a} & &
    }}
&
\begin{aligned}
  * & \mapsto \left(0,a\mapsto \{*,a\}\right)  \\
  a & \mapsto \left(0,
    \begin{cases}
      b\mapsto \{a\} & b\# a\\
      a\mapsto \{\top\} 
    \end{cases}
  \right)\\
  \star & \mapsto \left (0, a\mapsto\{a\}\cup\At'\setminus\{a'\}\right)\\
  a' & \mapsto \left (0,
    \begin{cases}
      b\mapsto \{a'\} & b\# a\\
      a\mapsto \{a\}
    \end{cases}
  \right)\\
  \top & \mapsto \left (1, a\mapsto\{\top\}\right)
\end{aligned}
\end{array}
$$

The determinisation of this automaton has infinitely many orbits. For
example, the determinisation of the part reachable from $*$ is
partially represented by
\begin{equation*}
  \nfa{\xymatrix @C=-2pt @R=1em { %
      \state{\{*\}}\ar[rrrrrr]^-{a}&&&&  && 
      \state{\{*,a\}}  \ar[d]^-{b}\ar[rrrrrr]^-{a} &&&& && \state{\bar{\{*,a,\top\}}}\ar[d]^-{b} & {}\ar@(ru,rd)^{a}\\
      &&&& && \state{\{*,a,b\}}\ar[rrrrrr]^-{a,b}\ar[d]^-{c} &&&&& & \state{\bar{\{*,a,b,\top\}}}\ar[d]^-{c} & {}\ar@(ru,rd)^{a,b}\\
      &&&& && \vdots &&&& && \vdots & \\
    }}
\end{equation*}

However, we can prove that $*$ and $\star$ accept the same language, showing that the nominal relation $R$ spanned by
$$(\{*\},\{\star\}),~ (\{a\},\{a,a'\}),~ (\{\top\},\{ a,\top\}),~ (\{*\},\A') $$
is a bisimulation up to congruence, that is, $R\subseteq \cpred{F}{\langle o^{\sharp},t^{\sharp}
  \rangle} \Cgr(R)$.

This is shown in Figure~\ref{fig:nom-aut-ex}: for each pair in $R$, we check that the successors are in $\Cgr(R)$. 
Note that for the pairs $(\{a\},\{a,a'\})$ and $(\{\top\},\{ a,\top\})$, in the second and third rows, one needs to check the successors for $a$ and for a fresh name $b$. Instead for the pairs $(\{*\},\{\star\})$ and $(\{*\},\A')$ in the first row, only successors for $a$ should be checked (since $a$ does not belong to the support of these states).

The only non-trivial computation is to check whether $\{*,a\} \Cgr(R) \{a\}\cup(\At'\setminus\{a'\})$. We proceed as follows: 
\begin{equation*}
  \begin{array}{lcl}
    \{*,a\} & \Cgr(R) & \{a\}\cup\At' \\
    & \Cgr(R) &\{a,a'\}\cup(\At'\setminus\{a'\})\\
    & \Cgr(R) & \{a\}\cup(\At'\setminus\{a'\}).\\
  \end{array}
\end{equation*}

\section{Proofs for Section \ref{sec:compositional}}\label{app:proof6}

\begin{reptheorem}{thm:comp-lambda}
 \thmcomplambda
\end{reptheorem}

\begin{proof}
  Since $\lift{F_1}$ and $\lift{F_2}$ are liftings of $F:\B\to\B$ it follows that $\langle \lift{F_1}, \lift{F_2}\rangle:\P\to\P\times_\B\P$ is a lifting of $F$. Moreover $\langle \lambda_1,\lambda_2\rangle:\lift{T}^2\langle\lift{F_1},\lift{F_2}\rangle\Ra\langle\lift{F_1},\lift{F_2}\rangle\lift{T}$ is a lifting of $\lambda$.

Using that $\otimes:\P\times_\B\P\to\P$ lifts the identity we get that $F_1\otimes F_2=\otimes\circ\langle F_1,F_2\rangle$ is also a lifting of $F$. 

\begin{equation}
\label{eq:comp-lam}
\xymatrix@R=1.5em{
\lift{T}\otimes\langle F_1,F_2\rangle\ar@{=>}[r]^-{\gamma\langle F_1,F_2\rangle} &
\otimes\lift{T}^2\langle F_1,F_2\rangle\ar@{=>}[r]^-{\otimes\langle \lambda_1,\lambda_2\rangle} &
\otimes\langle F_1,F_2\rangle\lift{T}\\
TF\ar@{=>}[r]^-{\id} &
TF\ar@{=>}[r]^-{\lambda} &
FT\\
}
\end{equation}
The required $\lift{\lambda}$ is obtained as the composite
$\otimes\langle \lambda_1,\lambda_2\rangle\circ \gamma\langle F_1,F_2\rangle $ sitting above $\lambda$ as in~\eqref{eq:comp-lam}.
\end{proof}

\subsection{Proofs for Similarity}\label{app:similarity}

\begin{repproposition}{prop:sim}
 \propsim
\end{repproposition}
\begin{proof}
 Recall that $\Con$ is defined as $\textstyle{\coprod}_\alpha\circ \Rel(\T)$ and that, for the canonical lifting, it holds that   $\Rel(\T)\otimes\subseteq \otimes\Rel(\T)^2$. We decompose the lifting $\laxlift{F}\times \Id$ as 
$$(\lift{\sqsubseteq} \times \lift{\Id}) \otimes ( \Rel(F)\times \Id ) \otimes (\lift{\sqsubseteq} \times \lift{\Id})$$
where $\lift{\Id}$ is the constant functor mapping $R\subseteq X^2$ to
$\Delta_X$. By Theorem~\ref{thm:comp-lambda} we reduce the proof of the fact that $\Rel(T)$ distributes over  $\laxlift{F}\times \Id$ to the fact that $\Rel(T)$ distributes over  $\lift{\sqsubseteq} \times \lift{\Id}$ and $\Rel(F)\times \Id$ separately.

%  We use Theorem~\ref{thm:comp-lambda}, to reduce
% compatibility w.r.t.\ $(\laxlift{F}\times \Id)$ to compatibility
% w.r.t.\ $\lift{\sqsubseteq} \times \lift{\Id}$ and $\Rel(F)\times
% \Id$ separately.

For the latter, observe that $\Rel(F)\times \Id = \Rel(F\times \Id)$. Since $\Rel(-)$ is a 2-functor~\cite[Exercise 4.4.6]{Jacobs:coalg}, we take $\lift{\lambda^{\dagger}_1} \colon \Rel(\T)\Rel(F\times \Id) \Ra \Rel(F\times \Id)\Rel(T)$ as $\Rel(\lambda)$.

For the former we need to use Lemma~\ref{lm:gsos} and exhibit a $\lift{\lambda} \colon \Rel(T)(\lift{\sqsubseteq} \times \lift{\Id}) \Ra \lift{\sqsubseteq} \Rel(\T)$ sitting above $\lambda$. This amounts to show that, for all relations $R\subseteq X^2$, the restriction of $\lambda_X \times \lambda_X$ to $\Rel(T)(\lift{\sqsubseteq} \times \lift{\Id})R$ corestricts to $\lift{\sqsubseteq} \Rel(\T)R$. Note that since $\lift{\sqsubseteq}$ and  $\lift{\Id}$ are constant, this is exactly the condition for monotone abstract GSOS. This guarantees the existence of $\lift{\lambda_2^{\dagger}} \colon \Rel(\T)(\lift{\sqsubseteq} \times \lift{\Id}) \Ra (\lift{\sqsubseteq} \times \lift{\Id}) \Rel(\T)$ sitting above $\lambda^{\dagger}$.

The existence of $\lift{\lambda_1^{\dagger}}$ and $\lift{\lambda_2^{\dagger}}$ ensures, via Theorems~\ref{thm:comp-lambda} and~\ref{thm:big-2-cell}, that $\Con$ is $\pred{(\laxlift{F}\times \Id)}{\langle \xi, id\rangle}$-compatible.
\end{proof}

\subsection{Proofs for Weak Bisimilarity}\label{app:weak}

\begin{lemma}
  $(\lift{F \times F}, F)$ is a fibration map.
\end{lemma}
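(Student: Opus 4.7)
The plan is to prove the fibration map equation $((F \times F)f)^* \circ \lift{F \times F} = \lift{F \times F} \circ f^*$ for any $f \colon X \to Y$ in $\Set$ and $R \subseteq Y^2$, by first extracting a simple pointwise description of $\lift{F \times F}(R)$ and then verifying the reindexing commutation by direct unfolding.

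First I would unpack $\lift{F \times F}(R) = \rho(R) \otimes \wlift{F \times F}(R)$. Since $\rho(R) = \{((U,V),(V,U)) \mid U,V \in FX\}$ is the swap relation, composing with it on the left swaps the first pair, so $((U_1,V_1),(U_2,V_2)) \in \lift{F \times F}(R)$ iff $((V_1,U_1),(U_2,V_2)) \in \wlift{F \times F}(R)$. Expanding $\wlift{F \times F}(R)$ as $\lift{\sqsubseteq} \otimes \Rel(F \times F)(R) \otimes \lift{\sqsubseteq}$ with the ordering $[\supseteq \subseteq]$, this in turn unfolds to the existence of witnesses $A, B, C, D \in FX$ satisfying $A \subseteq V_1$, $U_1 \subseteq B$, $U_2 \subseteq C$, $D \subseteq V_2$, together with $A \Rel(F)(R) C$ and $B \Rel(F)(R) D$.

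The key observation is that this existential splits into two independent sub-existentials---one on $(A,C)$ and one on $(B,D)$---each admitting canonical witnesses. For $(A,C)$, the existence of $A \subseteq V_1$ and $C \supseteq U_2$ with $A \Rel(F)(R) C$ is equivalent to the condition that every transition in $U_2$ has an $R$-predecessor in $V_1$: necessity is immediate from the definition of $\Rel(F)(R)$, and sufficiency is witnessed by $C = U_2$ together with $A = \{(l,v) \in V_1 \mid \exists (l,u) \in U_2.\, v\,R\,u\}$. A symmetric argument for $(B,D)$ yields the dual condition. Hence $\lift{F \times F}(R)$ has the pointwise characterization
\begin{equation*}
  \forall (l,u) \in U_2.\, \exists (l,v) \in V_1.\, v\,R\,u \quad\text{and}\quad \forall (l,u) \in U_1.\, \exists (l,v) \in V_2.\, u\,R\,v,
\end{equation*}
which matches exactly the ``weak simulation'' reading from Section~\ref{ssec:weak}.

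With this characterization in hand, the fibration map equation becomes a routine bookkeeping check. The left-hand side, evaluated at $((U_1,V_1),(U_2,V_2))$, applies the characterization to $(f \times f)^{-1}(R)$ and yields $\forall (l,u) \in U_2.\, \exists (l,v) \in V_1.\, f(v)\,R\,f(u)$ (and the symmetric statement). The right-hand side applies the characterization to $((Ff(U_1),Ff(V_1)),(Ff(U_2),Ff(V_2)))$ and $R$; using that $(l,y) \in Ff(U)$ is equivalent to $\exists (l,u) \in U.\, f(u) = y$, this produces the literally identical condition. The main obstacle would have been a direct attack from the existential form: given witnesses $A',B',C',D'$ on $Y$ for the right-hand side, one cannot in general lift them back to $X$-witnesses because $C' \supseteq Ff(U_2)$ may contain elements outside $Ff(X)$. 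Distilling the pointwise characterization first is precisely what sidesteps this apparent obstruction.
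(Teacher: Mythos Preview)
Your proof is correct and follows essentially the same route as the paper's: both establish the concrete pointwise characterisation of $\lift{F\times F}(R)$ and then verify the reindexing equation $((F\times F)f)^*\circ\lift{F\times F}=\lift{F\times F}\circ f^*$ by direct unfolding. Your proposal is in fact more thorough, since you explicitly derive the pointwise description from the definition $\rho\otimes\wlift{F\times F}$ (via the canonical-witness argument), whereas the paper's proof simply asserts that description in the first line of its chain of equalities.
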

\begin{proof}
  Let $f: X \rightarrow Y$ be a function and $R\subseteq X^2$ be a relation. Then
  \begin{align*}
%    & \rho \otimes [\supseteq \subseteq] \otimes \Rel(F \times F)((f \times f)^{-1}(R)) \otimes [\supseteq \subseteq] \\
    & \lift{F \times F} ((f \times f^{-1}(R)) \\
   &= \{(S,U,V,W) \mid \\ 
   & \qquad \begin{array}{ll}
      \forall (a,x) \in S.~\exists (a,y) \in W.~f(x) R f(y),\\
      \forall (a,y) \in V.~\exists (a,x) \in U.~f(x) R f(y) \}
    \end{array} \\
   & = \{(S,U,V,W) \mid \\
      & \qquad \begin{array}{ll}
      \forall (a,x') \in Ff[S].~\exists (a,y') \in Ff[W].~x' R y',\\
      \forall (a,y') \in Ff[V].~\exists (a,x') \in Ff[U].~x' R y' \}
    \end{array} \\
%   & = (Ff \times Ff \times Ff \times Ff)^{-1}(\rho \otimes [\supseteq \subseteq] \otimes \Rel(F \times F)(R) \otimes [\supseteq \subseteq]) 
   & = (Ff \times Ff \times Ff \times Ff)^{-1}(\lift{F \times F}(R))
  \end{align*}
\end{proof}

\begin{repproposition}{prop:weak}
\propweak
\end{repproposition}
\begin{proof}
From $\lambda\colon T(F \times \Id) \Rightarrow F\T$, we define $\tilde{\lambda}\colon T(F \times F \times \Id) \Rightarrow (F\times F)\T$
as $\langle \lambda\circ T(\tau_1 \times \tau_3)  , \lambda\circ T(\tau_2 \times \tau_3) \rangle $ where $\tau_i$ are the projections from $F\times F\times \Id$ to $F$ and $\Id$. Such $\tilde{\lambda}$ induces a distributive law $$\tilde{\lambda}^{\dagger}\colon \T(F \times F \times \Id) \Rightarrow (F\times F \times \Id)\T\text{.}$$
From the $\lambda^{\dagger}$-bialgebras $(X, \alpha,\langle \xi_1, id \rangle)$ and $(X,\alpha,\langle \xi_2, id \rangle)$, we construct 
$(X, \alpha, \langle \xi_1,\xi_2,id \rangle)$ which is a $\tilde{\lambda}^{\dagger}$-bialgebra.

Recall that $\Con$ is defined as $\textstyle{\coprod}_\alpha\circ \Rel(\T)$ and that, for the canonical lifting, it holds that   $\Rel(\T)\otimes\subseteq \otimes\Rel(\T)^2$. We decompose the lifting $\lift{F\times F}\times \Id$ as 
$$(\rho \times \lift{\Id}) \otimes ( \wlift{F \times F} \times \Id )$$
where $\lift{\Id}$ is the constant functor mapping $R\subseteq X^2$ to $\Delta_X$. 
% We use Theorem~\ref{thm:comp-lambda}, to reduce compatibility of $(F \times F \times \Id)$ to compatibility of $\rho \times \lift{\Id}$ and $\wlift{F \times F}\times \Id$.
By Theorem~\ref{thm:comp-lambda} we reduce the proof of the fact that $\Rel(T)$ distributes over  $\lift{F\times F}\times \Id$ to the fact that $\Rel(T)$ distributes over $\rho \times \lift{\Id}$ and $\wlift{F \times F}\times \Id$ separately.

For the former, by Lemma \ref{lm:gsos}, we have to prove that for all relations $R\subseteq X^2$, the restriction of $\tilde{\lambda}_X \times \tilde{\lambda}_X$ to $\Rel(T)(\rho \times \lift{\Id})R$ corestricts to $\rho \Rel(\T)R$. This can be easily checked by using the fact that both $\rho$ and $\lift{\Id}$ are constant and exploiting the definition of $\tilde{\lambda}$. As a consequence there exists a 
$\lift{\tilde{\lambda_1}}^{\dagger}\colon \Rel(\T)( \rho \times \lift{\Id} ) \Rightarrow (\rho \times \lift{\Id})\Rel(\T)$
sitting above $\tilde{\lambda}^{\dagger}$.

For  $\wlift{F \times F} \times \Id $ we can reuse Proposition \ref{prop:sim}, but first we have to prove that the GSOS specification $\tilde{\lambda}$ is monotone w.r.t. $[\supseteq \subseteq]$. Via simple computations, one can check that this is indeed the case when the original GSOS specification $\lambda$ is positive. As a consequence there exists a 
$\lift{\tilde{\lambda_2}}^{\dagger}\colon \Rel(\T)( \wlift{F \times F} \times \Id ) \Rightarrow (\wlift{F \times F} \times \Id)\Rel(\T)$
sitting above $\tilde{\lambda}^{\dagger}$.

The existence of $\lift{\tilde{\lambda_1}}^{\dagger}$ and $\lift{\tilde{\lambda_2}}^{\dagger}$ entails, via Theorems \ref{thm:comp-lambda} and \ref{thm:big-2-cell} compatibility of $\Con$ for $\pred{(\lift{F\times F}\times \Id)}{\langle \xi_1,\xi_2, id \rangle}$.
\end{proof}

%%% Local Variables: 
%%% mode: latex
%%% IspellDict: british
%%% TeX-master: "arxiv.tex"
%%% End: 

% LocalWords:  dually bifibration Lenisa et al Bartels equivariant modularity
% LocalWords:  determinisation monotonicity isomorphism bicartesian adjoint FX
% LocalWords:  posets adjoints coproducts coproduct tuples fibrewise

\newpage
\renewcommand\refname{Additional references for the appendix}

\end{document}

%%% Local Variables: 
%%% mode: latex
%%% IspellDict: british
%%% TeX-master: t
%%% End: 